\newtheorem{theorem}{Theorem}
\newtheorem{lemma}[theorem]{Lemma}
\newtheorem{corollary}[theorem]{Corollary}
\newtheorem*{acknowledgments}{Acknowledgments}
\numberwithin{equation}{section}
\numberwithin{theorem}{section}
\begin{document}

\title{Hardy-Lieb-Thirring Inequalities for\\Fractional Pauli Operators}
\author{Gonzalo A. Bley, S\o ren Fournais}
\affil{Department of Mathematics, Aarhus University, \authorcr Ny Munkegade 118,  8000 Aarhus C, Denmark}
\maketitle

\begin{abstract}
We provide lower bounds for the sum of the negative eigenvalues of the operator $|\sigma\cdot p_A|^{2s} - C_s/|x|^{2s} + V$ in three dimensions, where $s\in (0, 1]$, covering the interesting physical cases $s = 1$ and $s = 1/2$. Here $\sigma$ is the vector of Pauli matrices, $p_A = p - A$, with $p = -i\nabla$ the three-dimensional momentum operator and $A$ a given magnetic vector potential, and $C_s$ is the critical Hardy constant, that is, the optimal constant in the Hardy inequality $|p|^{2s} \geq C_s/|x|^{2s}$. If spin is neglected, results of this type are known in the literature as Hardy-Lieb-Thirring inequalities, which bound the sum of negative eigenvalues from below by $-M_s\int V_{-}^{1 + 3/(2s)}$, for a positive constant $M_s$. The inclusion of magnetic fields in this case follows from the non-magnetic case by diamagnetism. The addition of spin, however, offers extra challenges that make the result more elusive. It is the purpose of this article to resolve this problem by providing simple bounds for the sum of the negative eigenvalues of the operator in question. In particular, for $1/2 \leq s \leq 1$ we are able to express the bound purely in terms of the magnetic field energy $\|B\|_2^2$ and integrals of powers of the negative part of $V$.
\end{abstract}

\begin{section}{Introduction}
\begin{subsection}{Brief Review of Hardy-Lieb-Thirring Inequalities}
Interest in estimating the sum of the negative eigenvalues of Schr\"{o}dinger operators $p^2 + V$ acting on $L^2(\mathbb{R}^3)$, where $p = -i\nabla$ is the momentum operator and $V$ is a given real-valued potential, has seen an enormous growth since the seminal paper by Lieb and Thirring on the stability of fermionic matter \cite{LT}, where in particular the following inequality was proven
\begin{gather}
\text{Tr}\, (p^2 + V)_{-} \leq C\int_{\mathbb{R}^3}V_{-}^{5/2}\,dx,
\label{equation.lieb.thirring}
\end{gather}
for a certain constant $C$. Here $e_{-} = \max(0, -e)$ is the negative part of a given real number $e$. This is now known as the Lieb-Thirring inequality. The estimate is appealing for several reasons; one is that the right-hand side is simple and easy to compute. Another one, perhaps more important, is that this right side has, in some sense, the correct form: indeed, the most obvious semiclassical estimate for the left side is the integral of the negative part of $p^2 + V$ over all of phase-space, divided by the size of its elementary cell, equal to $h^3 = (2\pi)^3$ in units where $\hbar = 1$,
\begin{gather}
(2\pi)^{-3}\int_{\mathbb{R}^3}\int_{\mathbb{R}^3}\left[p^2 + V(x)\right]_{-}\,dp\,dx = (15\pi^2)^{-1}\int_{\mathbb{R}^3}V_{-}^{5/2}\,dx.
\label{equation.semiclassical}
\end{gather}
Whether $C$ in \eqref{equation.lieb.thirring} can be taken equal to the semiclassical value is actually still an open question. Lieb and Thirring obtained the value $C = 4/(15\pi)$, which follows from an inequality of Schwinger on the number of eigenvalues above a certain a level \cite{S1}.

By using what amounts to the same ideas as in \cite{LT} one can obtain a bound of the form
\begin{gather}
\text{Tr}\,(p^2 + V)_{-}^{\gamma} \leq C\int_{\mathbb{R}^d}V_{-}^{\gamma + d/2}\,dx,
\label{equation.lieb.thirring.non.relativistic}
\end{gather}
where $d$ is now spatial dimension, and $\gamma \geq 1/2$ if $d = 1$, $\gamma > 0$ if $d = 2$, and $\gamma \geq 0$ if $d = 3$. (See \cite{LT2}, a sequel to \cite{LT}.) For $\gamma \geq 3/2$ it is actually known that the optimal constant $C$ is the semiclassical value, meaning here the constant $L_{\gamma, d}^c$ that appears in the computation of the integral
\begin{gather}
(2\pi)^{-d}\int_{\mathbb{R}^d}\int_{\mathbb{R}^d}\left[p^2 + V(x)\right]_{-}^{\gamma}\,dp\,dx = L_{\gamma, d}^c\int_{\mathbb{R}^d}V_{-}(x)^{\gamma + d/2}\,dx,
\end{gather}
equal to
\begin{gather}
2^{-d}\pi^{-d/2}\frac{\Gamma(\gamma + 1)}{\Gamma(\gamma + 1 + d/2)}.
\end{gather}
This was proven for dimension 1 by Lieb and Thirring in \cite{LT2} and for higher dimensions by Laptev and Weidl \cite{LW}; see also the article \cite{BL} by Benguria and Loss  for a different proof of the result of Laptev and Weidl. Another sharp value of the constant $C$ that is known is the case $\gamma = 1/2$ in dimension 1 \cite{HLT}; here the optimal value is $1/2$. For more information regarding inequalities of the type \eqref{equation.lieb.thirring.non.relativistic} we refer the reader to a short encyclopedia article by Lieb \cite{L}. See also, for example, \cite{HLW} and \cite{DLL} for improved constants in certain regimes of $\gamma$ ($1/2 < \gamma < 3/2$ and $1 \leq \gamma < 3/2$, respectively). We remark that finding the optimal constant $C$ in the regime $\gamma < 3/2$ for a general dimension $d \geq 1$ remains an open problem (except in the aforementioned case $d = 1$, $\gamma = 1/2$).

The question of the generalization of the bound \eqref{equation.lieb.thirring.non.relativistic} to fractional laplacians $|p|^{2s}$, $0 < s \leq 1$, was resolved by Daubechies \cite{D}; in particular she obtained the estimate
\begin{gather}
\text{Tr}\,(|p|^{2s} + V)_{-}^{\gamma} \leq L_{s, d, \gamma}\int_{\mathbb{R}^d}V_{-}^{\gamma + d/(2s)}\,dx,
\label{estimate.daubechies}
\end{gather}
for a certain constant $L$. Values of $\gamma$ and $s$ for which \eqref{estimate.daubechies} is valid are as follows: if $d = 1$, $\gamma \geq 1 - 1/(2s)$ for $s > 1/2$, $\gamma > 0$ if $s = 1/2$, and $\gamma \geq 0$ if $s < 1/2$; if $d = 2$, $\gamma \geq 0$ if $s < 1/2$, and $\gamma > 0$ if $s \geq 1/2$; if $d \geq 3$, $\gamma \geq 0$, irrespective of $s$. (For the first condition mentioned, see \cite{F2}.) Her proof is based on a path-integration technique laid out by Lieb in \cite{L1, L2}, through which he derived a bound on the number of eigenvalues of the Schr\"{o}dinger operator $p^2 + V$, acting on $L^2(\mathbb{R}^{d})$, less than or equal to $\alpha \leq 0$, $N_{\alpha}(V)$, namely
\begin{gather}
N_{\alpha}(V) \leq L_d\int_{\mathbb{R}^d}(V(x) - \alpha)_{-}^{d/2}\,dx
\label{equation.CLR}
\end{gather}
for some constant $L_d$. An inequality of the form \eqref{equation.CLR} was also independently discovered by Cwikel \cite{C} and Rosenblum \cite{R}; for this reason \eqref{equation.CLR} is known as the CLR bound.

Next come the Hardy-Lieb-Thirring inequalities. As noticed by Ekholm and Frank \cite{EF}, the Lieb-Thirring inequality can in some situations be a poor estimate on the sum of moments of eigenvalues of a Schr\"{o}dinger operator; they provided as an example the potential $-1/(4|x|^2)$ in 3D combined with the Laplace operator $-\Delta$: from the Hardy inequality $p^2 \geq 1/(4|x|^2)$, it follows that $p^2 + V$ has no bound states, and yet the right side of \eqref{equation.lieb.thirring.non.relativistic} is infinite for all $\gamma \geq 0$. In \cite{EF} the authors provide a remedy to this problem by giving a bound on the sum of moments of the negative eigenvalues of the operator $p^2 - (d - 2)^2/(4|x|^2) + V$ in $d \geq 3$ dimensions of the exact same form as the one without the Hardy term $-(d - 2)^2/(4|x|^2)$, namely
\begin{gather}
\text{Tr}\,\left(-\Delta - \frac{(d - 2)^2}{4|x|^2} + V(x)\right)_{-}^{\gamma} \leq C_{\gamma, d}\int_{\mathbb{R}^d}V(x)_{-}^{\gamma + d/2}\,dx
\label{inequality.ekholm.frank}
\end{gather}
for $\gamma > 0$. We remark here that $(d - 2)^2/4$ is the critical weight $C$ in the non-relativistic Hardy inequality $p^2 \geq C/|x|^2$, in the sense that $p^2 \geq (d - 2)^2/(4|x|^2)$, and if $C > (d - 2)^2/4$, then $p^2 - C/|x|^2$ is unbounded from below. It should be remarked as well that an inequality like \eqref{inequality.ekholm.frank} can only be true for $\gamma > 0$; one reason is that the operator $p^2 - (d - 2)^2/(4|x|^2) + V$ always has a bound state if $V \leq 0$ and $V \neq 0$ \cite{Wei}, while the integral on the right side of \eqref{inequality.ekholm.frank} can be made arbitrarily small. This is why the paper by Ekholm and Frank refers to a so-called ``virtual level.'' (This is in contrast to the corresponding Hamiltonian without the Hardy term, because of, for instance, the CLR bound \eqref{equation.CLR} with $\alpha = 0$ and a $V$ such that the integral on the right side of \eqref{equation.CLR} is small enough.) Inequality \eqref{inequality.ekholm.frank} was then generalized by Frank, Lieb and Seiringer in \cite{FLS} by allowing an arbitrary magnetic field and also arbitrary powers of $|p|$ subject to the restriction $0 < s < \min(1, d/2)$, where now $d \geq 1$. One of their main results is
\begin{gather}
\text{Tr}\,\left(|p - A|^{2s} - \frac{\mathcal{C}_{s, d}}{|x|^{2s}} + V\right)_{-}^{\gamma} \leq L_{\gamma, d, s}\int_{\mathbb{R}^d}V(x)_{-}^{\gamma + d/(2s)}\,dx
\label{equation.HLT.inequality}
\end{gather}
for $\gamma > 0$ and a certain constant $L$ (independent of the magnetic potential $A$ and $V$). Here
\begin{gather}
\mathcal{C}_{s, d} \equiv 2^{2s}\frac{\Gamma\left[\left(d + 2s\right)/4\right]^2}{\Gamma\left[\left(d - 2s\right)/4\right]^2}
\label{equation.constant.hardy.inequality}
\end{gather}
is the critical constant in the fractional Hardy inequality $|p|^{2s} \geq \mathcal{C}_{s, d}/|x|^{2s}$ \cite{H}. Their generalization is non-trivial on two accounts: first, $(-\Delta)^{2s}$ for $0 < s < 1$ is non-local, as opposed to $-\Delta$; and second, the magnetic field may not be added easily into the inequality by using the techniques in \cite{EF}, which is to be contrasted to the situation without the term $1/|x|^{2s}$, where the inclusion of $A$ follows directly from diamagnetism at the level of the semigroup $e^{-tH}$, by first obtaining a bound on the number of bound states below a certain level, using path integration methods, as in \cite{L1, L2}. Another important article in this direction is one by Frank \cite{F}, where he provides a substantially simpler proof of \eqref{equation.HLT.inequality}, as well as extending the range of the power $s$ from $0 < s < \min(1, d/2)$ to $0 < s < d/2$, and providing an easy proof that every Lieb-Thirring inequality implies a magnetic version of it, possibly with a worse constant. We refer the reader to \cite{F2} for a review of fractional Lieb-Thirring and Hardy-Lieb-Thirring inequalities.
\end{subsection}
\begin{subsection}{The Inclusion of Spin and the Pauli Operator}
In the previous paragraph we have  provided a brief overview of Lieb-Thirring and Hardy-Lieb-Thirring inequalities, starting from the case of the 3D Laplacian and eventually getting to the fractional Laplacian with a magnetic field and the Hardy term added. None of these works has considered the inclusion of spin, however. When including spin, one passes from $p_A^2 \equiv (p - A)^2$ to the Pauli operator $\left(\sigma\cdot p_A\right)^2$, with $\sigma\cdot p_A = \sigma_1 p_A^1 + \sigma_2 p_A^2 + \sigma_3 p_A^3$ being the Dirac operator. Here the operators act on spinors $(\psi_1, \psi_2)^T$, and $\sigma \equiv (\sigma_1, \sigma_2, \sigma_3)$ is the vector of Pauli matrices
\begin{gather}
\sigma_1 \equiv \begin{pmatrix}0 & 1\\1 & 0\end{pmatrix}, \qquad \sigma_2 \equiv \begin{pmatrix}0 & -i\\i & 0\end{pmatrix}, \qquad \sigma_3 \equiv \begin{pmatrix}1 & 0\\0 & -1\end{pmatrix}.
\end{gather}

The main object that has been studied in this regard is the sum of the negative eigenvalues of the non-relativistic operator $(\sigma\cdot p_A)^2 + V$. An important estimate on this for a general magnetic field $B$ was obtained by Lieb, Loss and Solovej in a paper whose main objective was to prove stability of matter with the Pauli operator $(\sigma\cdot p_A)^2$ \cite{LLS}; they obtained in particular the bound
\begin{gather}
\text{Tr}\,\left[(\sigma\cdot p_A)^2 + V\right]_{-} \leq A_1\int V_{-}(x)^{5/2}\,dx + A_2\left(\int B(x)^2\,dx\right)^{3/4}\left(\int V_{-}(x)^4\,dx\right)^{1/4},
\label{inequality.LLS}
\end{gather}
for certain constants $A_1, A_2$.

There have been other bounds for the sum of the negative eigenvalues of $(\sigma\cdot p_A)^2 + V$; in particular, those obtained by Erd\"{o}s \cite{E1}, Erd\"{o}s and Solovej \cite{ES2, ES3}, Sobolev \cite{So1, So2}, and Bugliaro, Fefferman, Fr\"{o}hlich, Graf, and Stubbe \cite{BFFGS}. We note in passing that the first Lieb-Thirring-type estimate for $\text{Tr}\,\left[(\sigma\cdot p_A)^2 + V\right]_{-}$ can be found in a series of papers by Lieb, Solovej and Yngvason \cite{LSY1, LSY2, LSY3}, but is valid only for a constant magnetic field; the non-constant case was then covered in the aforementioned paper \cite{LLS}. Regarding the relativistic Pauli operator $|\sigma\cdot p_A|$, we would like to single out a paper by Erd\"{o}s, Fournais, and Solovej \cite{EFS}. There, two Lieb-Thirring-type inequalities were obtained for $\sqrt{\beta(\sigma\cdot p_A)^2 + \beta^2} - \beta$ (where $\beta > 0$, and the units are such that $\hbar$, and the mass and the charge of the electron are all equal to 1). One of them, \cite[Theorem 2.2]{EFS}, was
\begin{align}
\text{Tr}\,\Big[\sqrt{\beta(\sigma\cdot p_A)^2 + \beta^2} &- \beta + V\Big]_{-}\nonumber\\
\leq & \, C\left(\int V_{-}(x)^{5/2}\,dx + \beta^{-3/2}\int V_{-}(x)^4\,dx + \left(\int B^2\right)^{3/4}\left(\int V_{-}^4\right)^{1/4}\right),
\end{align}
which should be contrasted with \eqref{inequality.LLS}: everything remains the same, except for an additional term, the integral of $V_{-}^4$, that one may dub the ``relativistic term,'' in view of Daubechies' estimate \eqref{estimate.daubechies}. As $\beta \to \infty$, the middle term vanishes, which is consistent with the fact that $\sqrt{\beta(\sigma\cdot p_A)^2 + \beta^2} - \beta \to (\sigma\cdot p_A)^2/2$, thus recovering \eqref{inequality.LLS} (up to a global constant). The second estimate, \cite[Theorem 2.3]{EFS}, was
\begin{align}\label{estimate.local.PHLT.inequality}
 &\text{Tr}\,\Big[\phi_r\Big(\sqrt{\beta(\sigma\cdot p_A)^2 + \beta^2} - \beta - \frac{1}{|x|} + V\Big)\phi_r\Big]_{-}\\
&\qquad\qquad \leq  \, C\left(\eta^{-3/2}\int B^2 + \eta^{-3}r^3 + \eta^{-3/2}\int V_{-}^{5/2} + \eta^{-3}\beta^3\int V_{-}^4 + \left(\int B^2\right)^{3/4}\left(\int V_{-}^4\right)^{1/4}\right),\nonumber
\end{align}
where $\phi_r$ is any compactly supported real-valued function satisfying $\|\phi_r\|_{\infty} \leq 1$, $\beta > \pi^2/4$, and $\eta = (1 - \pi^2/(4\beta))/10$. The result is rather restrictive on several accounts, mainly that $\phi$ is not allowed to be unbounded (far less be equal to 1) and $\beta$ must be bigger than $\pi^2/4$. The present work was partly motivated by the limitations of \eqref{estimate.local.PHLT.inequality}; in particular, we have been able to obtain a global result concerning the massless relativistic Pauli operator with a critical Hardy term, which is our first theorem in the next subsection.
\end{subsection}
\begin{subsection}{Main Results}
In this subsection we shall state the central results of the article. The following is the main theorem, a Hardy-Lieb-Thirring inequality for the operator $|\sigma\cdot p_A|^{2s}$, $1/2 \leq s \leq 1$, in terms of the magnetic field energy $\|B\|_2^2$ and integrals of powers of the negative of the potential:
\begin{theorem}[Pauli-Hardy-Lieb-Thirring Inequality]
\label{theorem.main.result}
There are positive constants $P_s$, $Q_s$, and $R_s$, depending exclusively on $1/2 \leq s \leq 1$, such that
\begin{align}
\text{\textnormal{Tr}}\left(|\sigma\cdot p_A|^{2s} - \frac{C_s}{|x|^{2s}} + V\right)_{-} \leq & \, P_s\int_{\mathbb{R}^3}V_{-}^{1 + 3/(2s)}\,dx + Q_s\left(\int_{\mathbb{R}^3}|B|^2\,dx\right)^{2s}\nonumber\\
& \, + R_s\left(\int_{\mathbb{R}^3}|B|^2\,dx\right)^{3/4}\left(\int_{\mathbb{R}^3} V_{-}^4\,dx\right)^{1/4},
\label{equation.PHLT.inequality}
\end{align}
where $C_s \equiv \mathcal{C}_{s, 3}$, defined in Equation \eqref{equation.constant.hardy.inequality}, is the critical constant in the Hardy inequality in 3D.
\end{theorem}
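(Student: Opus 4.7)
The plan is to reduce Theorem \ref{theorem.main.result} to a combination of the spinless magnetic Hardy-Lieb-Thirring inequality of Frank-Lieb-Seiringer and Frank \cite{FLS, F} and the non-relativistic Pauli-Lieb-Thirring inequality of Lieb-Loss-Solovej \eqref{inequality.LLS}. The three terms in \eqref{equation.PHLT.inequality} correspond to three distinct bound-state mechanisms: the usual Hardy-Lieb-Thirring contribution of $V$ acting against $|p_A|^{2s} - C_s/|x|^{2s}$; a pure-field contribution arising from the coupling of the paramagnetic spin-field term $-\sigma \cdot B$ (hidden inside $|\sigma\cdot p_A|^{2s}$) with the critical Hardy singularity; and the Lieb-Loss-Solovej cross term from the coupling of $-\sigma\cdot B$ with $V$.

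The first step is to establish the operator inequality
\begin{equation*}
|\sigma\cdot p_A|^{2s} \geq |p_A|^{2s} - |B|^{s} \qquad (s \in (0,1]).
\end{equation*}
This follows from the Pauli identity $(\sigma\cdot p_A)^{2} = p_A^{2} - \sigma\cdot B$ and the pointwise spinor bound $\sigma\cdot B \leq |\sigma\cdot B| = |B|\,\mathbb{1}_{2}$. Since $(\sigma\cdot p_A)^{2} + |B| \geq p_A^{2} \geq 0$, the L\"{o}wner-Heinz operator monotonicity of $t \mapsto t^{s}$ on $[0,\infty)$ yields $\left((\sigma\cdot p_A)^{2} + |B|\right)^{s} \geq |p_A|^{2s}$, while the Rotfel'd-type operator subadditivity $(X + Y)^{s} \leq X^{s} + Y^{s}$ for $X,Y \geq 0$ and $s \in (0,1]$ gives $\left((\sigma\cdot p_A)^{2} + |B|\right)^{s} \leq |\sigma\cdot p_A|^{2s} + |B|^{s}$. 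Substituting this bound into the operator in \eqref{equation.PHLT.inequality} and using monotonicity of traces of negative parts,
\begin{equation*}
\text{Tr}\left(|\sigma\cdot p_A|^{2s} - \tfrac{C_s}{|x|^{2s}} + V\right)_{-} \leq \text{Tr}\left(|p_A|^{2s} - \tfrac{C_s}{|x|^{2s}} + V - |B|^{s}\right)_{-},
\end{equation*}
and Frank's magnetic Hardy-Lieb-Thirring inequality (the case $\gamma = 1$, $d = 3$ of \eqref{equation.HLT.inequality}) bounds the right-hand side by $L_{s}\int (V - |B|^{s})_{-}^{1 + 3/(2s)}\,dx$.

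The third step is a careful decomposition of this integral. The naive bound $(V - |B|^{s})_{-} \leq V_{-} + |B|^{s}$ produces $\int V_{-}^{1 + 3/(2s)} + \int |B|^{s + 3/2}$; the second term is not of the required form. I would instead introduce a scale parameter $\eta > 0$ and split $|B|^{s}$ into its small part ($|B|^{s} \leq \eta$) and its large part ($|B|^{s} > \eta$), absorbing the former into $V$ so that H\"{o}lder and Chebyshev inequalities produce the $\int V_{-}^{1 + 3/(2s)}$ and the cross term $(\int B^{2})^{3/4}(\int V_{-}^{4})^{1/4}$, and controlling the latter by an independent application of the LLS inequality \eqref{inequality.LLS} to a residual Pauli operator. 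Optimization over $\eta$ yields the pure-field term $(\int B^{2})^{2s}$ by dimensional analysis, since both sides of \eqref{equation.PHLT.inequality} must scale as $\lambda^{2s}$ under $x \mapsto \lambda x$ with $B(x) \mapsto \lambda^{2}B(\lambda x)$ and $\int B^{2} \mapsto \lambda\int B^{2}$.

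The principal obstacle is this splitting step: the naive operator reduction of the first step erases the $L^{2}$-structure of $B$ that is required for the $(\int B^{2})^{2s}$ term, so a careful multi-scale analysis with parameter optimization is essential. Matching the precise H\"{o}lder exponents $3/4,\,1/4$ of the cross term uniformly across $s \in [1/2, 1]$ is the most delicate aspect, and it is here that the restriction $s \geq 1/2$ enters: the Hardy-Lieb-Thirring integrability exponent $1 + 3/(2s)$ on $V_{-}$ does not exceed the LLS cross-term exponent $4$ precisely when $s \geq 1/2$, so that the two contributions combine cleanly.
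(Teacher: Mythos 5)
The proof proposal contains a fatal error at its very first step. You claim the operator inequality
\begin{equation*}
\left((\sigma\cdot p_A)^2 + |B|\right)^s \;\leq\; |\sigma\cdot p_A|^{2s} + |B|^s,
\end{equation*}
attributing it to a ``Rotfel'd-type operator subadditivity'' $(X+Y)^s \leq X^s + Y^s$ for positive operators $X, Y$ and $0 < s \leq 1$. This is \emph{not} a valid operator inequality when $X$ and $Y$ fail to commute. Rotfel'd's inequality is a trace inequality, $\mathrm{Tr}\,(X+Y)^s \leq \mathrm{Tr}\,X^s + \mathrm{Tr}\,Y^s$, and the Ando--Zhan theorem gives a unitarily-invariant-norm version, but neither extends to an order inequality between operators. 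A two-by-two counterexample: if $P$ and $Q$ are non-commuting rank-one projections, then $P^{1/2}+Q^{1/2} = P + Q$ has an eigenvalue strictly below $1$, say $\mu$, while $(P+Q)^{1/2}$ has the eigenvalue $\sqrt{\mu} > \mu$ on the same eigenvector, so $(P+Q)^{1/2} \not\leq P^{1/2} + Q^{1/2}$. In your setting $X = (\sigma\cdot p_A)^2$ is a second-order differential operator while $Y = |B|$ is multiplication; they never commute except for trivial $B$. Consequently the advertised pointwise lower bound $|\sigma\cdot p_A|^{2s} \geq |p_A|^{2s} - |B|^s$ is unavailable for $1/2 \leq s < 1$, and the entire reduction to Frank's magnetic Hardy--Lieb--Thirring inequality collapses.

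This is not a repairable detail but the crux of the problem: absorbing the spin--field coupling into a local potential without destroying the critical Hardy constant is precisely what the paper's machinery is built to circumvent. The paper's proof replaces your one-line reduction with two substantial ingredients. First, a quadratic-form estimate (Theorem \ref{theorem.quadratic.form.s.1} and Theorem \ref{theorem.quadratic.form}, built on the resolvent-expansion Estimates I and II of Section \ref{section.main.estimates} together with the Solovej--S{\o}rensen--Spitzer/Frank inequality \eqref{inequality.SSS}) gives
\begin{equation*}
|\sigma\cdot p_A|^{2s} - \frac{C_s}{|x|^{2s}} \;\geq\; \lambda\left(|p|^{2s} - \frac{C_s}{|x|^{2s}}\right) - E(s,\lambda)\left(\int |B|^2\right)^{2s}
\end{equation*}
with a \emph{constant}, not a function, as the error, and with a multiplicative loss $\lambda < 1$ (which the critical form can absorb). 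Crucially, this is a form bound, not an operator bound of the pointwise type you posit, and it is obtained only after a delicate small-power trick ($\||p|^u\psi\|^{2s/u}$ with $u < s$). Second, because this form bound only pays off near the Coulomb singularity, the paper performs a dyadic localization (Theorem \ref{theorem.localization.estimate}) and, on the far region where the singularity is effectively subcritical, switches to the BKS inequality \eqref{inequality.BKS} combined with the Lieb--Loss--Solovej inequality (Theorem \ref{theorem.pauli.lieb.thirring.power.2}). Your proposal contains no analogue of the localization, and a naive global application of BKS (which would be the trace-level fix for the broken step above) fails because it raises the Hardy singularity from $C_s/|x|^{2s}$ to $C_s^{1/s}/|x|^2$, and $C_s^{1/s} > 1/4$ for $s < 1$, making the resulting nonrelativistic Hardy form unbounded below. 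The scaling argument you give for the exponent $2s$ of $\int|B|^2$ is a sound consistency check but of course not a substitute for an estimate. The intuition about the three sources of bound states and the desire to combine \cite{F} with \cite{LLS} is the right one; the proof, however, requires the intermediate machinery that your proposal skips.
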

Theorem \ref{theorem.main.result} includes the physical cases $s = 1$ and $s = 1/2$, but not the smaller powers $0 < s < 1/2$. The purpose of the next theorem is to cover those powers, as well as to include the previous ones.
\begin{theorem}[Second Pauli-Hardy-Lieb-Thirring Inequality]
\label{theorem.second.main.result}
There are positive constants $M_s$ and $N_s$, depending exclusively on $0 < s \leq 1$, such that
\begin{gather}
\text{\textnormal{Tr}}\left(|\sigma\cdot p_A|^{2s} - \frac{C_s}{|x|^{2s}} + V\right)_{-} \leq M_s\int_{\mathbb{R}^3}|B|^{s + 3/2}\,dx + N_s\int_{\mathbb{R}^3}V_{-}^{1 + 3/(2s)}\,dx.
\label{equation.PHLT.second.inequality}
\end{gather}
\end{theorem}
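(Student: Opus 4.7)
The right-hand side of \eqref{equation.PHLT.second.inequality} has precisely the form produced by the scalar magnetic Hardy--Lieb--Thirring inequality \eqref{equation.HLT.inequality} of Frank (with $\gamma=1$, $d=3$, valid for the full range $0<s<3/2$) when applied with an effective scalar potential $V - c_s|B|^s$. My plan is therefore to establish the key form/operator inequality
\begin{equation*}
|\sigma\cdot p_A|^{2s} \;\geq\; |p_A|^{2s} - c_s\,|B|^s \qquad \text{on } L^2(\mathbb{R}^3;\mathbb{C}^2),
\end{equation*}
for some constant $c_s$ depending only on $s$, and then conclude by min--max, Frank's HLT, and a Young-type split. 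Concretely, given the key inequality, min--max yields
\begin{equation*}
\mathrm{Tr}\,\bigl(|\sigma\cdot p_A|^{2s} - C_s/|x|^{2s} + V\bigr)_{-} \;\leq\; \mathrm{Tr}\,\bigl(|p_A|^{2s} - C_s/|x|^{2s} + V - c_s|B|^s\bigr)_{-},
\end{equation*}
the scalar HLT \eqref{equation.HLT.inequality} bounds the right-hand side by $L_s\int(V-c_s|B|^s)_-^{1+3/(2s)}$, and the elementary inequality $(a+b)^p\leq 2^{p-1}(a^p+b^p)$ with $p=1+3/(2s)\geq 1$ splits this into $\int V_-^{1+3/(2s)}$ and $\int|B|^{s(1+3/(2s))} = \int|B|^{s+3/2}$, producing the two integrals of the theorem.

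The key inequality starts from the Pauli identity $(\sigma\cdot p_A)^2 = p_A^2 - \sigma\cdot B$ together with the matrix bound $\sigma\cdot B \leq |B|\,I_2$, which combine into the operator inequality $(\sigma\cdot p_A)^2 + |B| \geq p_A^2$ (with both sides non-negative). For $s = 1$ this is already the desired inequality with $c_1 = 1$. For $0 < s < 1$, operator monotonicity of $x\mapsto x^s$ on $[0,\infty)$ gives
\begin{equation*}
\bigl[(\sigma\cdot p_A)^2 + |B|\bigr]^s \;\geq\; |p_A|^{2s},
\end{equation*}
and what remains is to compare $\bigl[(\sigma\cdot p_A)^2 + |B|\bigr]^s$ with $|\sigma\cdot p_A|^{2s} + c_s|B|^s$. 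One cannot simply invoke the operator subadditivity $(X+Y)^s \leq X^s + Y^s$, which is false already for non-commuting $2\times 2$ positive matrices at $s=1/2$. Instead I would use the Stieltjes representation $x^s = (\sin\pi s/\pi)\int_0^\infty \lambda^{s-1} x/(\lambda+x)\,d\lambda$, together with the resolvent identity $(\lambda+X)^{-1}-(\lambda+X+Y)^{-1} = (\lambda+X)^{-1}Y(\lambda+X+Y)^{-1}$ for $X = (\sigma\cdot p_A)^2$, $Y = |B|$, and control the resulting $\lambda$-integrand by splitting $Y = |B|^{1/2}\cdot|B|^{1/2}$ and applying Cauchy--Schwarz in the spinor inner product. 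The weight $\lambda^{s-1}$ at the origin and the $\lambda^{-1}$-type decay of the resolvent difference at infinity are what make the $\lambda$-integral finite and, with care, proportional to $|B|^s$ in the quadratic-form sense.

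The hard part will be this last step: handling the non-commutation of $|B|$ with $(\sigma\cdot p_A)^2$ in such a way that the resulting bound is clean and the constant depends only on $s$. A potentially cleaner route is to sidestep an operator inequality entirely and argue at the level of traces: since only $\mathrm{Tr}(\cdot)_-$ is needed, a Birman--Koplienko-type trace estimate on $\mathrm{Tr}\bigl(\bigl([(\sigma\cdot p_A)^2+|B|]^s - |\sigma\cdot p_A|^{2s}\bigr)\gamma\bigr)$ against density matrices $0\leq\gamma\leq 1$ can suffice even when the corresponding operator inequality is out of reach. Once this comparison step is established, the remainder of the argument---invoking Frank's magnetic HLT and applying the elementary $(a+b)^p$ splitting---is routine and yields the constants $M_s$ and $N_s$.
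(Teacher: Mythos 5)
Your plan hinges entirely on the unproved operator inequality
\begin{equation*}
|\sigma\cdot p_A|^{2s} \;\geq\; |p_A|^{2s} - c_s\,|B|^s \qquad (0<s<1),
\end{equation*}
and there is a genuine gap here: this inequality is almost certainly false, and in any case the resolvent/Cauchy--Schwarz argument you sketch cannot produce it. If you carry that sketch through you will find yourself with expressions of the form $(\lambda+X)^{-1}|B|(\lambda+X+|B|)^{-1}$, $X=(\sigma\cdot p_A)^2$; after splitting $|B|^{1/2}\cdot|B|^{1/2}$ and Cauchy--Schwarz, the only way to close the integral over $\lambda$ is to estimate $\||B|^{1/2}(\lambda+X)^{-1}\psi\|_2$ via H\"older and Sobolev, which yields \emph{global} Lebesgue norms of $B$. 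This is exactly how Estimate~I (Theorem~\ref{theorem.estimate.I}) in the paper is obtained, and its error is $\Omega\,\|B\|_r^{2sr/(2r-3)}\|\psi\|_2^2$, a constant shift, not the local term $(\psi,|B|^s\psi)$ you need. The nonlocal resolvent of $(\sigma\cdot p_A)^2$ does not commute with multiplication by $|B|$, and there is no mechanism to pull a multiplication operator back out as a multiplication operator; this is precisely why every Lieb--Thirring-type estimate for the Pauli operator in the literature involves global field-energy type quantities like $(\int B^2)^{3/4}$. The quadratic-form estimate the paper actually proves, Equation~\eqref{inequality.quadratic.forms.first}, is the honest version of your key inequality: it has a factor $\lambda<1$ in front of $|p|^{2s}-C_s/|x|^{2s}$ (paid for using the Hardy coercivity \eqref{inequality.SSS}) and a constant error $\int |B|^{s+3/2}\,dx$ in place of the local $|B|^s$.

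With a constant error replacing your local $|B|^s$, one cannot simply feed $-C_s/|x|^{2s}+V-\text{const}$ into the scalar HLT \eqref{equation.HLT.inequality}: a constant shift has no finite moment, so the strategy collapses globally. Your backup idea of a BKS-type trace comparison is in fact what the paper uses, but it fails if applied \emph{globally}: after $\mathrm{Tr}(|\mathcal{P}_A|^{2s}+W)_-\leq\mathrm{Tr}(\mathcal{P}_A^2-W_-^{1/s})_-^s$ with $W=-C_s/|x|^{2s}+V$, the effective potential contains $(C_s/|x|^{2s})^{1/s}\sim|x|^{-2}$, and the resulting Lieb--Thirring integrand $\sim|x|^{-(2s+3)}$ is non-integrable at the origin. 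This is why the paper localizes first with Theorem~\ref{theorem.localization.estimate} (a nonlocal operator bounded below with a \emph{local} localization error), uses \eqref{inequality.quadratic.forms.first} only on the shell containing the origin where the Hardy coercivity can absorb the loss, and reserves BKS for the outer shells where the singularity has been cut off; an optimization over the localization length $b$ then recombines the pieces. The localization is not cosmetic --- it is the step that makes the BKS/constant-error machinery compatible with the Hardy singularity. (Your approach does work cleanly for $s=1$, where $\mathcal{P}_A^2\geq p_A^2-|B|$ is exact; the paper also dispatches $s=1$ this way.)
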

Both theorems match in form for $s = 1/2$. As will be clear from the proofs of Theorems \ref{theorem.main.result} and \ref{theorem.second.main.result}, the content of Section \ref{section.proof.main.result}, it is possible to obtain different inequalities than \eqref{equation.PHLT.inequality} or \eqref{equation.PHLT.second.inequality}, with different powers of the magnetic field $B$. When stating our results, however, we have not aimed for maximum generality, but just to provide what we believe are the simplest and most important inequalities one can obtain with the techniques we use. For that reason, we have only stated inequalities \eqref{equation.PHLT.inequality} and \eqref{equation.PHLT.second.inequality} as our main results. We remark here that all the constants appearing in the theorems, namely $P_s$, $Q_s$, $R_s$, $M_s$, and $N_s$, can be written out explicitly by carefully keeping track of how constants change as the proofs progress; however, we do not do this, and we never show the constants in closed form.

Theorems \ref{theorem.main.result} and \ref{theorem.second.main.result} imply in particular bounds on the spectrum of $|\sigma\cdot p_A|^{2s} - C_s/|x|^{2s} + V$, namely
\begin{corollary}[Bounds on the Spectrum of Fractional Pauli Operators]
For $1/2 \leq s \leq 1$,
\begin{align}
|\sigma\cdot p_A|^{2s} - \frac{C_s}{|x|^{2s}} + V \geq & \, -P_s\int_{\mathbb{R}^3}V_{-}^{1 + 3/(2s)}\,dx - Q_s\left(\int_{\mathbb{R}^3}|B|^2\,dx\right)^{2s}\nonumber\\
& \, - R_s\left(\int_{\mathbb{R}^3}|B|^2\,dx\right)^{3/4}\left(\int_{\mathbb{R}^3} V_{-}^4\,dx\right)^{1/4},
\end{align}
and for $0 < s \leq 1$,
\begin{gather}
|\sigma\cdot p_A|^{2s} - \frac{C_s}{|x|^{2s}} + V \geq - M_s\int_{\mathbb{R}^3}|B|^{s + 3/2}\,dx - N_s\int_{\mathbb{R}^3}V_{-}^{1 + 3/(2s)}\,dx.
\end{gather}
\end{corollary}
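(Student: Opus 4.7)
The plan is to derive both spectral lower bounds directly from Theorems \ref{theorem.main.result} and \ref{theorem.second.main.result} by invoking the elementary principle that any trace bound on the negative part of a self-adjoint operator yields an operator lower bound on the operator itself. Writing $H = |\sigma\cdot p_A|^{2s} - C_s/|x|^{2s} + V$, I would first show that $H \geq -\mathrm{Tr}\,(H)_-$ as an operator inequality; inserting the upper bounds on $\mathrm{Tr}\,(H)_-$ supplied by each of the two theorems then produces exactly the two claimed lower bounds on the spectrum.

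To justify $H \geq -\mathrm{Tr}\,(H)_-$, assume that $\mathrm{Tr}\,(H)_-$ is finite (otherwise the corollary is vacuous), and let $E_1 \leq E_2 \leq \cdots < 0$ denote the negative eigenvalues of $H$ below the essential spectrum, counted with multiplicity. By definition
\begin{gather*}
\mathrm{Tr}\,(H)_- = \sum_{j} |E_j|,
\end{gather*}
so in particular the ground-state energy obeys $|E_1| \leq \mathrm{Tr}\,(H)_-$, that is, $\inf \sigma(H) = E_1 \geq -\mathrm{Tr}\,(H)_-$. Because $H$ is self-adjoint and bounded below, the spectral theorem then immediately promotes this scalar inequality to the operator inequality $H \geq -\mathrm{Tr}\,(H)_-$, which is what is needed.

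The only mild check is that $H$ is genuinely self-adjoint, or at least defined by a semibounded closed quadratic form, so that the spectral theorem applies and $\mathrm{Tr}\,(H)_-$ is meaningful. This is already implicit in the formulation of Theorems \ref{theorem.main.result} and \ref{theorem.second.main.result}, and follows from the fractional Hardy inequality together with the control of the spin and magnetic contributions developed in the course of proving those theorems; under the finiteness of the right-hand sides appearing in \eqref{equation.PHLT.inequality} and \eqref{equation.PHLT.second.inequality}, the negative part $V_-$ is form-small relative to the kinetic term. Beyond this routine verification there is no substantive obstacle: the corollary is a one-line consequence of Theorems \ref{theorem.main.result} and \ref{theorem.second.main.result}, and I would present it as such.
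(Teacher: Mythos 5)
Your proposal is correct and matches the paper's (implicit) reasoning: the corollary is stated without proof precisely because it follows immediately from Theorems \ref{theorem.main.result} and \ref{theorem.second.main.result} via the elementary observation that $\inf\sigma(H) \geq -\text{Tr}\,(H)_-$ for a bounded-below self-adjoint operator $H$. One could phrase your middle step slightly more robustly via the variational identity $\text{Tr}\,(H)_- = -\min_{0\le\rho\le 1}\text{Tr}(H\rho)$ (cf.\ Lemma \ref{lemma.trace.sum}), which directly gives $\text{Tr}\,(H)_- \geq -(\psi,H\psi)$ for every normalized $\psi$ without invoking eigenvalues, but that is a cosmetic point and your argument is sound.
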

By setting $V = 0$ and $s = 1/2$, we see that an ultrarelativistic electron (i.e. massless) tied to a central Coulomb force field is stable when the atomic number $Z$ is allowed to go all the way to the critical Hardy constant $C_s$, as long as the field energy is added with a sufficiently large constant $\beta > 0$ in front, in other words
\begin{corollary}[Stability of the Relativistic One-electron Atom with Magnetic Fields and Spin]
\label{corollary.stability.relativistic}
For a sufficiently large $\beta > 0$,
\begin{gather}
|\sigma\cdot p_A| - \frac{C_{1/2}}{|x|} + \beta\int_{\mathbb{R}^3}|B|^2\,dx \geq 0.
\label{inequality.stability.matter}
\end{gather}
\end{corollary}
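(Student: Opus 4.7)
The plan is to extract Corollary 1.5 directly from Theorem 1.1 by specializing to the case $s=1/2$ and $V=0$, and then passing from a trace bound on the negative part to a lower bound on the operator itself.

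First, I would substitute $s = 1/2$ and $V \equiv 0$ into \eqref{equation.PHLT.inequality}. Both the $\int V_-^{1+3/(2s)}$ term and the mixed term involving $(\int V_-^4)^{1/4}$ vanish identically, so Theorem \ref{theorem.main.result} collapses to
\begin{gather}
\text{Tr}\left(|\sigma\cdot p_A| - \frac{C_{1/2}}{|x|}\right)_{-} \leq Q_{1/2}\int_{\mathbb{R}^3}|B|^2\,dx.
\end{gather}
This is the crucial one-line input; all the hard analysis is already absorbed into the statement of the main theorem.

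Next, I would convert this trace bound into an operator bound. Let $H \equiv |\sigma\cdot p_A| - C_{1/2}/|x|$ and set $E_0 \equiv \inf\operatorname{spec}(H)$. If $E_0 \geq 0$, the inequality is immediate. Otherwise, $H$ has some spectrum below zero, and the variational/min-max characterization gives that $-E_0 \leq \text{Tr}(H)_{-}$, since $\text{Tr}(H)_{-}$ is, by definition, at least the absolute value of any single negative eigenvalue (and more generally dominates $-\inf\operatorname{spec}(H)$ via the spectral theorem applied to the negative part). Combining with the trace bound above gives
\begin{gather}
H \geq -Q_{1/2}\int_{\mathbb{R}^3}|B|^2\,dx
\end{gather}
as an operator inequality (equivalently, as a quadratic-form inequality on the form domain of $H$).

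Finally, choosing any $\beta \geq Q_{1/2}$ yields \eqref{inequality.stability.matter}. There is essentially no obstacle beyond verifying the straightforward reduction from the trace-norm estimate to the spectral lower bound; the deep content — absorbing the Hardy singularity, handling spin, and extracting only the $\|B\|_2^2$ dependence — is entirely contained in Theorem \ref{theorem.main.result}. The only mild care needed is to note that the form-sum defining $H$ is well defined because the critical Hardy inequality $|p|\geq C_{1/2}/|x|$ together with diamagnetism (at the level of the form $\langle \psi, |\sigma\cdot p_A|\psi\rangle \geq \langle |\psi|, |p| |\psi|\rangle$, which can be extracted from the Pauli Hardy inequality inputs used in the main result) makes the left-hand side of \eqref{inequality.stability.matter} a densely defined, semibounded quadratic form.
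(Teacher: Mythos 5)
Your proof is correct and matches the paper's intended derivation: the paper obtains Corollary \ref{corollary.stability.relativistic} precisely by setting $V = 0$ and $s = 1/2$ in the operator-form restatement of Theorem \ref{theorem.main.result} (Corollary 1.3), which is exactly the reduction you carry out, including the standard passage from the trace bound $\mathrm{Tr}(H)_- \geq -\inf\mathrm{spec}(H)$ to the spectral lower bound. As a side remark, the paper also points out in its strategy subsection that one can reach the same conclusion more directly by taking $\lambda = 0$ and $s = 1/2$ in the quadratic form estimate \eqref{inequality.fundamental} (Theorem \ref{theorem.quadratic.form}), without invoking the full Hardy-Lieb-Thirring trace inequality.
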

Corollary \ref{corollary.stability.relativistic} is a non-trivial statement for several reasons: first, the form $|\sigma\cdot p_A| - C_{1/2}/|x|$ is not positive (because, for example, of the existence of zero modes, i.e. pairs $(\psi, A)$ of non-zero $L^2$ functions $\psi$ and magnetic potentials $A$ with $B = \nabla \times A \in L^2$,  such that $\sigma\cdot p_A \psi = 0$ \cite{LY}); second, we are allowing the constant in front of $1/|x|$ to go all the way to the critical Hardy constant; and third, in view of the scaling $\psi(x) \to \lambda^{3/2}\psi(\lambda x)$, $A(x) \to \lambda A(\lambda x)$, the infimum of
\begin{gather}
\left(\psi, \left(|\sigma\cdot p_A| - \frac{C_{1/2}}{|x|} + \beta\int_{\mathbb{R}^3}|B|^2\,dx\right)\psi\right)
\end{gather}
over all magnetic vector potentials $A$ and normalized functions $\psi$ is either $-\infty$ or 0. Result \eqref{inequality.stability.matter} shows in particular that 0 is the correct answer in the property just mentioned. It is the relativistic analog of the result on the stability of Coulomb systems with magnetic fields (where $(\sigma\cdot p_A)^2$ is considered, instead of $|\sigma\cdot p_A|$) in a series of papers by Fr\"{o}hlich, Lieb and Loss \cite{FLL}, Lieb and Loss \cite{LL2}, and Loss and Yau \cite{LY}.
\end{subsection}
\begin{subsection}{The Strategy of Proof and the Main Tools Used}
In this subsection we shall summarize the main elements of the proofs of the results stated above. We shall concentrate here mostly on explaining the ideas behind the proof of Theorem \ref{theorem.second.main.result} for $0 < s < 1$, since they already convey most of the concepts and techniques used. First, the fractional Pauli operator $|\sigma\cdot p_A|^{2s}$, $0 < s < 1$, is replaced by the simpler operator $|p_A|^{2s}$ (which amounts to the removal of the spin variable), and the cost involved in doing so is estimated explicitly. More precisely, for functions $\psi$ normalized in $L^2$, a bound of the form
\begin{gather}
\left|\left(\psi,\left(|\sigma\cdot p_A|^{2s} - |p_A|^{2s}\right)\psi\right)\right| \leq \varepsilon\||p_A|^u\psi\|_2^{2s/u} + \Omega\|B\|_r^{2sr/(2r - 3)}
\label{equation.first.estimate}
\end{gather}
is obtained, where $\varepsilon > 0$ is arbitrarily small, $u$ and $r$ are positive numbers satisfying certain conditions (which will be made precise in due time), and $\Omega$ is a positive function of all the parameters involved; in particular it diverges as $\varepsilon \to 0$. The key is that we will pick $u < s$; the first error term on the right side of \eqref{equation.first.estimate} involves then a lower-order power of $|p_A|$, which we will be able to control, as explained shortly. Equation \eqref{equation.first.estimate} is the content of our Estimate I: Fractional Case, Theorem \ref{theorem.estimate.I} below.

After this first reduction, we obtain a second estimate, where $|p_A|^{2s}$ is replaced by $|p|^{2s}$ (the magnetic field is removed) at a price that can again be estimated explicitly, namely (still for normalized $\psi$)
\begin{gather}
\left|\left(\psi,\left(|p_A|^{2s} - |p|^{2s}\right)\psi\right)\right| \leq \varepsilon\||p|^u\psi\|_2^{2s/u} + \mathcal{J}\|B\|_r^{2sr/(2r - 3)},
\label{equation.second.estimate}
\end{gather}
in entire analogy to the first estimate. Equation \eqref{equation.second.estimate} appears in Theorem \ref{theorem.estimate.II}, below. These two estimates follow essentially from computations involving the resolvent expansion for a fraction of a positive self-adjoint operator $A$, $A^{\alpha} = \left[\sin(\pi \alpha)/\pi\right]\int_0^{\infty}A(A + a)^{-1}a^{\alpha - 1}\,da$, $0 < \alpha < 1$.

The combination of the two estimates then yields, in particular, an estimate of the form
\begin{gather}
\left|\left(\psi,\left(|\sigma\cdot p_A|^{2s} - |p|^{2s}\right)\psi\right)\right| \leq \varepsilon\||p|^u \psi \|_2^{2s/u} + \mathcal{L}\int_{\mathbb{R}^3}|B|^{s + 3/2}\,dx,
\label{equation.third.estimate}
\end{gather}
with $\mathcal{L}$ diverging as $\varepsilon \to 0$. ($r$ was picked so that we ended up with a simple integral of a power of $|B|$.) This last estimate implies, in particular
\begin{gather}
\left(\psi, \left(|\sigma\cdot p_A|^{2s} - \frac{C_s}{|x|^{2s}}\right)\psi\right) \geq 
\left(\psi, \left(|p|^{2s} - \frac{C_s}{|x|^{2s}}\right)\psi\right) - \varepsilon\||p|^u \psi \|_2^{2s/u} - \mathcal{L}\int_{\mathbb{R}^3}|B|^{s + 3/2}\,dx,
\end{gather}
for $\psi$ normalized in $L^2$.

The key now is that the entire form $|p|^{2s} - C_s/|x|^{2s}$ can control $\varepsilon\||p|^u \psi \|_2^{2s/u}$ for any $ 0 < u < s$ and sufficiently small $\varepsilon$; more precisely, for all $\psi$ with $\| \psi \|_2 = 1$,
\begin{gather}
\left(\psi, \left(|p|^{2s} - \frac{C_s}{|x|^{2s}}\right)\psi\right) \geq H_{s, u}\||p|^u \psi \|_2^{2s/u}
\label{inequality.SSS}
\end{gather}
for a certain constant $H_{s, u} > 0$. This was noticed for the first time by Solovej, S\o rensen and Spitzer for the relativistic case $s = 1/2$ \cite[Theorem 2.3]{SSS}; it was later generalized by Frank to any $0 < s < 3/2$ \cite[Theorem 1.2]{F}. The scale-invariant form in which we write the inequality here appears in \cite{F}. After an application of \eqref{inequality.SSS} one finds an estimate for quadratic forms,
\begin{gather}
|\sigma\cdot p_A|^{2s} - \frac{C_s}{|x|^{2s}} \geq \lambda\left(|p|^{2s} - \frac{C_s}{|x|^{2s}}\right) - \mathcal{L}\int_{\mathbb{R}^3}|B|^{s + 3/2}\,dx,
\label{inequality.fundamental}
\end{gather}
with $0 \leq \lambda < 1$ and $\mathcal{L} \to \infty$ as $\lambda \to 1$. Inequality \eqref{inequality.fundamental} is fundamental to our work -- it allows us to replace $|\sigma\cdot p_A|^{2s}$ by $|p|^{2s}$ up to a lower bound, a constant, and an integral of a power of $|B|$. The reader at this point should notice that setting $\lambda = 0$ and $s = 1/2$ proves the stability of relativistic matter claimed in Corollary \ref{corollary.stability.relativistic}. Equation \eqref{inequality.fundamental} appears in Theorem \ref{theorem.quadratic.form}.

From \eqref{inequality.fundamental} the proof of the Second Pauli-Hardy-Lieb-Thirring Inequality, Theorem \ref{theorem.second.main.result}, goes roughly as follows: we first prove a localization estimate for the fractional Pauli operator, namely
\begin{gather}
|\sigma\cdot p_A|^{2s} \geq \sum_{n = 0}^{\infty}\varphi_n\left(|\sigma\cdot p_A|^{2s} - D_n^{2s}\right)\varphi_n,
\label{equation.localization.estimate.introduction}
\end{gather}
where $D_n$ is equal to a constant times $2^{-n}$, and the functions $\varphi_n$ form a dyadic partition of unity. (Where $\sum_{n = 0}^{\infty}\varphi_n^2 = 1$, and the functions $\varphi_n$ have support on shells growing roughly as $2^n$ in width and distance from the origin.) Equation \eqref{equation.localization.estimate.introduction} is found in our Theorem \ref{theorem.localization.estimate}. What makes this estimate possible is a so-called pull-out formula
\begin{gather}
\left(\sum_{n = 0}^{\infty}S_n A_n S_n\right)^s \geq \sum_{n = 0}^{\infty}S_n A_n^s S_n,
\label{equation.pull.out.formula}
\end{gather}
valid for any collection of positive self-adjoint operators $A_n$ and bounded positive self-adjoint operators $S_n$ such that $\sum_{n = 0}^{\infty}S_n^2 = 1$. This estimate appears for the case $s = 1/2$ in \cite[Lemma 3.1]{EFS}. Its proof follows from the integral representation for the $s$-power of a positive self-adjoint operator $A$ (above Equation \eqref{equation.third.estimate}) and an inequality for resolvents similar in spirit to \eqref{equation.pull.out.formula}, which appeared first in \cite{BFFGS}; see also \cite[Proposition 6.1]{ES1}. We remark also that \eqref{equation.pull.out.formula} is an immediate application of a result of F. Hansen and G.K. Pedersen \cite[Theorem 2.1]{HP}, as $x \mapsto -x^s$ is operator convex for $0 < s < 1$.

The localization estimate \eqref{equation.localization.estimate.introduction} is one of the most useful technical tools in the present article: the careful reader will notice that this is an estimate for a nonlocal operator with a local error term. This particular feature turns out to be extremely convenient, and sidesteps important complications that would arise had we used an estimate with a non-local error: the reader here should compare \eqref{equation.localization.estimate.introduction} with estimates where the error is nonlocal: see, for example, an exact formula due to M. Loss in \cite[Theorem 9]{LY2}, \cite[Lemma 3.5]{FLS} and \cite[Theorem 3.1]{FLS2}, or an estimate due to E. Lenzmann and M. Lewin in \cite[Lemma A.1]{LL2}. Estimate \eqref{equation.localization.estimate.introduction} allows us in particular to split the trace of the negative part of $|\sigma\cdot p_A|^{2s} - C_s/|x|^{2s} + V$ as the sum of that of two localized Hamiltonians -- one which is localized around the origin and one far away from it. For the first Hamiltonian we apply Estimate \eqref{inequality.fundamental} and then apply the Hardy-Lieb-Thirring inequality \eqref{equation.HLT.inequality} as appears, for example, in \cite{F}. For the part far away we obtain an effective localized potential, roughly equal to $-E\phi /|x|^{2s} + V$, with $E$ a constant that goes beyond $C_s$ and $\phi$ the indicator function of a region far from the origin. In this way the singularity at the origin of $1/|x|^{2s}$ is removed and we can treat all of $-E\phi /|x|^{2s} + V$ as a potential; we then use the BKS inequality, which in the form that is relevant to us here states that
\begin{gather}
\text{Tr}\,(A - B)_{-} \leq \text{Tr}\,(A^{1/s} - B^{1/s})_{-}^s
\label{inequality.BKS}
\end{gather}
for any positive self-adjoint operators $A$ and $B$ and $0 < s < 1$. This inequality is due to Birman, Koplienko and Solomyak \cite{BKS}, but see, in particular, \cite{LSS, LSS2} for a simple proof of the version stated here. Estimate \eqref{inequality.BKS} allows us then to effectively replace $|\sigma\cdot p_A|^{2s}$ by $(\sigma\cdot p_A)^2 = p_A^2 - \sigma\cdot B$; by treating the term $\sigma\cdot B$ as a potential we can then use the non-relativistic Lieb-Thirring inequality \eqref{equation.lieb.thirring} (for the magnetic momentum $p_A$) to obtain an estimate for the term with the singularity of $1/|x|^{2s}$ removed. This argument forms the core of the proof of the Second Pauli-Hardy-Lieb-Thirring Inequality, Theorem \ref{theorem.second.main.result}.

We briefly explain now how the proof of the Pauli-Hardy-Lieb-Thirring Inequality, Theorem \ref{theorem.main.result}, differs from the second one, Theorem \ref{theorem.second.main.result}. Again, we shall restrict ourselves to $1/2 \leq s < 1$ for the moment. We recall that the second inequality was proven by using Estimate \eqref{equation.localization.estimate.introduction} to obtain two Hamiltonians, one localized around the origin and another one far away from it. For the first one we use an estimate of the form
\begin{gather}
|\sigma\cdot p_A|^{2s} - \frac{C_s}{|x|^{2s}} \geq \lambda\left(|p|^{2s} - \frac{C_s}{|x|^{2s}}\right) - \mathcal{M}\left(\int_{\mathbb{R}^3}|B|^2\,dx\right)^{2s},
\label{inequality.fundamental.power.2}
\end{gather}
which we prove in essentially the same way as Equation \eqref{inequality.fundamental}. For the second Hamiltonian, the one localized away from the origin, we utilize, and prove in Theorem \ref{theorem.pauli.lieb.thirring.power.2} below, a generalization of Equation \eqref{inequality.LLS} for the powers $1/2 \leq s \leq 1$, namely
\begin{gather}
\text{Tr}\,(|\mathcal{P}_A|^{2s} + W)_{-} \leq U_s \int_{\mathbb{R}^3}W_{-}^{1 + 3/(2s)}\,dx + V_s\left(\int_{\mathbb{R}^3}|B|^2\,dx\right)^{3/4}\left(\int_{\mathbb{R}^3}W_{-}^4\,dx\right)^{1/4},
\end{gather}
which completes the argument behind the Pauli-Hardy-Lieb-Thirring Inequality for $1/2 \leq s < 1$.

Finally, the two Pauli-Hardy-Lieb-Thirring inequalities with $s = 1$ are proven the same way as before, only that the estimate
\begin{gather}
|\mathcal{P}_A|^2 - \frac{C_1}{|x|^2} \geq \lambda\left(|p_A|^2 - \frac{C_1}{|x|^2}\right) - D_r\|B\|_r^{2r/(2r - 3)},
\end{gather}
with $3/2 < r \leq \infty$, the content of Theorem \ref{theorem.quadratic.form.s.1} below, is used, in lieu of \eqref{inequality.fundamental} or \eqref{inequality.fundamental.power.2}.

The rest of the article will be devoted to explaining in detail the steps of the proofs and the technicalities involved.
\end{subsection}
\begin{subsection}{The Structure of the Article and Acknowledgments}
The article is structured as follows: Section 2 is devoted exclusively to the estimates for the differences between $|\sigma\cdot p_A|^{2s}$ and $|p_A|^{2s}$ and between $|p_A|^{2s}$ and $|p|^{2s}$, as explained in the previous subsection. We call these Estimate I and II, respectively. In Section 3 we provide technical lemmas that will allow us to prove the Pauli-Hardy-Lieb-Thirring inequalities. In particular, it is here where we provide the proof of the main localization estimate for the fractional Pauli operator $|\sigma\cdot p_A|^{2s}$, as explained above. Some technical lemmas in the computation of the trace of the negative part of certain operators, which will be used later, are also provided here. Section 4 is then dedicated to the proofs of the Pauli-Hardy-Lieb-Thirring inequalities, which proceed smoothly, as many of the tools used have been proven already in previous sections. We provide at the end of the article an appendix, where we give explicit values for the constants appearing in Estimates I and II from Section 2, and a statement and an easy proof of a Sobolev inequality for rotors that we use in the proof of Estimate II.
\begin{acknowledgments}
The authors were partially supported by a Sapere Aude grant from the Independent Research Fund Denmark, Grant number DFF--4181-00221. They would also like to thank the anonymous referee for precise and useful comments that helped make the article better, in particular for pointing out the article by Hansen and Pedersen cited by Equation \eqref{equation.pull.out.formula}, of which the authors were unaware.
\end{acknowledgments}
\end{subsection}
\end{section}

\begin{section}{Main Estimates}
\label{section.main.estimates}
In the following section we shall provide the main estimates that will be used in the proofs appearing in Section \ref{section.proof.main.result}, namely Theorems \ref{theorem.quadratic.form.s.1}, \ref{theorem.quadratic.form}, and \ref{theorem.pauli.lieb.thirring.power.2}.

From now on we shall use the notation $\mathcal{P}_A$ for the Dirac operator $\sigma\cdot p_A$. The relevant Hilbert space where operators will act is $\mathcal{H} \equiv L^2(\mathbb{R}^3)\otimes \mathbb{C}^2$. We shall use the notation $D(A)$ and $Q(A)$ for the domain and the form domain of an operator $A$, respectively. In this section we will provide the proofs of Estimates I and II, which will allow us to effectively replace $|\mathcal{P}_A|^{2s}$ by $|p_A|^{2s}$ and $|p_A|^{2s}$ by $|p|^{2s}$, respectively. We remind the reader that $p = -i\nabla$ and $p_A = p - A$. We restrict for the moment to the case $0 < s < 1$. We shall utilize the word ``diamagnetism'' freely in what follows to refer to any inequality that derives from the pointwise estimate $|e^{-tp_A^2}\psi| \leq e^{-tp^2}|\psi|$, which is itself a simple consequence of the Feynman-Kac formula for the semigroup $e^{-tp_A^2}$.

We shall use the following interpolation repeatedly in what follows, and so we have found it convenient to state it as a lemma. It is an immediate consequence of H\"{o}lder's inequality, and consequently its proof will be omitted.
\begin{lemma}[General Interpolation Lemma]
\label{lemma.interpolation}
For a measure space $(\mathcal{M}, \mu)$ and a measurable function $f : \mathcal{M} \to \mathbb{C}$, if $\|f\|_a$ denotes the norm $\left(\int_{\mathcal{M}}|f|^a\,d\mu\right)^{1/a}$ for $a \geq 1$, we have that for numbers $1 \leq p \leq r \leq q \leq \infty$,
\begin{gather}
\|f\|_r \leq \|f\|_p^{(q - r)p/\left[(q - p)r\right]}\|f\|_q^{(r - p)q/\left[(q - p)r\right]}.
\end{gather}
\end{lemma}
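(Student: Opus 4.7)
The plan is to prove the lemma by a single application of H\"older's inequality, after a judicious decomposition of the integrand. First, I would dispense with the trivial cases $p = r$ or $r = q$, where one of the exponents in the claimed inequality vanishes and the estimate reduces to an identity; I may thus assume $p < r < q$, and in particular $p < q$ strictly. I would also handle $q = \infty$ separately, where one simply writes $|f|^r = |f|^p |f|^{r-p}$, bounds the second factor pointwise by $\|f\|_\infty^{r-p}$, integrates, and takes the $r$-th root. The result agrees with the formula obtained by taking the limit $q \to \infty$ of the finite-$q$ expression, so the unified statement is recovered.

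For $p < r < q < \infty$, I would write $|f|^r = |f|^\alpha \cdot |f|^\beta$ with $\alpha + \beta = r$, choosing $\alpha$ and $\beta$ so that H\"older's inequality with conjugate exponents $p/\alpha$ and $q/\beta$ produces precisely $\|f\|_p$ and $\|f\|_q$ on the right-hand side. The conjugacy condition $\alpha/p + \beta/q = 1$ together with $\alpha + \beta = r$ is a linear system with unique solution $\alpha = p(q-r)/(q-p)$ and $\beta = q(r-p)/(q-p)$, both strictly positive by the assumption $p < r < q$. H\"older then yields
\begin{equation*}
\int_{\mathcal{M}} |f|^r \, d\mu \leq \left(\int_{\mathcal{M}} |f|^p \, d\mu\right)^{\alpha/p} \left(\int_{\mathcal{M}} |f|^q \, d\mu\right)^{\beta/q} = \|f\|_p^{\alpha}\,\|f\|_q^{\beta},
\end{equation*}
and raising both sides to the power $1/r$ produces the stated inequality with exponents $\alpha/r = p(q-r)/[r(q-p)]$ and $\beta/r = q(r-p)/[r(q-p)]$.

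There is no real obstacle here; the whole argument is bookkeeping. The only items worth verifying are that the exponents $p/\alpha$ and $q/\beta$ are both at least $1$ with reciprocals summing to $1$, which is automatic from the two defining equations for $\alpha,\beta$ together with the inequalities $\alpha \leq r \leq q$ and $\beta \leq r \leq q$; and that the limiting case $q = \infty$ indeed recovers the standard $\|f\|_r \leq \|f\|_p^{p/r}\|f\|_\infty^{(r-p)/r}$, as noted above. No deeper ideas are needed, which is presumably why the authors omit the proof.
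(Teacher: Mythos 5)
Your proof is correct and is precisely the standard Hölder argument the paper alludes to and deliberately omits ("an immediate consequence of Hölder's inequality, and consequently its proof will be omitted"); the decomposition $|f|^r=|f|^\alpha|f|^\beta$ with $\alpha+\beta=r$ and $\alpha/p+\beta/q=1$ is the canonical route, and your exponent arithmetic and edge cases ($p=r$, $r=q$, $q=\infty$) all check out. The only cosmetic slip is in the parenthetical sanity check: what you actually need is $\alpha\le p$ and $\beta\le q$ (equivalently, $\alpha/p,\beta/q\in(0,1)$, which follows automatically from $\alpha,\beta>0$ and $\alpha/p+\beta/q=1$), not the bounds $\alpha\le r\le q$ and $\beta\le r\le q$ you cite; this does not affect the validity of the argument.
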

In addition, we would like to record another lemma we will utilize, namely a generalization of Sobolev's inequality for the Pauli momentum $\mathcal{P}_A$.
\begin{lemma}[Pauli-Sobolev Inequality]
\label{lemma.pauli.sobolev}
For any $r > 3/2$, a locally $L^1$ function $A$ so that $\nabla\times A \equiv B$ is a function in $L^{r}(\mathbb{R}^3)$, and $0 < \varepsilon < 1$, we have that for any $\psi$ in the form domain of $\mathcal{P}_A^2$,
\begin{gather}
\|\psi\|_6^2 \leq S^2(1 - \varepsilon)^{-1}\||\mathcal{P}_A|\psi\|_2^2 + \omega(\varepsilon, r)\|B\|_{r}^{2r/(2r - 3)}\|\psi\|_2^2,
\end{gather}
where $S$ is the optimal constant in the Sobolev inequality $\|\psi\|_6 \leq S\||p|\psi\|_2$, equal to $(2/\pi)^{2/3}/\sqrt{3}$, and
\begin{gather}
\omega(\varepsilon, r) \equiv \frac{S^{4r/(2r - 3)}3^{3/(2r - 3)}(2r - 3)}{(1 - \varepsilon)(2r)^{2r/(2r - 3)}\varepsilon^{3/(2r - 3)}}.
\end{gather}
\end{lemma}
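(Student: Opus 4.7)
The strategy is to reduce the magnetic Sobolev inequality to the non-magnetic one via the Pauli identity together with diamagnetism, and then to absorb the resulting magnetic error term by Hölder interpolation followed by Young's inequality.

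First, on the form domain of $\mathcal{P}_A^2$ I would use the Pauli identity $\mathcal{P}_A^2 = p_A^2 - \sigma\cdot B$, which in quadratic-form language reads $\|p_A\psi\|_2^2 = \||\mathcal{P}_A|\psi\|_2^2 + (\psi,\sigma\cdot B\,\psi)$. Combining the diamagnetic inequality $|\nabla|\psi|| \leq |p_A\psi|$ (valid for spinor-valued $\psi$) with the sharp scalar Sobolev inequality $\|f\|_6 \leq S\||p|f\|_2$ applied to $f=|\psi|$ then yields
\[
\|\psi\|_6^2 \leq S^2\|p_A\psi\|_2^2 \leq S^2\||\mathcal{P}_A|\psi\|_2^2 + S^2\int_{\mathbb{R}^3}|B|\,|\psi|^2\,dx,
\]
where I used that the $2\times 2$ Hermitian matrix $\sigma\cdot B$ has operator norm $|B|$, so $|(\psi,\sigma\cdot B\,\psi)|\leq\int|B||\psi|^2\,dx$.

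Second, the magnetic error integral is handled by Hölder's inequality followed by Lemma \ref{lemma.interpolation}. With $r > 3/2$ and conjugate exponent $r' = r/(r-1)$, Hölder gives $\int|B||\psi|^2 \leq \|B\|_r\|\psi\|_{2r'}^2$, and since $2r' = 2r/(r-1)\in(2,6)$ the interpolation lemma applied between $\|\psi\|_2$ and $\|\psi\|_6$ produces exponents $\theta = (2r-3)/(2r)$ on $\|\psi\|_2$ and $1-\theta = 3/(2r)$ on $\|\psi\|_6$. Hence
\[
\int_{\mathbb{R}^3}|B|\,|\psi|^2\,dx \leq \|B\|_r\,\|\psi\|_2^{(2r-3)/r}\,\|\psi\|_6^{3/r}.
\]

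Third, I apply Young's inequality $xy\leq \delta(1-\theta)\,x^{1/(1-\theta)} + \delta^{-(1-\theta)/\theta}\theta\, y^{1/\theta}$ with $x=\|\psi\|_6^{3/r}$ and $y=\|B\|_r\|\psi\|_2^{(2r-3)/r}$, noting that $1/(1-\theta) = 2r/3$ makes the first term proportional to $\|\psi\|_6^2$ and $1/\theta = 2r/(2r-3)$ is exactly the exponent on $\|B\|_r$ appearing in the statement. Substituting back into the bound for $\|\psi\|_6^2$, transposing the $\|\psi\|_6^2$ term to the left-hand side, and tuning the Young parameter to $\delta = \varepsilon\cdot 2r/(3S^2)$ so that the coefficient being absorbed equals $\varepsilon$, produces the prefactor $S^2(1-\varepsilon)^{-1}$ in front of $\||\mathcal{P}_A|\psi\|_2^2$. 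Collecting the remaining powers of $S$, $2r$, $2r-3$, $3$, and $\varepsilon$ and simplifying via the identities $2+6/(2r-3) = 4r/(2r-3)$ and $1+3/(2r-3) = 2r/(2r-3)$ reproduces the stated $\omega(\varepsilon,r)$. The only mildly delicate step is this final arithmetic; conceptually, the hypothesis $r > 3/2$ is precisely what forces $2r' < 6$ and thus yields a nondegenerate interpolation with a nonzero exponent on $\|\psi\|_6$, so that Young's inequality is usable for absorption in the first place.
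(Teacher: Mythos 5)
Your proof is correct and follows essentially the same route as the paper's: Pauli identity plus the diamagnetic inequality to reduce to the non-magnetic Sobolev bound, then Hölder, the interpolation lemma, and Young's inequality to absorb the $\|\psi\|_6^2$ error term. The only difference is cosmetic — you parametrize directly in $r$ whereas the paper works with $p$ and substitutes $r=p/(p-1)$ at the end.
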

\begin{proof}
Assume initially that $\psi$ is in $C_0^{\infty}$. This additional assumption can be removed at the end by density. We first have that from the classical Sobolev inequality, diamagnetism, and the fact that $\mathcal{P}_A^2 = p_A^2 - \sigma\cdot B$,
\begin{gather}
\|\psi\|_6^2 = \||\psi|\|_6^2 \leq S^2(|\psi|, p^2|\psi|) \leq S^2(\psi, p_A^2\psi) = S^2(\psi, \mathcal{P}_A^2\psi) + S^2(\psi, \sigma\cdot B\psi).
\end{gather}
Since $(\sigma\cdot B)^2 = B^2$ and the Pauli matrices are Hermitian, we have that
\begin{gather}
|\sigma\cdot B \psi| = \sqrt{(\sigma\cdot B \psi, \sigma\cdot B \psi)_{\mathbb{C}^2}} = \sqrt{(\psi, (\sigma\cdot B)^2 \psi)_{\mathbb{C}^2}} = |B||\psi|
\end{gather}
(here $(a, b)_{\mathbb{C}^2} = \overline{a_1}b_1 + \overline{a_2}b_2$ is the inner product in $\mathbb{C}^2$), and therefore, by H\"{o}lder's inequality, Lemma \ref{lemma.interpolation}, and Young's inequality,
\begin{align}
S^2(\psi, \sigma\cdot B\psi) \leq \, & S^2|(\psi, \sigma\cdot B\psi)| \leq S^2\int_{\mathbb{R}^3}|B||\psi|^2\,dx\nonumber\\
\leq \, & S^2\|B\|_{p/(p - 1)}\|\psi\|_{2p}^2 \leq S^2\|B\|_{p/(p - 1)}\|\psi\|_2^{(3 - p)/p}\|\psi\|_6^{3(p - 1)/p}\nonumber\\
\leq \, & \frac{S^{4p/(3 - p)}(3 - p)}{2p}\delta^{-2p/(3 - p)}\|B\|_{p/(p - 1)}^{2p/(3 - p)}\|\psi\|_2^2 + \frac{3(p - 1)}{2p}\delta^{2p/3(p - 1)}\|\psi\|_6^2.
\end{align}
for any $\delta > 0$ and $1 \leq p < 3$. The result then follows from setting $\varepsilon \equiv 3(p - 1)\delta^{2p/3(p - 1)}/2p$ and $r \equiv p/(p - 1)$.
\end{proof}
We will now utilize these two lemmas to prove Estimate I in the fractional case $0 < s < 1$, providing an estimate on the difference of the operators $|\mathcal{P}_A|^{2s}$ and $|p_A|^{2s}$. In its statement certain functions depending on multiple variables appear. These are made explicit in an appendix at the end of the article. We state this and the second estimate with a mass $m \geq 0$; however, we never in this article make use of the massive case $m > 0$. The estimate here is written with a mass only because the comparison between $(|\mathcal{P}_A|^2 + m^2)^s - m^{2s}$ and $(|p_A|^2 + m^2)^s - m^{2s}$ adds no real extra effort over that between $|\mathcal{P}_A|^{2s}$ and $|p_A|^{2s}$.
\begin{theorem}[Estimate I: Fractional Case]
\label{theorem.estimate.I}
Let $m \geq 0$, and $s$, $u$, and $r$ be three numbers with the properties that $0 < s < 1$, $0 \leq u \leq 1$, $r > 3/2$, $3(1 - u) < 2r(1 - s)$. Let $A$ be a locally $L^1$ function such that $\nabla\times A = B$ is in $L^r(\mathbb{R}^3)$ and $\psi$ be in $Q(|\mathcal{P}_A|^{2s})\cap Q(|p_A|^{2s})$. In case $u = 0$,
\begin{gather}
\left|\left(\psi, \left(\left(\mathcal{P}_A^2 + m^2\right)^s - \left(p_A^2 + m^2\right)^s\right)\psi\right)\right| \leq \Theta(s, r)\|B\|_r^{2sr/(2r - 3)}\|\psi\|_2^2,
\end{gather}
where $\Theta$ is a positive function of $s$ and $r$. If now $u > 0$, the following two estimates hold: For any $\varepsilon > 0$,
\begin{align}
\left|\left(\psi, \left(\left(\mathcal{P}_A^2 + m^2\right)^s - \left(p_A^2 + m^2\right)^s\right)\psi\right)\right| \leq
\begin{cases}
\displaystyle\varepsilon\frac{\||p_A|^u\psi\|_2^{2s/u}}{\|\psi\|_2^{2(s - u)/u}} + \Omega(s, u, r, \varepsilon)\|B\|_r^{2sr/(2r - 3)}\|\psi\|_2^2,\vspace{3mm}\\
\displaystyle\varepsilon\frac{\||\mathcal{P}_A|^u\psi\|_2^{2s/u}}{\|\psi\|_2^{2(s - u)/u}} + \Omega(s, u, r, \varepsilon)\|B\|_r^{2sr/(2r - 3)}\|\psi\|_2^2,
\end{cases}
\end{align}
where $\Omega$ is a positive function of all the variables involved, divergent as $\varepsilon \to 0$.
\end{theorem}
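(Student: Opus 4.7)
The plan is to exploit the resolvent representation of the fractional power. For positive self-adjoint operators $A, B$ and $0 < s < 1$, one has the identity
\begin{gather}
A^s - B^s = \frac{\sin(\pi s)}{\pi}\int_0^{\infty}(A + a)^{-1}(A - B)(B + a)^{-1}\,a^s\,da,
\end{gather}
which I would apply with $A = \mathcal{P}_A^2 + m^2$ and $B = p_A^2 + m^2$. Because of the Pauli identity $\mathcal{P}_A^2 = p_A^2 - \sigma\cdot B$, the operator difference $A - B$ collapses to $-\sigma\cdot B$, and after taking the quadratic form in $\psi$ the whole problem reduces to estimating
\begin{gather}
\int_0^{\infty}\bigl|(\chi_a,(\sigma\cdot B)\phi_a)\bigr|\,a^s\,da, \qquad \phi_a \equiv (p_A^2 + m^2 + a)^{-1}\psi, \quad \chi_a \equiv (\mathcal{P}_A^2 + m^2 + a)^{-1}\psi.
\end{gather}

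For the integrand I would use the pointwise bound $|\sigma\cdot B|_{\mathbb{C}^2} \leq |B|$ and apply three-factor H\"{o}lder to obtain $|(\chi_a,(\sigma\cdot B)\phi_a)| \leq \|B\|_r\|\phi_a\|_{2r/(r-1)}\|\chi_a\|_{2r/(r-1)}$, which uses exactly the hypothesis $r > 3/2$ so that $2r/(r-1) \in [2, 6]$. Each mixed-norm factor is then interpolated between $L^2$ and $L^6$ via Lemma \ref{lemma.interpolation}, with exponents $(2r - 3)/(2r)$ and $3/(2r)$; the $L^6$ norm of $\phi_a$ is handled by the classical Sobolev inequality combined with diamagnetism, giving $\|\phi_a\|_6 \leq S\||p_A|\phi_a\|_2$, while the $L^6$ norm of $\chi_a$ is handled by the Pauli--Sobolev inequality (Lemma \ref{lemma.pauli.sobolev}), producing an additive $\|B\|_r^{2r/(2r-3)}\|\chi_a\|_2^2$ contribution that must later be absorbed into $\Omega$.

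For the resolvent factors I would use the spectral theorem. The bare bounds $\|\phi_a\|_2 \leq (m^2 + a)^{-1}\|\psi\|_2$ and $\||p_A|\phi_a\|_2 \leq (m^2 + a)^{-1/2}\|\psi\|_2$ (and the $\mathcal{P}_A$-analogue for $\chi_a$) already close the argument in the case $u = 0$, producing the clean bound $\Theta(s,r)\|B\|_r^{2sr/(2r-3)}\|\psi\|_2^2$ once everything is integrated in $a$. When $u > 0$, I would additionally split $\psi = P_{\leq k}\psi + P_{>k}\psi$ using the spectral projections of $p_A^2$ onto $[0,k]$ and $(k,\infty)$, and exploit the high-frequency bound $\|P_{>k}\psi\|_2 \leq k^{-u/2}\||p_A|^u\psi\|_2$; the resolvent then improves to $\|(p_A^2 + m^2 + a)^{-1}P_{>k}\psi\|_2 \leq (k + m^2 + a)^{-1}k^{-u/2}\||p_A|^u\psi\|_2$, which one balances against the low-frequency piece by optimizing $k$ as a function of $a$. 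The second form of the estimate (with $|\mathcal{P}_A|^u$) is obtained by performing the same spectral splitting on $\chi_a$ using the resolution of $\mathcal{P}_A^2$.

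Collecting all exponents and integrating in $a$, one finds that the integrand behaves at large $a$ like $a^{s - (4r-3)/(2r)}$, which is integrable precisely under the hypothesis $3 < 2r(1 - s)$; the refined condition $3(1 - u) < 2r(1 - s)$ is what ensures that the full $a$-integral is finite after the spectral split for $u > 0$, reconciling integrability at both endpoints. A final Young's inequality with a free parameter separates the resulting bound into the $\varepsilon$-term containing $\||p_A|^u\psi\|_2^{2s/u}/\|\psi\|_2^{2(s-u)/u}$ and the $\Omega$-term containing $\|B\|_r^{2sr/(2r-3)}\|\psi\|_2^2$. The main technical obstacle is the simultaneous bookkeeping of three interpolations — the H\"{o}lder split on $B$, the $L^2$--$L^6$ interpolation of the resolvents, and the spectral split of $\psi$ — in such a way that the Pauli--Sobolev remainder is channelled into $\Omega$ while the $\||p_A|^u\psi\|_2$-factor ends up with a small coefficient $\varepsilon$ and the correct scaling exponent $2s/u$.
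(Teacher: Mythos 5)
Your overall architecture matches the paper's: the resolvent representation of the fractional power, the Pauli identity collapsing the difference to $-\sigma\cdot B$, a three-factor H\"older split, $L^2$--$L^6$ interpolation of the two resolvents via Lemma~\ref{lemma.interpolation}, Sobolev plus diamagnetism for the magnetic resolvent, the Pauli--Sobolev inequality for the Pauli resolvent, and a final Young's inequality. The large-$a$ integrability condition you identify, $3(1-u) < 2r(1-s)$, is also the correct one. However, there are two genuine gaps in the $u>0$ case.

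First, the integral $\int_0^{\infty}(\cdot)\,a^s\,da$ does not converge at $a = 0$ after you insert the H\"older/interpolation/Sobolev bound: the integrand scales like $a^{s-2+3(1-u)/(2r)}$ there, and $s - 2 + 3(1-u)/(2r) > -1$ fails in general (e.g.\ large $r$, small $s$). The paper handles this by splitting the $a$-integral at a parameter $\alpha$ and using, on $(0,\alpha)$, only the trivial operator bound $\|(\,\cdot\,+\,m^2+a)^{-1}\|\leq a^{-1}$, giving $2C_s\alpha^s/s$; the refined resolvent estimate is used only on $(\alpha,\infty)$, and $\alpha$ is later chosen $\propto \|B\|_r^{2r/(2r-3)}$. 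Your claim that the spectral split of $\psi$ ``reconciles integrability at both endpoints'' is unsubstantiated, and in fact a $k$-dependent projection split does not produce decay as $a\to 0$ (the low-frequency piece of the resolvent is bounded but not small there), so the endpoint $a=0$ still needs a separate device.

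Second, and more importantly, your claim that ``the second form of the estimate (with $|\mathcal{P}_A|^u$) is obtained by performing the same spectral splitting on $\chi_a$'' misreads the structure. After interpolation, the $L^6$ factor of $\phi_a$ is bounded via diamagnetism by $\||p_A|^u\psi\|_2$ (up to the $a$-power), while the $L^6$ factor of $\chi_a$ is bounded via the Pauli--Sobolev inequality by $\||\mathcal{P}_A|^u\psi\|_2$. Both factors appear multiplicatively in every term of the integrand, so the intermediate bound unavoidably involves \emph{both} $\||p_A|^u\psi\|_2^{2s/u}$ and $\||\mathcal{P}_A}|^u\psi\|_2^{2s/u}$ (this is exactly Equation (2.18) of the paper). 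Choosing which resolvent to spectrally split does not remove one factor. The paper closes this gap by a bootstrap: it applies the just-derived mixed estimate with $u = s$ and $m = 0$ to relate $\||\mathcal{P}_A|^u\psi\|_2$ to $\||p_A|^u\psi\|_2$ up to a small multiple of itself and a $\|B\|_r^{2ur/(2r-3)}$ error, and isolates $\||\mathcal{P}_A|^u\psi\|_2$. Your proof plan omits this nontrivial self-application, and without it (or an equivalent device such as dropping to $u=0$ on the Pauli side, which would change the exponent constraint away from the stated $3(1-u) < 2r(1-s)$), the two single-variable forms of the estimate do not follow.

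Both gaps can be repaired within your framework, but they are not minor bookkeeping: the $\int_0^{\alpha}$ split and the bootstrap elimination are load-bearing steps in the paper's argument.
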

\begin{proof}
As a means of moving through the proof with the least obstruction possible, we will assume that both $\psi$ and $A$ are $C_0^{\infty}$ functions and that $\|\psi\|_2 = 1$. At the end one can recover the general case by homogeneity and a density argument. For a given positive self-adjoint operator $A$ and any $0 < \alpha < 1$, the following identity holds,
\begin{gather}
A^{\alpha} = C_{\alpha}\int_0^{\infty}\frac{A}{(A + a)}\frac{\,da}{a^{1 - \alpha}},
\end{gather}
where $C_{\alpha}$ is equal to $\sin(\pi\alpha)/\pi$. By using the identity for both $\mathcal{P}_A^2 + m^2$ and $p_A^2 + m^2$ we obtain
\begin{align}
&\left(\psi,\left(\left(\mathcal{P}_A^2 + m^2\right)^s - \left(p_A^2 + m^2\right)^s\right)\psi\right)\nonumber\\
&\qquad \qquad = C_s \int_0^{\infty}\left(\psi, [(p_A^2 + m^2 + a)^{-1}-(\mathcal{P}_A^2 + m^2 + a)^{-1}]\psi \right) a^s\,da.
\label{equation.first.integral.estimate.I}
\end{align}
We split now the integral in \eqref{equation.first.integral.estimate.I} as $\int_0^{\alpha} + \int_{\alpha}^{\infty}$, where $\alpha$ is a parameter to be chosen optimal later. (The choice will be $\alpha \approx \| B \|_r^{2r/(2r-3)}$.) For small $a$ we estimate the norm of each resolvent as $\leq a^{-1}$. This yields the bound
\begin{align}\label{equation.small.alpha.part-new}
\left| C_s \int_0^{\alpha}
\left(\psi, [(p_A^2 + m^2 + a)^{-1}-(\mathcal{P}_A^2 + m^2 + a)^{-1}]\psi \right) a^s \,da\right|
\leq \frac{2 C_s \alpha^s}{s}.
\end{align}
In the remaining part of the integral, we apply the resolvent identity to get
\begin{align}
C_s &\int_{\alpha}^{\infty}
\left(\psi, [(p_A^2 + m^2 + a)^{-1}-(\mathcal{P}_A^2 + m^2 + a)^{-1}]\psi \right) a^s \,da\nonumber \\
= & \, -C_s\int_{\alpha} ^{\infty}\left(\psi, (\mathcal{P}_A^2 + m^2 + a)^{-1}(\sigma\cdot B)(p_A^2 + m^2 + a)^{-1}\psi\right) a^{s}\,da.
\label{equation.first.estimate.difference}
\end{align}
Here we proceed to estimate the integrand in \eqref{equation.first.estimate.difference}. By means of H\"{o}lder's inequality,
\begin{align}
& \left|\left(\psi, (\mathcal{P}_A^2 + m^2 + a)^{-1}\sigma\cdot B(p_A^2 + m^2 + a)^{-1}\psi\right)\right| = \left|\left((\mathcal{P}_A^2 + m^2 + a)^{-1}\psi, \sigma\cdot B (p_A^2 + m^2 + a)^{-1}\psi\right)\right|\nonumber\\
\leq & \, \int_{\mathbb{R}^3}\left|(\mathcal{P}_A^2 + m^2 + a)^{-1}\psi(x)\right|\left|(p_A^2 + m^2 + a)^{-1}\psi(x)\right||B(x)|\,dx\nonumber\\
\leq & \, \|B\|_r\|(\mathcal{P}_A^2 + m^2 + a)^{-1}\psi\|_p\|(p_A^2 + m^2 + a)^{-1}\psi\|_p,
\label{equation.estimate.difference}
\end{align}
where $2/p + 1/r = 1$ and $2 \leq p < 6$. (The reason for this last limitation is that we want to use Lemmas \ref{lemma.interpolation} and \ref{lemma.pauli.sobolev}.) We then have, using the Interpolation Lemma \ref{lemma.interpolation},
\begin{align}
& \|(\mathcal{P}_A^2 + m^2 + a)^{-1}\psi\|_p\|(p_A^2 + m^2 + a)^{-1}\psi\|_p\nonumber\\
\leq & \, \|(\mathcal{P}_A^2 + m^2 + a)^{-1}\psi\|_2^{(6 - p)/2p}\|(\mathcal{P}_A^2 + m^2 + a)^{-1}\psi\|_6^{3(p - 2)/2p}\nonumber\\
& \times\|(p_A^2 + m^2 + a)^{-1}\psi\|_2^{(6 - p)/2p}\|(p_A^2 + m^2 + a)^{-1}\psi\|_6^{3(p - 2)/2p}\nonumber\\
\leq & \, a^{3/r - 2}\|(\mathcal{P}_A^2 + m^2 + a)^{-1}\psi\|_6^{3/2r}\|(p_A^2 + m^2 + a)^{-1}\psi\|_6^{3/2r},
\end{align}
where in the last line we used that $2/p + 1/r = 1$ and that $\|(\mathcal{P}_A^2 + m^2 + a)^{-1}\psi\|_2 \leq \|(\mathcal{P}_A^2 + m^2 + a)^{-1}\|\|\psi\|_2 = (a + m^2)^{-1} \leq a^{-1}$, and similarly for $\|(a^2 + p_A^2)^{-1}\psi\|_2$. We now bound the remaining terms as follows. By means of Sobolev's inequality and diamagnetism we have
\begin{align}
& \left\Vert(p_A^2 + m^2 + a)^{-1}\psi\right\Vert_6 = \left\Vert\left|(p_A^2 + m^2 + a)^{-1}\psi\right|\right\Vert_6 \leq S\left\Vert |p|\left|(p_A^2 + m^2 + a)^{-1}\psi\right|\right\Vert_2\nonumber\\
\leq & \, S\left\Vert|p_A|(p_A^2 + m^2 + a)^{-1}\psi\right\Vert_2 = S\left\Vert|p_A|^{1 - u}(p_A^2 + m^2 + a)^{-1}|p_A|^u\psi\right\Vert_2\nonumber\\
\leq & \, S\left\Vert|p_A|^{1 - u}(p_A^2 + m^2 + a)^{-1}\right\Vert\left\Vert|p_A|^u\psi\right\Vert_2 \leq a^{-(1 + u)/2}D_u S\left\Vert |p_A|^u\psi\right\Vert_2,
\label{equation.estimate.sobolev}
\end{align}
where $D_u$ is the maximum of $x^{1 - u}/(x^2 + 1)$ on $[0, \infty)$ (equal to $2^{-1}(1 - u)^{(1 - u)/2}(1 + u)^{(1 + u)/2}$). Regarding the term involving $\mathcal{P}_A$, by means of the Pauli-Sobolev Inequality, Lemma \ref{lemma.pauli.sobolev}, we get that for any $0 < \delta < 1$,
\begin{align}
& \, \left\Vert (\mathcal{P}_A^2 + m^2 + a)^{-1}\psi\right\Vert_6\nonumber\\
\leq & \, \left(S^2(1 - \delta)^{-1}\left\Vert|\mathcal{P}_A|(\mathcal{P}_A^2 + m^2 + a)^{-1}\psi\right\Vert_2^2 + \omega(\delta, r)\left\Vert B\right\Vert_{r}^{2r/(2r - 3)}\|(\mathcal{P}_A^2 + m^2 + a)^{-1}\psi\|_2^2\right)^{1/2}\nonumber\\
\leq & \, \left(a^{-(1 + u)}S^2(1 - \delta)^{-1}D_u^2\||\mathcal{P}_A|^u\psi\|_2^2 + a^{-2}\omega(\delta, r)\left\Vert B\right\Vert_{r}^{2r/(2r - 3)}\right)^{1/2},
\end{align}
where an estimate similar to \eqref{equation.estimate.sobolev} was made. Since an integral will have to be performed at the end, we further bound this from above using the inequality $(x + y)^{\gamma} \leq x^{\gamma} + y^{\gamma}$ ($0 < \gamma < 1, x > 0, y >0$), obtaining
\begin{gather}
\left\Vert (\mathcal{P}_A^2 + m^2 + a)^{-1}\psi\right\Vert_6^{3/2r}\leq
E\left\Vert|\mathcal{P}_A|^u\psi\right\Vert_2^{3/2r}a^{-3(1 + u)/4r} + F\|B\|_r^{3/2(2r - 3)}a^{-3/2r},
\end{gather}
where $E$ and $F$ are certain constants. By assembling the estimates performed so far we find that, for some constants $G$ and $H$,
\begin{align}\label{equation.almost.final.estimate}
& \left|\left(\psi, (a^2 + m^2+ \mathcal{P}_A^2)^{-1}\sigma\cdot B(a^2 + m^2+ p_A^2)^{-1}\psi\right)\right|\nonumber\\
\leq & \, \left( G\|B\|_r\||\mathcal{P}_A|^u\psi\|_2^{3/2r}a^{-2 + 3(1 - u)/2r} + H\|B\|_r^{(4r - 3)/2(2r - 3)}a^{-2 + 3(1 - u)/4r} \right)\||p_A|^u\psi\|_2^{3/2r}.
\end{align}
We then use inequality \eqref{equation.almost.final.estimate} to bound $C_s\left|\int_{\alpha}^{\infty}\left(\psi, (\mathcal{P}_A^2 + m^2 + a)^{-1}\sigma\cdot B(p_A^2 + m^2 + a)^{-1}\psi\right)a^s\,da\right|$.
This we combine with \eqref{equation.small.alpha.part-new},
finally finding, after performing the change of variables $\alpha = \|B\|_r^{2r/(2r - 3)}\beta$,
\begin{align}
& \left|\left(\psi, \left(\left(\mathcal{P}_A^2 + m^2\right)^s - \left(p_A^2 + m^2\right)^s\right)\psi\right)\right|\nonumber\\
\leq & \, \frac{C_{s}}{s}\|B\|_r^{2sr/(2r - 3)}\beta^{s}\nonumber\\
& \, + I\|B\|_r^{(2sr - 3u)/(2r - 3)}\||\mathcal{P}_A|^u\psi\|_2^{3/2r}\||p_A|^u\psi\|_2^{3/2r}\beta^{-1 + s + 3(1 - u)/2r}\nonumber\\
& \, + J\|B\|_r^{(4sr - 3u)/2(2r - 3)}\||p_A|^u\psi\|_2^{3/2r}\beta^{-1 + s + 3(1 - u)/4r}.
\label{inequality.general.form.1}
\end{align}
In the case we can set $u = 0$, the right side of inequality \eqref{inequality.general.form.1} reduces to
\begin{gather}
\left(\frac{C_s}{s}\beta^s + I\beta^{s + 3/2r - 1} + J\beta^{s + 3/4r - 1}\right)\|B\|_r^{2sr/(2r - 3)},
\end{gather}
which is the first part of the theorem. In the case $u > 0$, by using Young's inequality for the second and third terms in \eqref{inequality.general.form.1}, we find
\begin{gather}
\left|\left(\psi, \left(\left(\mathcal{P}_A^2 + m^2\right)^s - \left(p_A^2 + m^2\right)^s\right)\psi\right)\right| \leq \eta\left(\||\mathcal{P}_A|^u\psi\|_2^{2s/u} + \||p_A|^u\psi\|_2^{2s/u}\right) + \Lambda\|B\|_r^{2sr/(2r - 3)},
\label{inequality.general.form.2}
\end{gather}
where $\Lambda$ is an explicit function of $\eta > 0$, $\delta$, $\beta$, $r$, $s$, and $u$. We now go one step further to eliminate either of $\mathcal{P}_A$ or $p_A$ in the right side of \eqref{inequality.general.form.2}. We shall show how to eliminate just one of them, in the understanding that eliminating the other involves a similar procedure. By using inequality \eqref{inequality.general.form.2} for $u = s$ and $m = 0$ we find
\begin{align}
& \||\mathcal{P}_A|^{u}\psi\|_2^2 = \left(\psi, |\mathcal{P}_A|^{2u}\psi\right) = \left(\psi, \left(|\mathcal{P}_A|^{2u} - |p_A|^{2u}\right)\psi\right) + \left(\psi, |p_A|^{2u}\psi\right)\nonumber\\
\leq \, & \gamma\||\mathcal{P}_A|^u\psi\|_2^2 + \left(\gamma + 1\right)\||p_A|^u\psi\|_2^2 + \Upsilon\|B\|_r^{2ur/(2r - 3)},
\end{align}
where we will choose $0 < \gamma < 1$ and $\Upsilon$ is the constant playing the role of $\Lambda$ (Equation \eqref{inequality.general.form.2}). By isolating $\mathcal{P}_A$ on the right and using the inequality $(x + y)^{\rho} \leq 2^{\rho - 1}(x^{\rho} + y^{\rho})$ ($x, y > 0$, $\rho \geq 1$), we obtain
\begin{gather}
\||\mathcal{P}_A|^u\psi\|_2^{2s/u} \leq 2^{s/u - 1}\left(\frac{1 + \gamma}{1 - \gamma}\right)^{s/u}\||p_A|^u\psi\|_2^{2s/u} + 2^{s/u - 1}\Upsilon^{s/u}\|B\|_r^{2sr/(2r - 3)},
\end{gather}
and this is how we find that
\begin{align}
\left|\left(\psi, \left(\left(\mathcal{P}_A^2 + m^2\right)^s - \left(p_A^2 + m^2\right)^s\right)\psi\right)\right| \leq &\, \eta\left(2^{s/u - 1}\left(\frac{1 + \gamma}{1 - \gamma}\right)^{s/u} + 1\right)\||p_A|^u\psi\|_2^{2s/u}\nonumber\\
& \, + \left(2^{s/u - 1}\Upsilon^{s/u}\eta + \Lambda\right)\|B\|_r^{2sr/(2r - 3)},
\end{align}
and the result then follows by calling
\begin{gather}
\varepsilon \equiv \eta\left(2^{s/u - 1}\left(\frac{1 + \gamma}{1 - \gamma}\right)^{s/u} + 1\right).
\end{gather}
\end{proof}
We will now provide an analogous theorem for the power $s = 1$. We shall need the following estimate before proceeding.
\begin{theorem}[Fractional Sobolev Inequality]
\label{theorem.fractional.sobolev}
For $N \geq 1$, $r \geq 2$, and tempered distributions $f$ on $\mathbb{R}^N$ such that their Fourier transform, $\widehat{f}$, is in $L_{\text{loc}}^1(\mathbb{R}^N)$, and $\int_{\mathbb{R}^N}\left|\xi\right|^{(1 - 2/r)N}|\widehat{f}(\xi)|^2\,d\xi < \infty$,
\begin{gather}
\|f\|_r \leq \mathcal{S}_{N, r}\||p|^{(1/2 - 1/r)N}f\|_{2},
\end{gather}
where
\begin{gather}
\mathcal{S}_{N, r} \equiv 2^{-N(1/2 - 1/r)}\pi^{-N(1/2 - 1/r)/2}\left[\frac{\Gamma(N/r)}{\Gamma[N(1 - 1/r)]}\right]^{1/2}\left[\frac{\Gamma(N)}{\Gamma(N/2)}\right]^{1/2 - 1/r}.
\end{gather}
\end{theorem}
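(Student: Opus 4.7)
The plan is to deduce this fractional Sobolev inequality from Lieb's sharp conformal Hardy-Littlewood-Sobolev (HLS) inequality via an $L^r$-$L^{r'}$ duality argument. The key coincidence that makes the plan succeed is that, upon dualizing, the resulting Riesz kernel exponent $N-2s$ and integrability exponent $r'$ fall exactly into Lieb's conformal regime, whose sharp constant is known in closed form. First I would write $\|f\|_r = \sup_{\|\phi\|_{r'}=1}\left|\int f\bar\phi\, dx\right|$ and, for each admissible $\phi$, apply Plancherel's theorem followed by Cauchy-Schwarz with the weight $|\xi|^s$ to obtain
$$\left|\int f\bar\phi\, dx\right| = \left|\int\widehat f\overline{\widehat\phi}\, d\xi\right| \leq \left(\int|\xi|^{2s}|\widehat f(\xi)|^2\, d\xi\right)^{1/2}\left(\int|\xi|^{-2s}|\widehat\phi(\xi)|^2\, d\xi\right)^{1/2}.$$
The first factor equals $\||p|^s f\|_2$, which is finite by hypothesis with $s = N(1/2-1/r)$; the whole problem then reduces to controlling the second factor in terms of $\|\phi\|_{r'}$.

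Next, I would recognize the remaining weighted integral as a Riesz bilinear form. Since the inverse Fourier transform of $|\xi|^{-2s}$ is $\gamma_{N,2s}\,|x|^{-(N-2s)}$ with $\gamma_{N,2s} = \Gamma((N-2s)/2)/[2^{2s}\pi^{N/2}\Gamma(s)]$, Parseval's identity yields
$$\int|\xi|^{-2s}|\widehat\phi(\xi)|^2\, d\xi = \gamma_{N,2s}\int_{\mathbb{R}^N}\int_{\mathbb{R}^N}\frac{\overline{\phi(x)}\phi(y)}{|x-y|^{N-2s}}\, dx\, dy.$$
Setting $\lambda \equiv N-2s = 2N/r$, one verifies $r' = 2N/(2N-\lambda)$, which places us exactly in the conformal case of HLS. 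Lieb's sharp inequality (Annals of Math., 1983) then bounds the double integral by $C_\lambda\|\phi\|_{r'}^2$, with
$$C_\lambda = \pi^{N/r}\frac{\Gamma(N(1/2-1/r))}{\Gamma(N(1-1/r))}\left[\frac{\Gamma(N)}{\Gamma(N/2)}\right]^{1-2/r}.$$
Combining these estimates and taking the supremum over $\phi$ produces $\|f\|_r \leq \sqrt{\gamma_{N,2s}C_\lambda}\,\||p|^s f\|_2$.

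The main obstacle I anticipate is the routine but error-prone bookkeeping needed to verify that $\sqrt{\gamma_{N,2s}C_\lambda}$ coincides with the explicit $\mathcal{S}_{N,r}$ of the theorem. The factor $\Gamma(N(1/2-1/r))$ cancels between $\gamma_{N,2s}$ and $C_\lambda$, after which the remaining powers of $2$ and $\pi$ and the Gamma factors $\Gamma(N/r)$, $\Gamma(N(1-1/r))$, $\Gamma(N)$, $\Gamma(N/2)$ consolidate to the stated form; as a sanity check, the case $N=3$, $r=6$, $s=1$ should recover the classical Aubin-Talenti constant $(2/\pi)^{2/3}/\sqrt{3}$. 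A final density argument, approximating $|p|^s f\in L^2$ by Schwartz functions and using the continuity of the Riesz potential $|p|^{-s}$ on the relevant spaces, then extends the inequality from Schwartz functions to the full class of admissible tempered distributions in the hypothesis.
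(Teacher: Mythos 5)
The paper does not prove this theorem; it is stated without proof and referred to \cite[Eq.~(1.1)]{CFW}. Your derivation is the standard one for the sharp constant and is, in substance, the same route that the cited reference relies on: dualize $\|f\|_r$, Cauchy--Schwarz against the weight $|\xi|^s$ with $s=N(1/2-1/r)$, identify the dual quantity as the Riesz energy with exponent $\lambda=N-2s=2N/r$, and note that $r'=2N/(2N-\lambda)$ places you in the diagonal (conformal) case of Lieb's sharp Hardy--Littlewood--Sobolev inequality. I have checked the bookkeeping: $(N-2s)/2=N/r$ gives $\gamma_{N,2s}=\Gamma(N/r)\bigl[2^{N(1-2/r)}\pi^{N/2}\Gamma(N(1/2-1/r))\bigr]^{-1}$, the factor $\Gamma(N(1/2-1/r))$ cancels against the same factor in $C_\lambda$, and
\begin{equation*}
\sqrt{\gamma_{N,2s}\,C_\lambda}=2^{-N(1/2-1/r)}\pi^{-N(1/2-1/r)/2}\left[\frac{\Gamma(N/r)}{\Gamma(N(1-1/r))}\right]^{1/2}\left[\frac{\Gamma(N)}{\Gamma(N/2)}\right]^{1/2-1/r}=\mathcal S_{N,r},
\end{equation*}
and your $N=3$, $r=6$ sanity check indeed returns $(2/\pi)^{2/3}/\sqrt3$. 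Two caveats worth being explicit about. First, the entire depth of the theorem is outsourced to Lieb's 1983 sharp HLS inequality, so this is a reduction to a known sharp inequality rather than an independent argument -- which is exactly the logical status the paper assigns it by citing rather than proving. Second, the identity $\|f\|_r=\sup_{\|\phi\|_{r'}=1}\bigl|\int f\bar\phi\bigr|$ presupposes that $f$ is at least a locally integrable function; for the tempered distributions covered by the hypotheses this needs the density/continuity step you sketch at the end, and it should be stated as the opening move (approximate by Schwartz $f$, pass to the limit) rather than an afterthought, so that the supremum is an a priori bound and not a circular use of $f\in L^r$.
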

The space where the distributions $f$ in Theorem \ref{theorem.fractional.sobolev} belong is known as the homogenous Sobolev space $\mathring{H}^{(1/2 - 1/r)N}(\mathbb{R}^N)$. The constant $\mathcal{S}_{N, r}$ is optimal. Theorem \ref{theorem.fractional.sobolev} appears in, for example, \cite[Equation (1.1)]{CFW}.
\begin{theorem}[Estimate I: The Case $s = 1$]
\label{theorem.estimate.I.s.1}
Let $\varepsilon > 0$, $u \in (0, 1]$, and $r \in [3/(2u), \infty)$. For functions $\psi \in L^2(\mathbb{R}^3)$ and locally $L^1(\mathbb{R}^3)$ functions $A$ with values in $\mathbb{R}^3$, we have that
\begin{gather}
\left|\left(\psi, \left(\mathcal{P}_A^2 - p_A^2\right)\psi\right)\right| \leq
\begin{cases}
\displaystyle\varepsilon\frac{\||p|^u\left|\psi\right|\|_2^{2/u}}{\|\psi\|_2^{2(1 - u)/u}} + \mathcal{T}(u, r, \varepsilon)\|B\|_r^{2r/(2r - 3)}\|\psi\|_2^2,\vspace{3mm}\\
\displaystyle\|B\|_{\infty}\|\psi\|_2^2,
\end{cases}
\end{gather}
where $B \equiv \nabla \times A$,
\begin{gather}
\mathcal{T}(u, r, \varepsilon) \equiv \frac{2r - 3}{2r}\left(\frac{3}{2r\varepsilon}\right)^{3/(2r - 3)}\mathcal{S}_{3, 6/(3 - 2u)}^{6/[u(2r - 3)]},
\end{gather}
and $\mathcal{S}$ is the constant from the Fractional Sobolev Inequality, Theorem \ref{theorem.fractional.sobolev}.
\end{theorem}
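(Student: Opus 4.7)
The key simplification relative to the fractional Estimate I is the algebraic identity $\mathcal{P}_A^2 = p_A^2 - \sigma \cdot B$, which is local and collapses the quadratic form difference to a single Pauli interaction term:
\begin{equation*}
\left|\left(\psi, (\mathcal{P}_A^2 - p_A^2)\psi\right)\right| = \left|(\psi, \sigma \cdot B \psi)\right| \leq \int_{\mathbb{R}^3} |B(x)| |\psi(x)|^2 \, dx,
\end{equation*}
where the pointwise identity $|\sigma \cdot B \psi| = |B||\psi|$ is the same one already exploited in the proof of Lemma \ref{lemma.pauli.sobolev}. In particular, no resolvent expansion of the type used in Theorem \ref{theorem.estimate.I} is needed. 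The second bound of the theorem is then immediate from a single application of H\"{o}lder: $\int |B||\psi|^2 \, dx \leq \|B\|_\infty \|\psi\|_2^2$.

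For the first bound with $r < \infty$, my plan is to chain four ingredients. First, H\"{o}lder with dual exponents $r$ and $r/(r-1)$ gives $\int |B||\psi|^2 \leq \|B\|_r \|\psi\|_{2r/(r-1)}^2$. Second, the hypothesis $r \geq 3/(2u)$ is exactly equivalent to $2r/(r-1) \leq 6/(3-2u)$, so the General Interpolation Lemma \ref{lemma.interpolation} applies between $L^2$ and $L^{6/(3-2u)}$ with parameter $\theta = 3/(2ur) \in (0, 1]$. Third, the Fractional Sobolev Inequality (Theorem \ref{theorem.fractional.sobolev}) at $q = 6/(3-2u)$ has exactly matching differentiability $(1/2 - (3-2u)/6)\cdot 3 = u$, so it produces $\|\psi\|_{6/(3-2u)} \leq \mathcal{S}_{3,6/(3-2u)} \||p|^u |\psi|\|_2$ (applied to $|\psi|$, using $\|\psi\|_q = \||\psi|\|_q$). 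Assembling these steps yields
\begin{equation*}
\int |B||\psi|^2 \, dx \leq \|B\|_r \, \mathcal{S}_{3,6/(3-2u)}^{3/(ur)} \, \|\psi\|_2^{2-3/(ur)} \, \||p|^u|\psi|\|_2^{3/(ur)}.
\end{equation*}

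By homogeneity (both sides of the target inequality scale as $\|\psi\|_2^2$) it suffices to establish the bound when $\|\psi\|_2 = 1$. The final step is then a weighted Young inequality with conjugate exponents $p = 2r/3$ and $p' = 2r/(2r-3)$, where the choice of $p$ is forced by the requirement that $A^p = \||p|^u|\psi|\|_2^{2/u}$ when $A = \||p|^u|\psi|\|_2^{3/(ur)}$. Tuning the Young weight so that the coefficient of the kinetic term equals exactly $\varepsilon$ pins down the magnetic coefficient uniquely as $\mathcal{T}(u, r, \varepsilon)$: the factor $\varepsilon^{-3/(2r-3)}$ arises from the Young ratio $p'/p = 3/(2r-3)$, while the exponent $6/[u(2r-3)]$ on the Sobolev constant comes from raising $\mathcal{S}^{3/(ur)}$ to the power $p' = 2r/(2r-3)$. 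Undoing the normalization by scaling then recovers the $\|\psi\|_2^2$ on the magnetic term and the $\|\psi\|_2^{2(1-u)/u}$ in the denominator of the kinetic term. The only obstacle is careful constant bookkeeping: no individual step is analytically subtle, but the four ingredients must be combined in just the right order and with just the right Young exponents so that the closed-form expression for $\mathcal{T}$ appears.
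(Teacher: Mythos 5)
Your proof is correct and follows essentially the same route as the paper: decompose $\mathcal{P}_A^2 - p_A^2 = -\sigma\cdot B$, bound by H\"older, interpolate between $L^2$ and $L^{6/(3-2u)}$, apply the fractional Sobolev inequality to $|\psi|$, and finish with a weighted Young inequality. The only cosmetic difference is that the paper works with auxiliary parameters $a \in (1,3]$, $p \in [1,a]$ and substitutes $u = 3/2 - 3/(2a)$, $r = p/(p-1)$ only at the end, whereas you express everything directly in $u$ and $r$ from the start.
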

\begin{proof}
As before, we assume that $\|\psi\|_2 = 1$ and that $\psi \in C_0^{\infty}$. Let $1 < a \leq 3$. By noting that $\mathcal{P}_A^2 = p_A^2 - \sigma\cdot B$, by H\"{o}lder's inequality, with $1 \leq p \leq a$, the Interpolation Lemma \ref{lemma.interpolation}, and Theorem \ref{theorem.fractional.sobolev}, we find that
\begin{align}
& \left|\left(\psi, \left(\mathcal{P}_A^2 - p_A^2\right)\psi\right)\right| \leq \int_{\mathbb{R}^3}\left|\psi\right|^2|B|\,dx \leq \|\psi\|_{2p}^2\|B\|_{p/(p - 1)} \leq \|\psi\|_{2a}^{(p - 1)2a/\left[(a - 1)p\right]}\|B\|_{p/(p - 1)}\nonumber\\
\leq & \, \mathcal{S}_{3, 2a}^{(p - 1)2a/\left[(a - 1)p\right]}\||p|^{3/2 - 3/(2a)}\left|\psi\right|\|_{2}^{(p - 1)2a/\left[(a - 1)p\right]}\|B\|_{p/(p - 1)}.
\label{inequality.first.estimate.proof.estimate.I.case.s.1}
\end{align}
We now call $u \equiv 3/2 - 3/(2a)$, a number in $(0, 1]$, and $r \equiv p/(p - 1)$, which is in $[3/(2u), \infty]$, and obtain, by means of Young's inequality,
\begin{gather}
\mathcal{S}_{3, 6/(3 - 2u)}^{3/\left(ur\right)}\||p|^u\left|\psi\right|\|_{2}^{3/\left(ur\right)}\|B\|_{r} \leq \varepsilon\||p|^u\left|\psi\right|\|_{2}^{2/u} + \frac{2r - 3}{2r}\left(\frac{3}{2r\varepsilon}\right)^{3/(2r - 3)}\mathcal{S}_{3, 6/(3 - 2u)}^{6/[u(2r - 3)]}\|B\|_r^{2r/(2r - 3)},
\end{gather}
which proves the theorem for $3/(2u) \leq r < \infty$. The case $r = \infty$ can be obtained directly from the second inequality in \eqref{inequality.first.estimate.proof.estimate.I.case.s.1}.
\end{proof}
We shall concentrate now on providing an estimate for the absolute value of the expectation of the difference between $(p_A^2 + m^2)^s - m^{2s}$ and $(p^2 + m^2)^s - m^{2s}$ in the fractional case $0 < s < 1$, in complete analogy with the first estimate. It will be called ``Estimate II.'' Calculations will be similar to those encountered in the previous theorem, and therefore some steps will be developed in less detail than before. We will encounter again certain functions of several variables, whose explicit form we shall relegate to the appendix, as in Estimate I. We remark that Estimate II shall only be given in the fractional case, as the case $s = 1$ is not required in our work.
\begin{theorem}[Estimate II]
\label{theorem.estimate.II}
Let $s, u, r, m$ be four numbers with the properties that $0 < s < 1$, $0 \leq u \leq 1$, $3/2 < r < 3$, $m \geq 0$, and $3(1 - u) < 2r(1 - s)$. Let $A$ be a locally $L^1$ function that vanishes at infinity and such that $\nabla \times A \equiv B \in L^r$ and $\nabla\cdot A = 0$, and $\psi$ be a function in $Q(|p|^{2s})\cap Q(|p_A|^{2s})$. Then, if $u = 0$,
\begin{gather}
\left|\left(\psi, \left(\left(p_A^2 + m^2\right)^s - \left(p^2 + m^2\right)^s\right)\psi\right)\right| \leq \mathcal{I}(s, r)\left\Vert B\right\Vert_r^{2sr/(2r - 3)}\left\Vert\psi\right\Vert_2^2,
\end{gather}
where $\mathcal{I}$ is an explicit positive function of both $s$ and $r$. If now $u > 0$, the following two inequalities hold
\begin{gather}
\left|\left(\psi, \left(\left(p_A^2 + m^2\right)^s - \left(p^2 + m^2\right)^s\right)\psi\right)\right| \leq
\begin{cases}
\displaystyle\varepsilon\frac{\||p|^u\psi\|_2^{2s/u}}{\|\psi\|_2^{2(s - u)/u}} + \mathcal{J}(s, u, r, \varepsilon)\|B\|_r^{2sr/(2r - 3)}\|\psi\|_2^2\vspace{3mm},\\
\displaystyle\varepsilon\frac{\||p_A|^u\psi\|_2^{2s/u}}{\|\psi\|_2^{2(s - u)/u}} + \mathcal{J}(s, u, r, \varepsilon)\|B\|_r^{2sr/(2r - 3)}\|\psi\|_2^2,
\end{cases}
\end{gather}
where $\mathcal{J}$ is an explicit positive function of $s$, $u$, $r$, and also of the additional variable $\varepsilon >0$. $\mathcal{J}$ diverges as $\varepsilon \to 0$.
\end{theorem}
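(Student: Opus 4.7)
The plan is to follow the same overall strategy as in Theorem~\ref{theorem.estimate.I}: represent the fractional powers via the integral formula $X^s = C_s\int_0^\infty X(X+a)^{-1}a^{s-1}\,da$, subtract, split the $a$-integral at a parameter $\alpha$, and bound the small-$a$ and large-$a$ contributions separately. After assuming $\|\psi\|_2 = 1$ with $\psi\in C_0^\infty$ and $A$ smooth (to be removed by density and homogeneity), the $[0,\alpha]$ piece is controlled trivially by $\|(X+m^2+a)^{-1}\|\leq a^{-1}$ applied to both resolvents, yielding a contribution of order $\alpha^s/s$. On $[\alpha,\infty)$ we insert the resolvent identity $(p^2+m^2+a)^{-1}-(p_A^2+m^2+a)^{-1}=(p^2+m^2+a)^{-1}(p_A^2-p^2)(p_A^2+m^2+a)^{-1}$, and use the Coulomb gauge $\nabla\cdot A=0$ to write $p_A^2-p^2=-2A\cdot p+A^2$.

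The main technical difference from Estimate~I is that the ``perturbation'' is now the first-order differential operator $-2A\cdot p+A^2$ rather than the bounded multiplication operator $\sigma\cdot B$. To keep the final bound expressed in terms of $\|B\|_r$, I would first apply H\"older to the $A\cdot p$ piece in the form $|(G_0\psi,A\cdot p\,G_A\psi)|\leq\|A\|_q\,\|G_0\psi\|_{q_1}\,\|pG_A\psi\|_{q_2}$ with $1/q+1/q_1+1/q_2=1$, where $G_0\equiv(p^2+m^2+a)^{-1}$ and $G_A\equiv(p_A^2+m^2+a)^{-1}$. The Biot--Savart representation of $A$ combined with the Hardy--Littlewood--Sobolev inequality gives $\|A\|_{3r/(3-r)}\leq C_r\|B\|_r$ precisely when $1<r<3$, which is the source of the new upper restriction $r<3$; this forces $q=3r/(3-r)$. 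The factors $\|G_0\psi\|_{q_1}$ and $\|pG_A\psi\|_{q_2}$ are then controlled by interpolation between the $L^2$-bound $a^{-1}$ and an $L^6$-bound obtained from Sobolev, diamagnetism, and the operator estimate $\|\,|p_A|^{1-u}(p_A^2+m^2+a)^{-1}\|\leq D_u\,a^{-(1+u)/2}$, picking up a negative power of $a$ together with $\||p|^u\psi\|_2$ (or $\||p_A|^u\psi\|_2$) raised to a suitable exponent. The $A^2$ piece is treated analogously via $\|A^2\|_{q/2}=\|A\|_q^2$.

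Once the integrand on $[\alpha,\infty)$ is bounded by a sum of terms of the form $a^{-\beta}\|B\|_r^\gamma\||p|^u\psi\|_2^\delta$ with explicit exponents, the $a$-integral converges exactly under the hypothesis $3(1-u)<2r(1-s)$, which is where that condition enters. Choosing $\alpha$ proportional to $\|B\|_r^{2r/(2r-3)}$ matches the scaling of the two contributions and produces a bound of the shape $\mathcal{C}_1\|B\|_r^{2sr/(2r-3)}+\mathcal{C}_2\|B\|_r^{\star}\||p|^u\psi\|_2^{3/r}$ for an explicit exponent $\star$. A final application of Young's inequality converts the product into $\varepsilon\||p|^u\psi\|_2^{2s/u}+\mathcal{J}\,\|B\|_r^{2sr/(2r-3)}$, proving the first of the two $u>0$ inequalities; the $u=0$ statement is recovered by skipping the Young step. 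The variant with $\||p_A|^u\psi\|_2$ is obtained by the bootstrap used at the end of the proof of Theorem~\ref{theorem.estimate.I}: apply the just-established $|p|^u$-form of the estimate with $s$ replaced by $u$ and $m=0$, use $(\psi,|p|^{2u}\psi)=(\psi,(|p|^{2u}-|p_A|^{2u})\psi)+(\psi,|p_A|^{2u}\psi)$, and absorb a small multiple of $\||p|^u\psi\|_2^2$ on the left.

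The main obstacle I expect is the simultaneous calibration of the H\"older exponents $(q,q_1,q_2)$, the interpolation parameters for the two resolvents, and the cut-off $\alpha$, so that (i) $\|A\|_q$ is controllable by $\|B\|_r$ via Biot--Savart, which dictates $r<3$ and $q=3r/(3-r)$; (ii) the large-$a$ integral converges, which dictates $3(1-u)<2r(1-s)$; and (iii) the exponent on $\||p|^u\psi\|_2$ comes out to exactly $2s/u$ so that Young's inequality yields the claimed form. The presence of the derivative $p$ in the perturbation, absent from Estimate~I, is what introduces the extra Sobolev estimate on $pG_A\psi$ and is ultimately responsible for the new upper bound on $r$ in the hypotheses.
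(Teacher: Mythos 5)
Your strategy mirrors the paper's step for step: the $C_s$ integral representation, the $\alpha$-split with the trivial resolvent bound on $[0,\alpha]$, the resolvent identity on $[\alpha,\infty)$, H\"older combined with the rotor Sobolev inequality (indeed the source of the new restriction $r<3$) and $L^2$--$L^6$ interpolation of the resolvent factors, the choice $\alpha \propto \|B\|_r^{2r/(2r-3)}$, Young's inequality for $u>0$, and the closing bootstrap to pass from $\||p|^u\psi\|_2$ to $\||p_A|^u\psi\|_2$. Where you diverge from the paper is the splitting of $p_A^2 - p^2$: you take $p_A^2-p^2=-2A\cdot p+A^2$ (using Coulomb gauge), whereas the paper uses the purely algebraic identity $p_A^2-p^2=-p_A\cdot A - A\cdot p$, which has two virtues: it is linear in $A$ in both terms, so each contributes exactly one factor $\|B\|_r$ after the rotor Sobolev estimate, and each momentum operator can be moved by self-adjointness onto the resolvent it matches, giving $\|p_A G_A\psi\|_2$ and $\|pG_0\psi\|_2$ respectively, where $G_0=(p^2+m^2+a)^{-1}$ and $G_A=(p_A^2+m^2+a)^{-1}$.

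This matching is where your proposal has a real, if fixable, gap. You propose to estimate $|(G_0\psi,\,A\cdot p\,G_A\psi)|\leq\|A\|_q\,\|G_0\psi\|_{q_1}\,\|pG_A\psi\|_{q_2}$ and control $\|pG_A\psi\|_{q_2}$ by interpolation against the operator bound $\|\,|p_A|^{1-u}(p_A^2+m^2+a)^{-1}\|\leq D_u\,a^{-(1+u)/2}$. But that bound controls $p_A G_A$, not $pG_A$; and diamagnetism gives $\|\,|p|\,|G_A\psi|\,\|_2\leq\|p_A G_A\psi\|_2$, which is not $\|pG_A\psi\|_2$. As written, the factor $\|pG_A\psi\|_{q_2}$ is not bounded by any tool you have available. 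The cure is precisely the paper's move: use the self-adjointness of $A\cdot p$ (which in your decomposition requires Coulomb gauge, and in the paper's is automatic) to rewrite $(G_0\psi,A\cdot p\,G_A\psi)=(A\cdot pG_0\psi, G_A\psi)$, so that the free $p$ lands on the free resolvent $G_0$. Once you make this correction, the remaining $A^2$ term is a harmless extra piece: a dimensional count shows that $\|A^2\|_{q/2}=\|A\|_q^2\lesssim\|B\|_r^2$ still assembles to $\|B\|_r^{2sr/(2r-3)}$ once $\alpha\propto\|B\|_r^{2r/(2r-3)}$, and its $a$-integral converges under an even weaker condition than $3(1-u)<2r(1-s)$ since it carries no derivative. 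So your route can be made to work, but it is an extra term's worth of bookkeeping that the paper's decomposition avoids outright, and the misaligned H\"older step needs to be repaired first.
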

\begin{proof}
For simplicity we assume as before that both $A$ and $\psi$ are $C_0^{\infty}$ functions and that $\|\psi\|_2 = 1$; the general case follows by replacing $\psi$ by $\psi/\|\psi\|_2$ and density. After performing a resolvent expansion for the difference between the operators in question, namely $(p_A^2 + m^2)^s$ and $(p^2 + m^2)^s$, and using the fact that $p_A^2 - p^2 = -p_A\cdot A - A\cdot p$, we find, by essentially the same arguments as in the previous theorem,
\begin{align}
& \left|\left(\psi, \left(\left(p_A^2 + m^2\right)^s - \left(p^2 + m^2\right)^s\right)\psi\right)\right|\nonumber\\
 \leq &\, \frac{C_s \alpha^s}{s} + C_s\int_{\alpha}^{\infty}\left|\left(\psi, \left(p_A^2 + m^2 + a\right)^{-1}(p_A\cdot A)\left(p^2 + m^2 + a\right)^{-1}\psi\right)\right|a^s\,da\nonumber\\
& \, + C_s\int_{\alpha}^{\infty}\left|\left(\psi, \left(p_A^2 + m^2 + a\right)^{-1}(A\cdot p)\left(p^2 + m^2 + a\right)^{-1}\psi\right)\right|a^s\,da,
\label{second.section.last.term.first.estimate}
\end{align}
where $\alpha > 0$ is a constant to be fixed later, and $C_s$ has the same meaning as in the previous proof. The absolute value of the expectation in the second term above may be bounded above as
\begin{align}
&\left|\left(\psi, \left(p_A^2 + m^2 + a\right)^{-1}(p_A\cdot A)\left(p^2 + m^2 + a\right)^{-1}\psi\right)\right|\nonumber\\
&\qquad\qquad=  \left|\left(p_A\left(p_A^2 + m^2 + a\right)^{-1}\psi, A\left(p^2 + m^2 + a\right)^{-1}\psi\right)\right|\nonumber\\
&\qquad\qquad\leq  \, \left\Vert p_A\left(p_A^2 + m^2 + a\right)^{-1}\psi\right\Vert_2\left\Vert A\right\Vert_p\left\Vert (p^2 + m^2 + a)^{-1}\psi\right\Vert_q\nonumber\\
&\qquad\qquad=  \, \left\Vert |p_A|\left(p_A^2 + m^2 + a\right)^{-1}\psi\right\Vert_2\left\Vert A\right\Vert_p\left\Vert (p^2 + m^2 + a)^{-1}\psi\right\Vert_q,
\label{second.section.second.term}
\end{align}
where $2 < q < 6$ and $p^{-1} + q^{-1} = 1/2$. The first term in this last expression, \eqref{second.section.second.term}, has already been encountered in the previous proof -- by repeating what amounts to the same calculations we find
\begin{gather}
\left\Vert|p_A|(p_A^2 + m^2 + a)^{-1}\psi\right\Vert_2 \leq D_u\|A\|_p\|(p^2 + m^2 + a)^{-1}\psi\|_q\||p_A|^u\psi\|_2 a^{-(1 + u)/2}
\end{gather}
for any $0 \leq u \leq 1$, where $D_u$ is as in the previous proof. The last term in \eqref{second.section.second.term} has also already been dealt with in the last proof, and by repeating those arguments we find
\begin{gather}
\|(p^2 + m^2 + a)^{-1}\psi\|_q \leq S^{3(q - 2)/2q}D_u^{3(q - 2)/2q}\||p|^u\psi\|_2^{3(q - 2)/2q}a^{-3(1 - u)/q - (1 + 3u)/4},
\end{gather}
where again $S$ is the constant in the classical Sobolev inequality (see Lemma \ref{lemma.pauli.sobolev}). Since $2 < q < 6$, we have that $3 < p < \infty$, and therefore we may use the general Sobolev inequality for rotors alluded to above (see the appendix),
\begin{gather}
\|A\|_p \leq N_{3p/(3 + p)}\|B\|_{3p/(3 + p)}.
\end{gather}
Furthermore, we would like to express everything in terms of a single variable, namely $3p/(3 + p)$, that we shall call $r$. In this way we obtain that the entire term \eqref{second.section.second.term} is bounded above by
\begin{gather}
D_u^{3/r}S^{(3 - r)/r}N_r\left\Vert|p_A|^u\psi\right\Vert_2\left\Vert|p|^u\psi\right\Vert_2^{(3 - r)/r}\left\Vert B\right\Vert_r a^{-(5r - 6)(1 - u)/4r - (3 + 5u)/4}
\label{equation.estimate.I.first.bound}
\end{gather}
for all $3/2 < r < 3$. In a similar way, as regards the last term in \eqref{second.section.last.term.first.estimate}, the absolute value appearing there can be bounded as
\begin{align}
&\left|\left(\psi, \left(p_A^2 + m^2 + a\right)^{-1}(A\cdot p)\left(p^2 + m^2 + a\right)^{-1}\psi\right)\right|\nonumber\\
= & \left|\left(\left(p_A^2 + m^2 + a\right)^{-1}\psi, \left(A\cdot p\right)\left(p^2 + m^2 + a\right)^{-1}\psi\right)\right|\nonumber\\
\leq & \, \left\Vert p(p^2 + m^2 + a)^{-1}\psi\right\Vert_2\left\Vert A\right\Vert_p\left\Vert\left(p_A^2 + m^2 + a\right)^{-1}\psi\right\Vert_q\nonumber\\
= & \, \left\Vert |p|(p^2 + m^2 + a)^{-1}\psi\right\Vert_2\left\Vert A\right\Vert_p\left\Vert\left(p_A^2 + m^2 + a\right)^{-1}\psi\right\Vert_q,
\end{align}
and therefore essentially the same computations as just done can be performed, and so we find that this is further bounded by
\begin{gather}
D_u^{3/r}S^{(3 - r)/r}N_r\left\Vert|p|^u\psi\right\Vert_2\left\Vert|p_A|^u\psi\right\Vert_2^{(3 - r)/r}\left\Vert B\right\Vert_r a^{-(5r - 6)(1 - u)/4r - (3 + 5u)/4}.
\label{equation.estimate.I.second.bound}
\end{gather}
By using then the estimates \eqref{equation.estimate.I.first.bound} and \eqref{equation.estimate.I.second.bound} in \eqref{second.section.last.term.first.estimate} and optimizing over $\alpha$ we find
\begin{align}
\left|\left(\psi, \left(\left(p_A^2 + m^2\right)^s - \left(p^2 + m^2\right)^s\right)\psi\right)\right| \leq & \, \mathcal{E}_{s, u, r}\left(\||p|^u\psi\|_2\||p_A|^u\psi\|_2\right)^{2s(3 - r)/\left[2r - 3(1 - u)\right]}\nonumber\\
&\times\left(\||p_A|^u\psi\|_2^{(2r - 3)/r} + \||p|^u\psi\|_2^{(2r - 3)/r}\right)^{2rs/\left[2r - 3(1 - u)\right]}\nonumber\\
&\times\|B\|_r^{2rs/\left[2r - 3(1 - u)\right]},
\end{align}
where $\mathcal{E}$ is an explicit function of $s, u, r$, which will not be written out, so as not to unnecessarily obscure the main ideas in the proof. By setting $u = 0$ we then obtain the first result of the theorem. If $u > 0$, we obtain, by first splitting the term in the middle using $(x + y)^{\gamma} \leq x^{\gamma} + y^{\gamma}$ for $x, y > 0$, $0 < \gamma < 1$, and then using Young's inequality,
\begin{gather}
\left|\left(\psi, \left(\left(p_A^2 + m^2\right)^s - \left(p^2 + m^2\right)^s\right)\psi\right)\right| \leq \xi\left(\||p|^u\psi\|_2^{2s/u} + \||p_A|^u\psi\|_2^{2s/u}\right) + \mathcal{F}_{s, u, r, \xi}\|B\|_r^{2sr/(2r - 3)},
\end{gather}
where $\mathcal{F}$ is again an explicit function of $s, u, r$ and $\xi > 0$. By eliminating either $|p|$ or $|p_A|$, in exactly the same way as was done in the previous theorem, the final result is obtained, where a new variable $0 < \delta < 1$ is introduced and $\varepsilon$ is defined as $\xi\left[1 + 2^{s/u - 1}\delta^{s/u}(1 - \delta)^{-s/u}\right]$.
\end{proof}
Having finished with the proofs of Estimates I and II, we shall provide now some consequences of the estimates that will be useful in the proofs that will follow in Section \ref{section.proof.main.result}.
\begin{theorem}[Quadratic Form Estimate for $s = 1$]
\label{theorem.quadratic.form.s.1}
For each $\lambda \in [0, 1)$ and $3/2 < r \leq \infty$ there exists a constant $C > 0$ such that
\begin{gather}
|\mathcal{P}_A|^2 - \frac{C_1}{|x|^2} \geq \lambda\left(|p_A|^2 - \frac{C_1}{|x|^2}\right) - C\|B\|_r^{2r/(2r - 3)}.
\label{inequality.quadratic.form.s.1}
\end{gather}
$\lambda$ may be set equal to 1 if $r = \infty$, yielding a finite value of $C$.
\end{theorem}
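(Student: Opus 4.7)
The plan is to start from the algebraic identity $\mathcal{P}_A^2 = p_A^2 - \sigma\cdot B$, which rewrites the left-hand side of the claim as
\begin{equation*}
|\mathcal{P}_A|^2 - \frac{C_1}{|x|^2} = \left(|p_A|^2 - \frac{C_1}{|x|^2}\right) - \sigma\cdot B.
\end{equation*}
The theorem thereby reduces to the quadratic-form estimate
\begin{equation*}
(\psi, \sigma\cdot B\,\psi) \leq (1-\lambda)\left(\psi, \left(|p_A|^2 - \tfrac{C_1}{|x|^2}\right)\psi\right) + C\|B\|_r^{2r/(2r-3)}\|\psi\|_2^2.
\end{equation*}
The endpoint case $r=\infty$ is then immediate from the pointwise bound $|(\psi,\sigma\cdot B\psi)|\leq \|B\|_\infty \|\psi\|_2^2$, yielding $\lambda=1$ and $C=1$ since $2r/(2r-3)\to 1$ as $r\to\infty$.

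For $3/2 < r < \infty$ I would invoke Estimate I in the $s=1$ case, Theorem \ref{theorem.estimate.I.s.1}. Writing $(\psi,\sigma\cdot B\psi) = (\psi,(p_A^2 - \mathcal{P}_A^2)\psi)$ and picking any $u\in [3/(2r),1)$, that estimate gives, for every $\varepsilon>0$,
\begin{equation*}
|(\psi,\sigma\cdot B\psi)| \leq \varepsilon\,\frac{\||p|^u|\psi|\|_2^{2/u}}{\|\psi\|_2^{2(1-u)/u}} + \mathcal{T}(u,r,\varepsilon)\|B\|_r^{2r/(2r-3)}\|\psi\|_2^2.
\end{equation*}
The absence of the magnetic potential in the error term, and the fact that it involves $|\psi|$ rather than $\psi$, is what makes the next step work. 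I would then apply \eqref{inequality.SSS} (due to Solovej--S\o rensen--Spitzer and Frank) with $s=1$ to $|\psi|$, and use the diamagnetic inequality $(\psi,|p_A|^2\psi)\geq(|\psi|,|p|^2|\psi|)$ together with the trivial identity $(\psi,|x|^{-2}\psi)=(|\psi|,|x|^{-2}|\psi|)$, to obtain
\begin{equation*}
\left(\psi, \left(|p_A|^2 - \tfrac{C_1}{|x|^2}\right)\psi\right) \geq \left(|\psi|, \left(|p|^2 - \tfrac{C_1}{|x|^2}\right)|\psi|\right) \geq H_{1,u}\,\frac{\||p|^u|\psi|\|_2^{2/u}}{\|\psi\|_2^{2(1-u)/u}}.
\end{equation*}
Substituting back and choosing $\varepsilon=(1-\lambda)H_{1,u}$ yields the required inequality with $C=\mathcal{T}(u,r,(1-\lambda)H_{1,u})$, which is finite for $\lambda<1$ and blows up as $\lambda\uparrow 1$.

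The main subtlety will be arranging the chain of inequalities so that diamagnetism is used consistently at both ends: Estimate I is itself proved through diamagnetism, which is precisely why it delivers an error term involving $|\psi|$; and the same diamagnetic inequality, applied in the opposite direction, is what lifts \eqref{inequality.SSS} from $|p|$ to $|p_A|$ in the main form we wish to dominate. Once this alignment is in place, the proof is just a combination of these ingredients. The parameter range poses no difficulty: the condition $0<u<1$ from \eqref{inequality.SSS} and the condition $u\geq 3/(2r)$ from Theorem \ref{theorem.estimate.I.s.1} are jointly solvable for any $r>3/2$, so $u$ may simply be fixed in $[3/(2r),1)$.
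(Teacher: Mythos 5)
Your proposal is correct and follows essentially the same route as the paper's own proof: both invoke Estimate I for $s=1$ (Theorem~\ref{theorem.estimate.I.s.1}, which is itself the quadratic-form control of $\sigma\cdot B$), then use the diamagnetic inequality to pass from $|p_A|^2$ to $|p|^2$ acting on $|\psi|$, and finally absorb the residual $\varepsilon\||p|^u|\psi|\|_2^{2/u}$ term via \eqref{inequality.SSS} with $u$ fixed in the admissible window $[3/(2r),1)$. The only cosmetic difference is that you phrase the reduction as ``bound $(\psi,\sigma\cdot B\,\psi)$'' before applying Estimate I, while the paper applies Estimate I directly to the difference $\mathcal{P}_A^2-p_A^2$ and then splits off a $(1-\lambda)$ fraction of $|p_A|^2-C_1/|x|^2$; these are the same argument reassembled.
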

\begin{proof}
This follows easily from Theorem \ref{theorem.estimate.I.s.1}, Equation \eqref{inequality.SSS}, and diamagnetism, since for a normalized $C_0^{\infty}$ function $\psi$ and $\varepsilon > 0$ small enough,
\begin{align}
\left(\psi, \left(|\mathcal{P}_A|^2 - \frac{C_1}{|x|^2}\right)\psi\right) & \geq \, \left(\psi, \left(|p_A|^2 - \frac{C_1}{|x|^2}\right)\psi\right) - \varepsilon\||p|^u\left|\psi\right|\|_2^{2/u} - \mathcal{T}\|B\|_r^{2r/(2r -3)}\nonumber\\
& \geq \, \lambda\left(\psi, \left(|p_A|^2 - \frac{C_1}{|x|^2}\right)\psi\right) + (1 - \lambda)\left(\left|\psi\right|, \left(|p|^2 - \frac{C_1}{|x|^2}\right)\left|\psi\right|\right)\nonumber\\
& \,\, \quad - \varepsilon\||p|^u\left|\psi\right|\|_2^{2/u} - \mathcal{T}\|B\|_r^{2r/(2r -3)}\nonumber\\
& \geq \, \lambda\left(\psi, \left(|p_A|^2 - \frac{C_1}{|x|^2}\right)\psi\right) - \mathcal{T}\|B\|_r^{2r/(2r - 3)},
\end{align}
as claimed. The situation with $r = \infty$ is easy to verify.
\end{proof}
\begin{theorem}[Quadratic Form Estimates for Fractional Powers]
\label{theorem.quadratic.form}
Let $0 < s < 1$ and $\lambda \in [0, 1)$. There are constants $D(s, \lambda), E(s, \lambda) > 0$ such that
\begin{align}
|\sigma\cdot p_A|^{2s} - \frac{C_s}{|x|^{2s}} \geq & \, \lambda\left(|p|^{2s} - \frac{C_s}{|x|^{2s}}\right) - D(s, \lambda)\int_{\mathbb{R}^3}|B|^{s + 3/2}\,dx,\label{inequality.quadratic.forms.first}\\
|\sigma\cdot p_A|^{2s} - \frac{C_s}{|x|^{2s}} \geq & \, \lambda\left(|p|^{2s} - \frac{C_s}{|x|^{2s}}\right) - E(s, \lambda)\left(\int_{\mathbb{R}^3}|B|^2\,dx\right)^{2s}.\label{inequality.quadratic.forms.second}
\end{align}
\end{theorem}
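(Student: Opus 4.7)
The goal is to reduce $|\mathcal P_A|^{2s}$ to $|p|^{2s}$ modulo a small kinetic error at a lower-order power $\||p|^u\psi\|_2^{2s/u}$ and a magnetic error of the shape $\|B\|_r^{2sr/(2r-3)}$ for an appropriately chosen $r$, and then to absorb the kinetic error using the Solovej--S\o rensen--Spitzer/Frank inequality \eqref{inequality.SSS} applied to the Hardy-subtracted operator $|p|^{2s}-C_s/|x|^{2s}$. I shall work in quadratic-form sense on normalized $\psi\in C_0^\infty(\mathbb R^3)\otimes\mathbb C^2$ and extend to the form domains by density and homogeneity at the end.

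The reduction proceeds in two sub-steps. Adding the first form of Estimate I to the second form of Estimate II at a common exponent $u\in(0,s)$ and a common $r>3/2$ (the two error terms are both expressed in $\||p_A|^u\psi\|_2^{2s/u}$ and therefore combine directly) gives, for any $\varepsilon_0>0$,
\begin{equation*}
\bigl|\langle\psi,(|\mathcal P_A|^{2s}-|p|^{2s})\psi\rangle\bigr|\le\varepsilon_0\||p_A|^u\psi\|_2^{2s/u}+C_0\|B\|_r^{2sr/(2r-3)}.
\end{equation*}
Converting this $|p_A|$-error into a $|p|$-error is done by reapplying Estimate II with $s$ replaced by $u$ and an auxiliary $u'\in(\max\{0,1-2r(1-u)/3\},u)$, which yields $\||p_A|^u\psi\|_2^2\le\||p|^u\psi\|_2^2+\varepsilon\||p|^{u'}\psi\|_2^{2u/u'}+\mathcal J\|B\|_r^{2ur/(2r-3)}$; the middle term is then interpolated against $\||p|^u\psi\|_2^2$ using $u'<u$ and $\|\psi\|_2=1$, collapsing the right-hand side to $(1+\varepsilon)\||p|^u\psi\|_2^2+\mathcal J\|B\|_r^{2ur/(2r-3)}$. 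Raising to the $(s/u)$-th power via the Young split $(a+b)^{s/u}\le(1+\eta)a^{s/u}+C(\eta)b^{s/u}$ produces $\||p_A|^u\psi\|_2^{2s/u}\le(1+\eta)(1+\varepsilon)^{s/u}\||p|^u\psi\|_2^{2s/u}+C(\eta,\varepsilon)\|B\|_r^{2sr/(2r-3)}$, with the coefficient of the main term arbitrarily close to $1$. Substituting back yields the clean combined estimate $|\langle\psi,(|\mathcal P_A|^{2s}-|p|^{2s})\psi\rangle|\le\varepsilon\||p|^u\psi\|_2^{2s/u}+\mathcal L\|B\|_r^{2sr/(2r-3)}$ with $\varepsilon>0$ as small as one likes.

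Now choose $r$ to tailor the magnetic term. For \eqref{inequality.quadratic.forms.first} take $r=s+3/2$: one has $2sr/(2r-3)=s+3/2$ and $\|B\|_r^{s+3/2}=\int|B|^{s+3/2}\,dx$. For \eqref{inequality.quadratic.forms.second} take $r=2$: $2sr/(2r-3)=4s$ and $\|B\|_2^{4s}=(\int|B|^2)^{2s}$. In either case the joint hypotheses of Estimates I and II reduce to $u\in(1-2r(1-s)/3,s)$ and $u'\in(\max\{0,1-2r(1-u)/3\},u)$, both nonempty for every $s<1$. To finish, subtract $C_s/|x|^{2s}$, split $|p|^{2s}-C_s/|x|^{2s}=\lambda(|p|^{2s}-C_s/|x|^{2s})+(1-\lambda)(|p|^{2s}-C_s/|x|^{2s})$, and bound the $(1-\lambda)$-summand below by $(1-\lambda)H_{s,u}\||p|^u\psi\|_2^{2s/u}$ via \eqref{inequality.SSS}; choosing $\varepsilon=(1-\lambda)H_{s,u}$ cancels the remaining kinetic error exactly and delivers both inequalities, with $D(s,\lambda)$ and $E(s,\lambda)$ equal to the resulting $\mathcal L$ at $r=s+3/2$ and $r=2$, respectively. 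The principal technical hurdle is the conversion sub-step: the coefficient of $\||p|^u\psi\|_2^{2s/u}$ must be made arbitrarily close to $1$, so that the subsequent SSS absorption uses the exact critical Hardy constant $C_s$ rather than something strictly larger, and the admissible windows for $u$ and $u'$ pinch as $s\uparrow 1$, requiring care but not obstructing the argument.
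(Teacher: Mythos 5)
Your proof is correct and follows essentially the same route as the paper: combine Estimates I and II to reduce $|\mathcal{P}_A|^{2s}$ to $|p|^{2s}$ up to a lower-order kinetic error $\||p|^u\psi\|_2^{2s/u}$ and a magnetic term, then absorb the kinetic error into $(1-\lambda)(|p|^{2s}-C_s/|x|^{2s})$ via \eqref{inequality.SSS}, with $r=s+3/2$ and $r=2$ realizing the two powers of $B$. The only stylistic detour is your conversion of $\||p_A|^u\psi\|_2$ to $\||p|^u\psi\|_2$ via an auxiliary $u'<u$ followed by interpolation; the paper shortcuts this by invoking Estimate II with both its internal parameters equal to $u$ (permissible since $u<1$ and $r>3/2$), and your concern that the conversion coefficient must be arbitrarily close to $1$ is unnecessary --- any fixed constant works, because the free prefactor $\varepsilon_0$ from Estimates I/II absorbs it before matching against $(1-\lambda)H_{s,u}$.
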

\begin{proof}
We will only prove \eqref{inequality.quadratic.forms.first}. The proof of \eqref{inequality.quadratic.forms.second} is basically identical to the one we will give now, except that the power of 2 must be selected for the magnetic field. The proof is a careful application of Estimates I and II. From Estimate I it follows that
\begin{gather}
\left(\psi, \left(|\mathcal{P}_A|^{2s} - \frac{C_s}{|x|^{2s}}\right)\psi\right) \geq \left(\psi, \left(|p_A|^{2s} - \frac{C_s}{|x|^{2s}}\right)\psi\right) - \varepsilon\||p_A|^u\psi\|_2^{2s/u} - \Omega\int_{\mathbb{R}^3}|B|^{s + 3/2}\,dx
\label{phlt.inequality.1}
\end{gather}
for any $C_0^{\infty}$ function $\psi$ of norm 1. Here, we recall, $\varepsilon$ is any positive number, and $\Omega$ is a function of all the variables involved, that is, $s$, $u$, and $\varepsilon$. Inequality \eqref{phlt.inequality.1} is valid as long as $s(2s + 1)/3 < u < s$, and so we assume that is the case. (See the statement of Estimate I.) We now apply Estimate II to the first two terms on the right hand side of \eqref{phlt.inequality.1}. For the first one we obtain an expression analogous to \eqref{phlt.inequality.1}, namely
\begin{gather}
\left(\psi, \left(|p_A|^{2s} - \frac{C_s}{|x|^{2s}}\right)\psi\right) \geq \left(\psi, \left(|p|^{2s} - \frac{C_s}{|x|^{2s}}\right)\psi\right) - \varepsilon\||p|^u\psi\|_2^{2s/u} - \mathcal{J}\int_{\mathbb{R}^3}|B|^{s + 3/2}\,dx,
\label{phlt.inequality.2}
\end{gather}
where again $\mathcal{J}$ depends on $s$, $u$, and $\varepsilon$. The parameters $u$ and $\varepsilon$ are here chosen to be equal to the values appearing in \eqref{phlt.inequality.1}. Note how the first term on the right side of \eqref{phlt.inequality.2} controls the second one for $\varepsilon$ sufficiently small, by the equation \eqref{inequality.SSS}. It is then the second term on the right side of \eqref{phlt.inequality.1} the one that is left to control. We simply apply Estimate II to it, more specifically,
\begin{align}
\||p_A|^u\psi\|_2^{2s/u} = \left(\psi, |p_A|^{2u}\psi\right)^{s/u} \leq & \, \left[\left(1 + \varepsilon\right)\left(\psi, |p|^{2u}\psi\right) + \mathcal{K}\left(\int_{\mathbb{R}^3}|B|^{s + 3/2}\,dx\right)^{u/s}\right]^{s/u}\nonumber\\
\leq & \, 2^{s/u - 1}(1 + \varepsilon)^{s/u}\||p|^u\psi\|^{2s/u} + 2^{s/u - 1}\mathcal{K}^{s/u}\int_{\mathbb{R}^3}|B|^{s + 3/2}\,dx,
\label{phlt.inequality.3}
\end{align}
where the ``$u$-parameter'' was picked equal to the ``$s$-parameter,'' here equal to $u$. (See the statement of Estimate II.) $\mathcal{K}$ is here a constant, in complete analogy to the constant $\mathcal{J}$ we saw before. By inserting inequalities \eqref{phlt.inequality.3} and \eqref{phlt.inequality.2} into \eqref{phlt.inequality.1} we obtain
\begin{align}
& \left(\psi, \left(|\mathcal{P}_A|^{2s} - \frac{C_s}{|x|^{2s}}\right)\psi\right)\nonumber\\
\geq & \, \left(\psi, \left(|p|^{2s} - \frac{C_s}{|x|^{2s}}\right)\psi\right) - \varepsilon\left[1 + 2^{s/u - 1}\left(1 + \varepsilon\right)^{s/u}\right]\||p|^u\psi\|_2^{2s/u} - \mathcal{L}\int_{\mathbb{R}^3}|B|^{s + 3/2}\,dx,
\end{align}
where $\mathcal{L} \equiv 2^{s/u - 1}\mathcal{K}^{s/u}\varepsilon + \mathcal{J} + \Omega$. By picking $\varepsilon$ small enough and using \eqref{inequality.SSS} we finally find that
\begin{gather}
|\mathcal{P}_A|^{2s} - \frac{C_s}{|x|^{2s}} \geq \sigma\left(|p|^{2s} - \frac{C_s}{|x|^{2s}}\right) - \mathcal{L}\int_{\mathbb{R}^3}|B|^{s + 3/2}\,dx,
\label{phlt.main.estimate.near.origin}
\end{gather}
where $\sigma$ is any number in the interval $[0, 1)$. ($\mathcal{L}$ is a function of $\sigma$ through $\varepsilon$; it blows up as $\sigma \to 1^{-}$.)
\end{proof}
We close this section by giving a preliminary result concerning our Pauli-Hardy-Lieb-Thirring inequality in terms of the magnetic field energy $\|B\|_2^2$, Theorem \ref{theorem.main.result}, that will be used in its proof. It is the generalization to powers $1/2 \leq s \leq 1$ of a result of Lieb, Loss, and Solovej \cite{LLS}.
\begin{theorem}[Pauli-Lieb-Thirring Inequality with Magnetic Field Energy]
\label{theorem.pauli.lieb.thirring.power.2}
For $1/2 \leq s \leq 1$ and $0 < \gamma < 1$,
\begin{gather}
\text{\textnormal{Tr}}(|\mathcal{P}_A|^{2s} + W)_{-} \leq U(s, \gamma) \int_{\mathbb{R}^3}W_{-}^{1 + 3/(2s)}\,dx + V(s, \gamma)\left(\int_{\mathbb{R}^3}|B|^2\,dx\right)^{3/4}\left(\int_{\mathbb{R}^3}W_{-}^4\,dx\right)^{1/4},
\end{gather}
where
\begin{align}
U(s, \gamma) \equiv \, & \sqrt{2}Ls(1 - \gamma)^{-s}\Omega(s, 0),\\
V(s, \gamma) \equiv \, & 4\cdot 3^{-3/4}\sqrt{2}Ls\Omega(s, 2s - 1)^{3/4}\Omega(4s - 3/2, 0)^{1/4}(1 - \gamma)^{s - 3/8}\gamma^{-3/8},\\
\Omega(s, r) \equiv \, & \int_0^1\lambda^{s - 1}(1 - \lambda^{r + 1})^{3/2}\,d\lambda,
\end{align}
for $1/2 < s \leq 1$, and
\begin{gather}
U(1/2, \gamma) = \frac{3\pi L}{2},\\
V(1/2, \gamma) = \frac{\pi L}{2\cdot 3^{1/4}}.
\end{gather}
(Independent of $\gamma$.) $L$ is here is the constant $L_3$ in the CLR bound \eqref{equation.CLR}, smaller than 0.1156.
\end{theorem}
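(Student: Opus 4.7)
The plan is to reduce the fractional statement to the $s=1$ Lieb-Loss-Solovej inequality \eqref{inequality.LLS} by combining the BKS inequality \eqref{inequality.BKS} with a layer-cake representation, then integrating to produce Beta-type integrals matching the stated $\Omega(s,r)$.

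Setup. Since $|\mathcal{P}_A|^{2s}+W\geq |\mathcal{P}_A|^{2s}-W_{-}$ in the form sense, I may assume throughout that $W=-U$ with $U\geq 0$. For $1/2\leq s<1$, applying BKS with $A=|\mathcal{P}_A|^{2s}$ and $B=U$ (as a multiplication operator) yields
\[
\mathrm{Tr}\bigl(|\mathcal{P}_A|^{2s}-U\bigr)_{-} \;\leq\; \mathrm{Tr}\bigl(\mathcal{P}_A^{2}-U^{1/s}\bigr)_{-}^{s},
\]
so the task reduces to a fractional-moment Lieb-Thirring inequality for the standard Pauli operator $\mathcal{P}_A^{2}$ with potential $U^{1/s}$. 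The case $s=1$ is \eqref{inequality.LLS} directly, so from here on I focus on $1/2\leq s<1$.

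Layer-cake and CLR. For the right side of the BKS bound I would use the layer-cake identity
\[
\mathrm{Tr}\bigl(\mathcal{P}_A^{2}-U^{1/s}\bigr)_{-}^{s} \;=\; s\int_{0}^{\infty} t^{s-1}\, N_{0}\bigl(\mathcal{P}_A^{2}-(U^{1/s}-t)_{+}\bigr)\,dt,
\]
and estimate $N_{0}$ by a Pauli-CLR-type bound with magnetic correction extracted from \eqref{inequality.LLS} via Chebyshev's inequality $\eta\,N_{0}(H)\leq\mathrm{Tr}(H+\eta)_{-}$. A $\gamma$-weighted Young split of the potential allocates a fraction $1-\gamma$ to the bulk $U_+^{5/2}$ term and a fraction $\gamma$ to the magnetic cross-term $(\int B^{2})^{3/4}(\int U_+^{4})^{1/4}$ of LLS; this is where the parameter $\gamma$ enters and produces the $(1-\gamma)^{-s}$ and $(1-\gamma)^{s-3/8}\gamma^{-3/8}$ factors. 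The substitution $t=U^{1/s}(x)\lambda$ in the $t$-integral then converts each spatial integral of $(U^{1/s}-t)_{+}^{\alpha}$ into a product of $\int U^{(\alpha+1)/s}\,dx$ and a Beta-type integral in $\lambda$; the bulk $\alpha=3/2$ term gives $\Omega(s,0)$, and a further Hölder step with exponents $4/3$ and $4$ (inherited from the LLS Hölder structure) applied across the $t$-integral of the magnetic term produces $\Omega(s,2s-1)^{3/4}\Omega(4s-3/2,0)^{1/4}$.

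Main obstacle. The principal difficulty is the precise bookkeeping of Hölder and Young exponents needed to recover exactly the stated $U(s,\gamma)$ and $V(s,\gamma)$, including the factor $\sqrt{2}$ and the numerical coefficient $4\cdot 3^{-3/4}$ in $V(s,\gamma)$. The case $s=1/2$ requires separate treatment, as reflected by the distinct formulas in the statement: the general formula for $1/2<s\leq 1$ does not extend continuously to $s=1/2$, because there $U^{1/s}=U^{2}$ is a non-fractional power, and the corresponding moment-$1/2$ integration combined with a direct Pauli CLR application yields the simpler $\gamma$-independent constants $3\pi L/2$ and $\pi L/(2\cdot 3^{1/4})$.
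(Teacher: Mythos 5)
The opening moves of the proposal — drop $W_+$, apply BKS with $A=|\mathcal{P}_A|^{2s}$, $B=W_-$, and write $\mathrm{Tr}(\mathcal{P}_A^2 - W_-^{1/s})_-^s$ via the layer-cake formula $s\int_0^\infty \lambda^{s-1}\mathcal{N}(\mathcal{P}_A^2 - W_-^{1/s}+\lambda)\,d\lambda$ — are exactly the first steps of the paper's proof. The gap is in how you then estimate the counting function.

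You propose to extract a counting bound from \eqref{inequality.LLS} by Chebyshev, $\eta\,\mathcal{N}(H)\le \mathrm{Tr}(H+\eta)_-$. This loses a factor $\eta^{-1}$, and when you insert it into the layer-cake integral that loss is fatal for $s<1$. Take the natural scaling $\eta\sim t$: the bulk contribution becomes
\begin{equation}
\int_0^\infty t^{s-2}\int_{\mathbb{R}^3}\big(W_-^{1/s}(x)-ct\big)_+^{5/2}\,dx\,dt
\;=\;(\mathrm{const})\int_{\mathbb{R}^3} W_-^{1+3/(2s)}\,dx\cdot\int_0^1 u^{s-2}(1-u)^{5/2}\,du,
\end{equation}
and the last integral diverges at $u=0$ for every $s<1$. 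No fixed choice of $\eta$ repairs this, because a counting bound obtained from a moment bound by Chebyshev is genuinely weaker than a true CLR bound. The paper never goes through LLS at all: it uses the operator inequality $\mathcal{P}_A^2\ge p_A^2-|B|$ together with the scalar CLR bound $\mathcal{N}(p_A^2+V)\le L\int V_-^{3/2}$, which has no loss of a power of $\lambda$.

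The second, larger omission is the running-energy-scale device that actually generates $\Omega(s,2s-1)$ and the cross term $(\int|B|^2)^{3/4}(\int W_-^4)^{1/4}$. With only $\mathcal{P}_A^2\ge p_A^2-|B|$ and CLR you get, after the $\gamma$-split and the $\lambda$-integration, a magnetic contribution of the form $\int_{\mathbb{R}^3}|B|^{s+3/2}\,dx$ — the shape in Theorem \ref{theorem.second.main.result}, not Theorem \ref{theorem.main.result}. To change the $|B|$-power you must, as in the paper, fix an energy scale $\mu$, split the $\lambda$-integral at $\mu$, and on $\lambda>\mu$ shrink the kinetic energy, $\mathcal{P}_A^2\ge(\mu/\lambda)^r[p_A^2-|B|]$, before applying CLR. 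That rescaling is what replaces $|B|^{s+3/2}$ by $\mu^{rs/(r+1)}\gamma^{-s/(r+1)}\Omega(s,r)|B|^{3/2+s/(r+1)}$; choosing $r=2s-1$ gives $|B|^2$ (this is where the restriction $s\ge 1/2$ enters), and optimizing in $\mu$ against the resulting $W_-^4$-term produces the $3/4$--$1/4$ H\"older split with the stated $\Omega(s,2s-1)^{3/4}\Omega(4s-3/2,0)^{1/4}$. Your plan of applying H\"older "across the $t$-integral" to the cross term of LLS does not reproduce this, since applying LLS to $\mathcal{P}_A^2-W_-^{1/s}$ directly yields $\big(\int W_-^{4/s}\big)^{1/4}$ rather than $\big(\int W_-^{4}\big)^{1/4}$. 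Finally, for $s=1$ your plan is "just cite LLS," but LLS carries unspecified absolute constants $A_1,A_2$, whereas the theorem you are asked to prove has explicit $U(1,\gamma)$, $V(1,\gamma)$; you still need to rerun the running-energy-scale proof with the CLR constant $L$ to obtain them.
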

\begin{proof}[Proof of Theorem \ref{theorem.pauli.lieb.thirring.power.2}]
The case $s = 1$ of the inequality we shall prove has already been covered in a paper by Lieb, Loss and Solovej \cite[Theorem 2]{LLS}. The idea of the following proof is simply to modify their ``running-energy-scale'' argument to accomodate the different powers $1/2 < s < 1$ of $\mathcal{P}_A^2$. We shall explain the case $s = 1/2$ at the end. We start by first using the trivial bound $W \geq -W_{-}$ and the BKS inequality, Equation \eqref{inequality.BKS}:
\begin{gather}
\text{Tr}\, (|\mathcal{P}_A|^{2s} + W)_{-} \leq \text{Tr}\,(|\mathcal{P}_A|^{2s} - W_{-})_{-} \leq \text{Tr}\,(\mathcal{P}_A^2 - W_{-}^{1/s})_{-}^s.
\label{equation.lt.inequality.pauli.first}
\end{gather}
The rightmost element in \eqref{equation.lt.inequality.pauli.first} can be written as
\begin{gather}
s\int_0^{\infty}\lambda^{s - 1}\mathcal{N}(\mathcal{P}_A^2 - W_{-}^{1/s} + \lambda)\,d\lambda,
\label{equation.lt.inequality.trace}
\end{gather}
where $\mathcal{N}(X)$ denotes the number of non-positive eigenvalues of an operator $X$. (See \cite[Equation (9)]{LA}.) We fix an energy scale $\mu > 0$, to be selected later, and split \eqref{equation.lt.inequality.trace} as
\begin{gather}
s\int_0^{\mu}\lambda^{s - 1}\mathcal{N}(\mathcal{P}_A^2 - W_{-}^{1/s} + \lambda)\,d\lambda + s\int_{\mu}^{\infty}\lambda^{s - 1}\mathcal{N}(\mathcal{P}_A^2 - W_{-}^{1/s} + \lambda)\,d\lambda \equiv I_1 + I_2.
\end{gather}
For the first integral we can use the elementary estimate $\mathcal{P}_A^2 \geq p_A^2 - |B|$, and then the Cwikel-Lieb-Rozenblum bound, Equation \eqref{equation.CLR},
\begin{gather}
\mathcal{N}\left[p_A^2 + V\right] \leq L\int_{\mathbb{R}^3}V_{-}^{3/2}\,dx,
\end{gather}
with the constant $L$ defined as $0.1156$, obtaining in particular
\begin{gather}
\mathcal{N}(\mathcal{P}_A^2 - W_{-}^{1/s} + \lambda) \leq \mathcal{N}(p_A^2 - |B| - W_{-}^{1/s} + \lambda) \leq L\int_{\mathbb{R}^3}\left(\lambda - |B| - W_{-}^{1/s}\right)_{-}^{3/2}\,dx.
\label{equation.lt.inequality.pauli.second}
\end{gather}
We furthermore fix $0 < \gamma < 1$ and $r \geq 0$ and bound the integrand on the rightmost end of \eqref{equation.lt.inequality.pauli.second} as
\begin{gather}
\left(\lambda - |B| - W_{-}^{1/s}\right)_{-}^{3/2} \leq \sqrt{2}\left[\left(\gamma\lambda\left(\lambda/\mu\right)^r - |B|\right)_{-}^{3/2} + \left((1 - \gamma)\lambda - W_{-}^{1/s}\right)_{-}^{3/2}\right].
\end{gather}
As for the second integral $I_2$, we bound from below the operator $\mathcal{P}_A^2$ as $\left(\mu \lambda^{-1}\right)^r \left[(p - A)^2 - |B|\right]$, and then
\begin{align}
\mathcal{N}\left(\mathcal{P}_A^2 - W_{-}^{1/s} + \lambda\right) & \leq \mathcal{N}\left[\left(\mu \lambda^{-1}\right)^{r}\left[p_A^2 - |B|\right] - W_{-}^{1/s} + \lambda\right]\nonumber\\
& = \mathcal{N}\left[p_A^2 - |B| - \left(\lambda\mu^{-1}\right)^r W_{-}^{1/s} + \lambda^{r + 1}/\mu^r \right]\nonumber\\
& \leq L\int_{\mathbb{R}^3}\left(\lambda^{r + 1}/\mu^r - |B| - (\lambda\mu^{-1})^r W_{-}^{1/s}\right)_{-}^{3/2}\,dx.\label{equation.lt.inequality.pauli.third}
\end{align}
From here we find, after splitting the integrand in \eqref{equation.lt.inequality.pauli.third} as
\begin{align}
\left(\lambda^{r + 1}/\mu^r - |B| - (\lambda\mu^{-1})^r W_{-}^{1/s}\right)_{-}^{3/2} \leq \sqrt{2} & \left[\left(\gamma\lambda^{r + 1}/\mu^r - |B|\right)_{-}^{3/2}\right.\nonumber\\
& \quad + \left.\left( (1 - \gamma)\lambda^{r + 1}/\mu^r - \left(\lambda\mu^{-1}\right)^r W_{-}^{1/s}\right)_{-}^{3/2}\right],
\end{align}
that
\begin{align}
I_1 + I_2 \leq \sqrt{2}Ls\int_{\mathbb{R}^3} & \left[\int_0^{\infty}\lambda^{s - 1}\left(\gamma\lambda^{r + 1}\mu^{-r} - |B|\right)_{-}^{3/2}\,d\lambda\right.\nonumber\\
& \quad + \int_0^{\mu}\lambda^{s - 1}\left((1 - \gamma)\lambda - W_{-}^{1/s}\right)_{-}^{3/2}\,d\lambda\nonumber\\
& \quad + \left.\int_{\mu}^{\infty}\lambda^{s - 1}\left((1 - \gamma)\lambda^{r + 1}\mu^{-r} - \lambda^r\mu^{-r}W_{-}^{1/s}\right)_{-}^{3/2}\,d\lambda\right]\,dx.
\end{align}
By now extending the last two integrals from 0 to $\infty$ we get, with
\begin{gather}
\Omega(s, r) \equiv \int_0^1\lambda^{s - 1}(1 - \lambda^{r + 1})^{3/2}\,d\lambda,
\end{gather}
that
\begin{align}
I_1 + I_2 \leq \sqrt{2}Ls & \left(\mu^{rs/(r + 1)}\gamma^{-s/(r + 1)}\Omega(s, r)\int_{\mathbb{R}^3}|B|^{3/2 + s/(r + 1)}\,dx\right.\nonumber\\
& \quad + \, (1 - \gamma)^{-s}\Omega(s, 0)\int_{\mathbb{R}^3}W_{-}^{1 + 3/(2s)}\,dx\nonumber\\
& \quad + \left. \mu^{-3r/2}(1 - \gamma)^{-(s + 3r/2)}\Omega(s + 3r/2, 0)\int_{\mathbb{R}^3}W_{-}^{1 + 3(r + 1)/(2s)}\,dx\right).
\end{align}
In order to get $\|B\|_2$ we must select $r = 2s - 1$. This can be done only for $s \geq 1/2$. In the case $s > 1/2$, we optimize over $\mu$ and obtain
\begin{gather}
\text{Tr}\,(|\mathcal{P}_A|^{2s} + W)_{-} \leq U(s, \gamma) \int_{\mathbb{R}^3}W_{-}^{1 + 3/(2s)}\,dx + V(s, \gamma)\left(\int_{\mathbb{R}^3}|B|^2\,dx\right)^{3/4}\left(\int_{\mathbb{R}^3}W_{-}^4\,dx\right)^{1/4},
\end{gather}
where
\begin{gather}
U(s, \gamma) \equiv \sqrt{2}Ls(1 - \gamma)^{-s}\Omega(s, 0),\\
V(s, \gamma) \equiv 4\cdot 3^{-3/4}\sqrt{2}Ls\Omega(s, 2s - 1)^{3/4}\Omega(4s - 3/2, 0)^{1/4}(1 - \gamma)^{s - 3/8}\gamma^{-3/8}.
\end{gather}

We will now discuss the case $s = 1/2$. This was actually proven in \cite[Section VI. Proof of Theorem 2.2]{EFS}, but for completeness we will prove it again here. (There is another reason besides completeness, namely that they did not say what $U$ and $V$ are.) As in the case $1/2 < s < 1$, the first step is to use the BKS inequality, obtaining
\begin{gather}
\text{Tr}\,(|\mathcal{P}_A| + W)_{-} \leq \text{Tr}\,(\mathcal{P}_A^2 - W_{-}^2)^{1/2}_{-},
\end{gather}
and then use the running-energy-scale method in a slightly different way than before; setting a number $0 \leq \mu \leq 1$, to be fixed later, we obtain, by remembering that $\mathcal{P}_A^2 \geq p_A^2 - |B|$ and using the CLR bound \eqref{equation.CLR},
\begin{align}
&\text{Tr}\,(\mathcal{P}_A^2 - W_{-}^2)^{1/2}_{-} = \frac{1}{2}\int_0^{\infty}\lambda^{-1/2}\mathcal{N}(\mathcal{P}_A^2 - W_{-}^2 + \lambda)\,d\lambda \leq \frac{1}{2}\int_0^{\infty}\lambda^{-1/2}\mathcal{N}(\mu\mathcal{P}_A^2 - W_{-}^2 + \lambda)\,d\lambda\nonumber\\
\leq \, & \frac{1}{2}\int_0^{\infty}\lambda^{-1/2}\mathcal{N}(\mu p_A^2 - \mu |B| - W_{-}^2 + \lambda)\,d\lambda = \frac{1}{2}\int_0^{\infty}\lambda^{-1/2}\mathcal{N}(p_A^2 - |B| - W_{-}^2/\mu + \lambda/\mu)\,d\lambda\nonumber\\
\leq \, & \frac{L}{2}\int_{\mathbb{R}^3}\int_0^{\infty}\lambda^{-1/2}\left(\lambda/\mu - W_{-}^2/\mu - |B|\right)_{-}^{3/2}\,d\lambda\,dx = \frac{3\pi L}{16}\mu^{1/2}\int_{\mathbb{R}^3}\left(W_{-}^2/\mu + |B|\right)^2\,dx\nonumber\\
\leq \, & \frac{3\pi L}{8}\left(\mu^{-3/2}\int_{\mathbb{R}^3}W_{-}^4\,dx + \mu^{1/2}\int_{\mathbb{R}^3}|B|^2\,dx\right) \equiv \frac{3\pi L}{8}\left(\omega^{-3}M + \omega N\right).
\end{align}
We see then that the optimal $\omega$ is
\begin{gather}
\min\left[1, \left(\frac{3M}{N}\right)^{1/4}\right],
\end{gather}
and therefore
\begin{align}
\omega^{-3}M + \omega N \leq & \,
\begin{cases}
3^{-3/4}\cdot 4M^{1/4}N^{3/4} & \text{ if $3M \leq N$}\\
4M & \text{ if $3M > N$},
\end{cases}\nonumber\\
\leq & \,\, 3^{-3/4}\cdot 4M^{1/4}N^{3/4} + 4M,
\end{align}
which concludes the proof.
\end{proof}
\end{section}

\begin{section}{The Main Localization and Estimates on the Trace}
One of the main tools in the proof of the Hardy-Lieb-Thirring inequality is the following localization estimate. It makes reference to a ``standard exponential localization,'' which we describe presently: we fix a length $b > 0$ and a base $l > 1$, and consider a partition $\phi_0, \phi_1, \phi_2, \ldots$ of the interval $[0, \infty)$ defined roughly as a chain of bumps $\phi_n$ of height 1 and length of order $bl^n$. From here we construct a partition of all of $\mathbb{R}^3$ into shells as $\varphi_n (x) = \phi_n(|x|)$. A precise definition of the functions $\phi_n$ is as follows:
\begin{gather}
\phi_0(t) = 1_{[0, b]} + \cos\left[\frac{\pi(t - b)}{2bl}\right]1_{(b, b + bl]},\\
\phi_1(t) = \sin\left[\frac{\pi(t - b)}{2bl}\right]1_{[b, b + bl]} + \cos\left[\frac{\pi(t - b - bl)}{2bl^2}\right]1_{(b + bl, b + bl + bl^2]},\\
\phi_2(t) = \sin\left[\frac{\pi(t - b - bl)}{2bl^2}\right]1_{[b + bl, b + bl + bl^2]}(t) + \cos\left[\frac{\pi(t - b - bl - bl^2)}{2bl^3}\right]1_{(b + bl + bl^2, b + bl + bl^2 + bl^3]},\\
\phi_n(t) = \sin\left[\frac{\pi(t - s_{n - 1})}{2bl^n}\right]1_{[s_{n - 1}, s_n]} + \cos\left[\frac{\pi(t - s_n)}{2bl^{n + 1}}\right]1_{(s_n, s_{n + 1}]} \qquad(n \geq 1),
\end{gather}
where $s_n = b(1 + l + \ldots + l^n) = b(1 - l^{n+ 1})/(1 - l)$. We note that $\sum_{n = 0}^{\infty}\phi_n^2 = 1$ and that each $\phi_n$ is Lipschitz continuous.
\begin{theorem}[Basic Localization Estimate for Fractional Pauli Operators]
\label{theorem.localization.estimate}
Let $0 < s \leq 1$. For a standard exponential localization of length $b$ and base $l$ one has the following localization estimate,
\begin{gather}
|\mathcal{P}_A|^{2s} \geq \sum_{n = 0}^{\infty}\varphi_n(|\mathcal{P}_A|^{2s} - D_n^{2s})\varphi_n,
\label{equation.localization.estimate}
\end{gather}
where the localization error $D_n^{2s}$ is defined as $\pi^{2s}l^{-2sn}/\left[2^{s}b^{2s}(l^{2s} - 1)\right]$.
\end{theorem}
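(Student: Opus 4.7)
The plan is to split according to whether $s=1$ or $0<s<1$. The common ingredient is the IMS-type identity
\[
\sum_n \varphi_n \mathcal{P}_A^2 \varphi_n = \mathcal{P}_A^2 + L, \qquad L(x) := \sum_n |\nabla\varphi_n(x)|^2,
\]
which I would prove as a quadratic-form equality from $\sum_n \varphi_n\nabla\varphi_n = \tfrac{1}{2}\nabla(\sum_n \varphi_n^2) = 0$, together with the fact that $\sigma\cdot B$ commutes with each multiplication operator $\varphi_n$ so that only the usual IMS identity for $p_A^2$ is needed. A direct computation on the sine/cosine building blocks of $\phi_n$ shows that on the annulus $s_{n-1}<|x|<s_n$ only $\varphi_{n-1}$ and $\varphi_n$ contribute, both oscillating with angular frequency $\pi/(2bl^n)$, so $L(x) = \pi^2/(4b^2 l^{2n})$ there. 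From the recursion $D_{n-1}^{2s} = l^{2s}D_n^{2s}$ and $\varphi_{n-1}^2+\varphi_n^2 = 1$ on that annulus one then checks pointwise the comparison $L(x)^s \leq \sum_k D_k^{2s}\varphi_k^2(x)$. In the case $s=1$ this immediately closes the argument, since $\mathcal{P}_A^2 = \sum_n \varphi_n \mathcal{P}_A^2 \varphi_n - L \geq \sum_n \varphi_n(\mathcal{P}_A^2 - D_n^2)\varphi_n$.

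For $0<s<1$ I would combine the integral representation $A^s = \frac{\sin\pi s}{\pi}\int_0^\infty A(A+a)^{-1}a^{s-1}\,da$ with the Hansen--Pedersen pull-out inequality $\bigl(\sum_n \varphi_n A_n\varphi_n\bigr)^s \geq \sum_n \varphi_n A_n^s\varphi_n$ (cited in the Introduction as an immediate consequence of operator convexity of $-x^s$). Applied with the single choice $A_n = \mathcal{P}_A^2$, together with the IMS identity just derived, this yields
\[
\sum_n \varphi_n |\mathcal{P}_A|^{2s}\varphi_n \leq \Bigl(\sum_n \varphi_n \mathcal{P}_A^2 \varphi_n\Bigr)^s = (\mathcal{P}_A^2 + L)^s.
\]
Hence the theorem reduces to the companion ``localized subadditivity'' bound
\[
(\mathcal{P}_A^2+L)^s \leq |\mathcal{P}_A|^{2s} + \sum_n D_n^{2s}\varphi_n^2.
\]

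The hard part will be this last inequality, because the naive operator subadditivity $(A+B)^s \leq A^s + B^s$ fails for noncommuting positive operators. My plan is to work at the level of resolvents, starting from the identity
\[
(\mathcal{P}_A^2+L)^s - |\mathcal{P}_A|^{2s} = \frac{\sin\pi s}{\pi}\int_0^\infty a^s\,(\mathcal{P}_A^2+a)^{-1}\,L\,(\mathcal{P}_A^2+L+a)^{-1}\,da,
\]
obtained from the integral representation and the second resolvent identity. Since $L$ is a bounded multiplication operator whose value on the support of $\varphi_n$ is comparable to $D_n^2$, the integrand can be estimated pointwise in $a$ by splitting the $a$-integral at the energy scale $a\sim D_n^2$, combined with the BFFGS-type resolvent pull-in $\sum_n \varphi_n (\mathcal{P}_A^2+a)^{-1}\varphi_n \geq (\mathcal{P}_A^2+L+a)^{-1}$ which itself follows from operator convexity of $x\mapsto 1/x$ applied to the IMS identity. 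The resulting $a$-integrals are Beta functions and should produce precisely the power $D_n^{2s}$, matched annulus by annulus by the pointwise bound from the first paragraph; the scheme mirrors the $s=1/2$ argument of \cite[Lemma 3.1]{EFS} and the resolvent-level localization of \cite{BFFGS} referenced in the Introduction.
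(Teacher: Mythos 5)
Your overall decomposition — IMS identity plus the Hansen--Pedersen pull-out $\bigl(\sum_n\varphi_n \mathcal{P}_A^2\varphi_n\bigr)^s\geq\sum_n\varphi_n|\mathcal{P}_A|^{2s}\varphi_n$ — is a reasonable starting point and would reduce the theorem, after using $\sum_n\varphi_n\mathcal{P}_A^2\varphi_n=\mathcal{P}_A^2+L$, to the ``localized subadditivity'' bound
\[
(\mathcal{P}_A^2+L)^s\;\leq\;|\mathcal{P}_A|^{2s}+\sum_n D_n^{2s}\varphi_n^2 .
\]
But that bound is where your proposal breaks down, and it is precisely what the paper is designed to avoid. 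Two problems. First, the reduction is asking for \emph{strictly more} than the theorem: since the pull-out is a genuine inequality for $0<s<1$, proving the theorem does \emph{not} require bounding $(\mathcal{P}_A^2+L)^s$ from above by anything, and the displayed inequality involving the full non-commuting perturbation $L$ is not implied by the theorem. You have converted a manageable statement into a harder one. Second, the resolvent-difference plan for that harder statement is not substantiated. The integrand $a^s(\mathcal{P}_A^2+a)^{-1}L(\mathcal{P}_A^2+L+a)^{-1}$ is not itself a positive operator, the splitting ``at the energy scale $a\sim D_n^2$'' has no operator meaning annulus-by-annulus because $L$ does not commute with $\mathcal{P}_A^2$, and the BFFGS pull-in $\sum_n\varphi_n(\mathcal{P}_A^2+a)^{-1}\varphi_n\geq(\mathcal{P}_A^2+L+a)^{-1}$ is the inequality behind Hansen--Pedersen (your Step 1) and points in the \emph{wrong direction} for Step 2. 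Nothing in the proposal forces the upper bound to localize into a multiplication operator $\sum D_n^{2s}\varphi_n^2$; it is entirely possible that the localized subadditivity is simply false, and you do not even need it.

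There is also a concrete sign of trouble in the $s=1$ branch. On the open annulus $s_{n-1}<|x|<s_n$ one has $\varphi_{n-1}^2+\varphi_n^2=1$ and $L=\pi^2/(4b^2l^{2n})$, and near $|x|=s_n$ (where $\varphi_{n-1}\to 0$) the right-hand side $\sum_k D_k^2\varphi_k^2$ tends to $D_n^2=\pi^2l^{-2n}/[2b^2(l^2-1)]$. Your claimed pointwise bound $L\leq\sum_k D_k^2\varphi_k^2$ then requires $1/4\leq 1/[2(l^2-1)]$, i.e.\ $l\leq\sqrt3$. So with the $D_n^{2s}$ of the theorem statement your pointwise check does not hold for all admissible bases $l>1$, and a proof that depends on it is not complete. (Whether one regards this as a defect of the stated constant or of the proposed method, it means this step cannot be waved through.)

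The paper's proof sidesteps both issues by never producing the global non-commuting error $L$ at all. It performs a \emph{two-piece} split at a single scale $N$ using $\eta_N=(\sum_{n\le N}\varphi_n^2)^{1/2}$ and $\theta_N=(\sum_{n\ge N+1}\varphi_n^2)^{1/2}$, replaces the IMS error $|\nabla\eta_N|^2+|\nabla\theta_N|^2$ by a \emph{scalar constant} $C_{N+1}^2$, absorbs that scalar before taking the $s$-power via the elementary $x^s\geq(x+c)^s-c^s$, applies the two-term pull-out $(\eta_N X\eta_N+\theta_N X\theta_N)^s\geq\eta_N X^s\eta_N+\theta_N X^s\theta_N$, and then \emph{iterates} this inside the $\eta_N$-block at scales $N-1, N-2,\ldots$. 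Because the error at each step is a constant, no operator-subadditivity for non-commuting perturbations is ever needed; the accumulated errors on each $\varphi_n$ are simply a geometric series of constants, which is how $D_n^{2s}$ arises. If you rework your argument in this telescoping form — localize two pieces at a time, shift by the scalar error, pull out, iterate — you will recover the theorem without having to prove your Step 2.
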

\begin{proof}
We consider for each $N$ the partition generated by the two elements
\begin{align}
\eta_N \equiv \Biggl(\,\sum_{n \leq N}\varphi_n^2\Biggr)^{1/2}, \qquad \theta_N \equiv \Biggl(\,\sum_{n \geq N + 1}\varphi_n^2\Biggr)^{1/2},
\end{align}
which satisfy $\eta_N^2 + \theta_N^2 = 1$. In order now to prove the assertion we fix a value $N \geq 0$ and perform a single localization on $(p - A)^2$ using $\eta$ and $\theta$, which follows from the standard IMS formula for Schr\"{o}dinger operators (with magnetic fields),
\begin{gather}
(p - A)^2 = \eta_N(p - A)^2\eta_N + \theta_N(p - A)^2\theta_N - |\nabla\eta_N|^2 - |\nabla\theta_N|^2.
\end{gather}
Then we get, by noticing
\begin{gather}
\max(|\nabla\eta_N|, |\nabla\theta_N|) \leq \frac{\pi}{2bl^{N + 1}},
\end{gather}
that
\begin{gather}
(p - A)^2 \geq \eta_N(p - A)^2\eta_N + \theta_N(p - A)^2\theta_N - \frac{\pi^2}{2b^2l^{2(N + 1)}}.
\end{gather}
We will call this error term, $-\pi^2/(2b^2l^{2(N + 1)})$, $-C_{N + 1}^2$. Furthermore,
\begin{align}
\left[\sigma\cdot(p - A)\right]^2 & = (p - A)^2 - \sigma\cdot B\nonumber\\
& \geq \eta_N\left[(p - A)^2 - \sigma\cdot B\right]\eta_N + \theta_N\left[(p - A)^2 - \sigma\cdot B\right]\theta_N - C_{N + 1}^2.
\end{align}
where $B = \nabla\times A$.

From here we obtain the following first localization bound on $|\mathcal{P}_A|^{2s}$, by means of the operator monotonicity of $x \mapsto x^s$ ($0 < s \leq 1$) and the pull-out formula \eqref{equation.pull.out.formula},
\begin{align}
|\mathcal{P}_A|^{2s} & = (\mathcal{P}_A^2)^s \geq (\mathcal{P}_A^2 + C_{N + 1}^2)^s - C_{N + 1}^{2s} \geq \left(\eta_N\left[\sigma\cdot(p - A)\right]^2\eta_N + \theta_N\left[\sigma\cdot(p - A)\right]^2\theta_N\right)^s - C_{N + 1}^{2s}\nonumber\\
& \geq \eta_N|\mathcal{P}_A|^{2s}\eta_N + \theta_N|\mathcal{P}_A|^{2s}\theta_N - C_{N + 1}^{2s}\nonumber\\
& = \eta_N\left(|\mathcal{P}_A|^{2s} - C_{N + 1}^{2s}\right)\eta_N + \theta_N\left(|\mathcal{P}_A|^{2s} - C_{N + 1}^{2s}\right)\theta_N.
\end{align}
We iterate this bound once more in the first parenthesis, thereby obtaining
\begin{align}
& \eta_N\left(|\mathcal{P}_A|^{2s} - C_{N + 1}^{2s}\right)\eta_N\nonumber\\
\geq & \, \eta_N\left[\eta_{N - 1}\left(|\mathcal{P}_{A}|^{2s} - C_{N}^{2s}\right)\eta_{N - 1} + \theta_{N - 1}\left(|\mathcal{P}_A|^{2s} - C_{N}^{2s}\right)\theta_{N - 1} - C_{N + 1}^{2s}\right]\eta_N\nonumber\\
= & \, \eta_N\left[\eta_{N - 1}\left(|\mathcal{P}_A|^{2s} - C_{N}^{2s} - C_{N + 1}^{2s}\right)\eta_{N - 1} + \theta_{N - 1}\left(|\mathcal{P}_A|^{2s} - C_N^{2s} - C_{N + 1}^{2s}\right)\theta_{N - 1}\right]\eta_N.
\end{align}
In this way we find
\begin{align}
|\mathcal{P}_A|^{2s} \geq \, & \eta_{N - 1}\left(|\mathcal{P}_A|^{2s} - C_N^{2s} - C_{N + 1}^{2s}\right)\eta_{N - 1} + \varphi_N\left(|\mathcal{P}_A|^{2s} - C_N^{2s} - C_{N + 1}^{2s}\right)\varphi_N\nonumber\\
& + \theta_N\left(|\mathcal{P}_A|^{2s} - C_{N + 1}^{2s}\right)\theta_N,
\end{align}
where we used the relations $\eta_N\eta_{N - 1} = \eta_{N - 1}$ and $\eta_N\theta_{N - 1} = \varphi_N$. Therefore, by continuing the iteration we eventually find that
\begin{gather}
|\mathcal{P}_A|^{2s} \geq \sum_{n = 0}^N \varphi_n\left(|\mathcal{P}_A|^{2s} - \sum_{m = n}^{N + 1}C_m^{2s}\right)\varphi_n + \theta_N\left(|\mathcal{P}_A|^{2s} - C_{N + 1}^{2s}\right)\theta_N.
\end{gather}
We furthermore notice that
\begin{gather}
\sum_{m = n}^{N + 1}C_m^{2s} \leq \sum_{m = n}^{\infty}C_m^{2s} = \frac{\pi^{2s}}{2^{s}b^{2s}}\sum_{m = n}^{\infty}l^{-2sm} = \frac{\pi^{2s}l^{-2sn}}{2^{s}b^{2s}(l^{2s} - 1)} \equiv D_n^{2s},
\end{gather}
and then conclude that
\begin{gather}
|\mathcal{P}_A|^{2s} \geq \sum_{n = 0}^N \varphi_n\left(|\mathcal{P}_A|^{2s} - D_n^{2s}\right)\varphi_n + \theta_N\left(|\mathcal{P}_{A}|^{2s} - C_{N + 1}^{2s}\right)\theta_N.
\end{gather}
We now note that for a fixed $f$ in $C_0^{\infty}$, $\varphi_n f = \theta_n f = 0$ for sufficiently large $n$, and therefore
\begin{gather}
\left(f, |\mathcal{P}_A|^{2s} f\right) \geq \sum_{n = 0}^{\infty}\left(f, \varphi_n\left(|\mathcal{P}_A|^{2s} - D_n^{2s}\right)\varphi_n f\right),
\end{gather}
which completes the proof of the assertion.
\end{proof}
We compile now two results on the trace of operators of the kind that will be dealt with in the next section.
\begin{lemma}[Estimate on the Trace of a Sum]
\label{lemma.trace.sum}
For a collection of bounded-below self-adjoint operators $T_n$ such that the sum $\sum_{n = 1}^{\infty}T_n$ is bounded-below one has the estimate
\begin{gather}
\text{\textnormal{Tr}}\left(\sum_{n = 1}^{\infty}T_n\right)_{-} \leq \sum_{n = 1}^{\infty}\text{\textnormal{Tr}}\,(T_n)_{-}.
\end{gather}
\end{lemma}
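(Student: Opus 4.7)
The plan is to use the variational characterization of the negative trace,
\[
\text{Tr}(A_-) \;=\; \sup_{P} \text{Tr}(-P A P),
\]
where the supremum runs over finite-rank orthogonal projections $P$ with range in the form domain $Q(A)$ of the bounded-below self-adjoint operator $A$. Equivalently (and sometimes more convenient), one can write this as a supremum over trace-class operators $\gamma$ with $0\le\gamma\le I$. This reduction replaces the infinite-dimensional spectral problem for $T=\sum_n T_n$ by a manipulation on finite-dimensional compressions, where all traces and sums are unambiguous.

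First I would reduce to the non-trivial case $\sum_n \text{Tr}(T_n)_- < \infty$ (otherwise the inequality is vacuous). Fix a finite-rank projection $P$ whose range lies in $Q(T)$. Since $T = \sum_n T_n$ in the sense of quadratic forms, we have $Q(T)\subseteq Q(T_n)$ for every $n$, so each $PT_nP$ is a well-defined finite-rank operator, and the identity
\[
P T P \;=\; \sum_{n=1}^{\infty} P T_n P
\]
holds as an equality of quadratic forms on the finite-dimensional space $PH$. From the elementary operator inequality $T_n \ge -(T_n)_-$, compressing by $P$ yields $-P T_n P \le P (T_n)_- P$, and since $(T_n)_-\ge 0$ with $0\le P\le I$, taking traces gives
\[
-\text{Tr}(P T_n P) \;\le\; \text{Tr}\bigl(P (T_n)_- P\bigr) \;\le\; \text{Tr}(T_n)_- .
\]
Summing over $n$ (the bound $\sum_n \text{Tr}(T_n)_-<\infty$ legitimizes the interchange of trace and sum on the finite-dimensional range of $P$) one obtains $-\text{Tr}(P T P) \le \sum_n \text{Tr}(T_n)_-$. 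Finally, taking the supremum over $P$ and invoking the variational principle on the left gives the claimed inequality.

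The only real obstacle is technical rather than conceptual: justifying that $PTP = \sum_n PT_nP$ on the range of $P$, which rests on the interpretation of $\sum_n T_n$ as a quadratic-form sum. Once one fixes this interpretation (consistent with how the lemma will be applied, namely to sums such as $\sum_n \varphi_n(|\mathcal{P}_A|^{2s}-D_n^{2s})\varphi_n$ arising from the localization estimate of Theorem \ref{theorem.localization.estimate}), the identity is immediate from testing against the finitely many vectors in the range of $P$, all of which lie in each $Q(T_n)$. Everything else is the textbook variational principle together with a one-line use of $T_n \ge -(T_n)_-$.
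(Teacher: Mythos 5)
Your proof is correct and follows essentially the same route as the paper: both rest on the variational characterization of $\operatorname{Tr} T_-$ as an extremum of $-\operatorname{Tr}(T\rho)$ over density-like operators (you use finite-rank projections, the paper uses $0\le\rho\le 1$), and both then interchange the sum with the extremum. The compression step $-\operatorname{Tr}(PT_nP)\le\operatorname{Tr}(T_n)_-$ in your argument is just the same inequality $-\operatorname{Tr}(T_n\rho)\le\operatorname{Tr}(T_n)_-$ that the paper invokes implicitly via $\min_\rho$.
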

\begin{proof}
This follows immediately from the variational principle for the trace of the negative part of a bounded-below self-adjoint operator $T$, namely
\begin{gather}
\text{Tr}\, T_{-} = -\min_{0 \leq \rho \leq 1}\text{Tr}\,\left(T\rho\right)
\end{gather}
(here the minimum is taken over all bounded self-adjoint operators $\rho$ with the property that $0 \leq \rho \leq 1$), since
\begin{align}
&\text{Tr}\,\left(\sum_{n = 1}^{\infty}T_n\right)_{-} = -\min_{0 \leq \rho \leq 1}\text{Tr}\,\left(\sum_{n = 1}^{\infty}T_n \rho\right) = - \min_{0 \leq \rho \leq 1}\sum_{n = 1}^{\infty}\text{Tr}\,\left(T_n \rho\right)\nonumber\\
\leq & \, -\sum_{n = 1}^{\infty}\min_{0 \leq \rho \leq 1}\text{Tr}\,\left(T_n\rho\right) = \sum_{n = 1}^{\infty}\text{Tr}\,(T_n)_{-}.
\end{align}
\end{proof}
\begin{lemma}[Estimate on the Trace of a Sum of Products of Operators]
\label{lemma.trace.sum.products}
For a bounded-below self-adjoint operator $T$ and bounded, positive self-adjoint operators $S_n$ such that $\sum_{n = 1}^{\infty}S_n \leq 1$,
\begin{gather}
\text{\textnormal{Tr}}\left(\sum_{n = 1}^{\infty}S_nT S_n\right)_{-} \leq \sum_{n = 1}^{\infty}\text{\textnormal{Tr}}\left(S_nT S_n\right)_{-} \leq \text{\textnormal{Tr}}\,T_{-}.
\end{gather}
\end{lemma}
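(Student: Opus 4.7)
The first inequality is immediate: set $T_n = S_n T S_n$, which is bounded below (since $T$ is bounded below and $S_n$ is bounded), and apply Lemma \ref{lemma.trace.sum} directly. So the real content lies in the second inequality, $\sum_n \text{Tr}(S_n T S_n)_{-} \leq \text{Tr}\, T_{-}$.

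My plan is to use the same variational characterization invoked in the proof of Lemma \ref{lemma.trace.sum}, namely $\text{Tr}\, X_{-} = -\min_{0 \leq \rho \leq 1} \text{Tr}(X\rho)$, applied to each operator $S_n T S_n$. For each $n$, choose $\rho_n$ with $0 \leq \rho_n \leq 1$ achieving (or approximating) the minimum, so that
\begin{gather}
\text{Tr}(S_n T S_n)_{-} = -\text{Tr}(S_n T S_n \rho_n) = -\text{Tr}(T\, S_n \rho_n S_n),
\end{gather}
where in the last step I used cyclicity of the trace. Summing over $n$ yields
\begin{gather}
\sum_{n = 1}^{\infty} \text{Tr}(S_n T S_n)_{-} = -\text{Tr}(T\,\sigma), \qquad \sigma \equiv \sum_{n=1}^{\infty} S_n \rho_n S_n.
\end{gather}

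The key observation is that $\sigma$ is itself admissible in the variational principle for $T$. Indeed, $\sigma$ is positive and self-adjoint as a sum of such operators, and since $0 \leq \rho_n \leq 1$ we have $S_n \rho_n S_n \leq S_n^2$. Because each $S_n$ is positive with $\sum_n S_n \leq 1$, in particular $0 \leq S_n \leq 1$, from which $S_n^2 \leq S_n$ follows (equivalently, $S_n(1 - S_n) \geq 0$). Therefore
\begin{gather}
0 \leq \sigma \leq \sum_{n = 1}^{\infty} S_n^2 \leq \sum_{n = 1}^{\infty} S_n \leq 1.
\end{gather}
The variational principle applied to $T$ then gives $-\text{Tr}(T\sigma) \leq \text{Tr}\, T_{-}$, which combined with the previous display is precisely the desired bound.

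The main obstacle, if any, is purely technical: the variational principle only makes clean sense when $T_{-}$ and the $(S_n T S_n)_{-}$ are trace class, and one has to justify the interchange of the sum and the trace that underlies writing $\text{Tr}(T\sigma) = \sum_n \text{Tr}(T S_n\rho_n S_n)$. If $\text{Tr}\, T_{-} = +\infty$ the inequality is trivial; otherwise one can argue by monotone convergence on partial sums, noting that each partial sum $\sigma_N = \sum_{n \leq N} S_n \rho_n S_n$ satisfies $0 \leq \sigma_N \leq 1$ and $-\text{Tr}(T\sigma_N) \leq \text{Tr}\, T_{-}$, and then passing to the limit $N \to \infty$. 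Taking the infimum (or using approximating sequences of $\rho_n$) on the left then completes the proof.
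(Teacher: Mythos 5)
Your proposal is correct and follows essentially the same route as the paper: the variational characterization $\mathrm{Tr}\,X_- = -\min_{0\le\rho\le 1}\mathrm{Tr}(X\rho)$, minimizers $\rho_n$, cyclicity of the trace, and the bound $\sum_n S_n\rho_n S_n \le \sum_n S_n^2 \le 1$. The only difference is that you spell out why $S_n^2 \le S_n$ (from $0\le S_n \le 1$) and flag the trace-class and summation-interchange technicalities, steps the paper leaves implicit.
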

\begin{proof}
The first inequality follows immediately from the previous lemma. For the second inequality, we see easily that
\begin{align}
& \sum_{n = 1}^{\infty}\text{Tr}\,\left(S_n T S_n\right)_{-} = -\sum_{n = 1}^{\infty}\min_{0 \leq \rho \leq 1}\left(S_n T S_n\rho\right) = -\sum_{n = 1}^{\infty}\text{Tr}\,\left(S_n T S_n \rho_n\right)\nonumber\\
= \, & -\sum_{n = 1}^{\infty}\text{Tr}\,\left(T S_n \rho_n S_n\right) = -\text{Tr}\,\left[T\sum_{n = 1}^{\infty}S_n\rho_n S_n\right],
\end{align}
for some operators $\rho_n$. Now, since $\rho_n \leq 1$,
\begin{gather}
\sum_{n = 1}^{\infty} S_n\rho_n S_n \leq \sum_{n = 1}^{\infty}S_n^2 \leq 1,
\end{gather}
and therefore,
\begin{gather}
-\text{Tr}\,\left[T\sum_{n = 1}^{\infty}S_n\rho_n S_n\right] \leq -\min_{0 \leq \rho \leq 1}\text{Tr}\,\left(T\rho\right) = \text{Tr}\,T_{-}.
\end{gather}
\end{proof}
\end{section}

\begin{section}{Pauli-Hardy-Lieb-Thirring Inequalities}
\label{section.proof.main.result}
In this section we shall utilize all the estimates proven in the previous two sections to prove the Hardy-Lieb-Thirring inequalities for the fractional Pauli operator $|\mathcal{P}_A|^{2s}$ mentioned in the introduction, Theorems \ref{theorem.main.result} and \ref{theorem.second.main.result}. We shall begin with the proof of Theorem \ref{theorem.second.main.result}.
\begin{proof}[Proof of Theorem \ref{theorem.second.main.result}]
We first notice that the case $s = 1$ follows immediately from the estimate $\mathcal{P}_A^2 = p_A^2 - \sigma\cdot B \geq p_A^2 - |B|$ and the use of inequality \eqref{inequality.ekholm.frank}. We then focus on the case $0 < s < 1$. We start by noticing that, by the localization estimate \eqref{equation.localization.estimate} and Lemma \ref{lemma.trace.sum},
\begin{align}
\text{Tr}\,\left(|\mathcal{P}_A|^{2s} - \frac{C_s}{|x|^{2s}} + V\right)_{-} & \leq \, \text{Tr}\,\left(\sum_{n = 0}^{\infty}\varphi_n\left(|\mathcal{P}_A|^{2s} - D_n^{2s} - \frac{C_s}{|x|^{2s}} + V\right)\varphi_n\right)_{-}\nonumber\\
& \leq \, \text{Tr}\,\left(\varphi_0\left[|\mathcal{P}_A|^{2s} - \left(D_0^{2s} + \frac{C_s}{|x|^{2s}} - V\right)\right]\varphi_0\right)_{-}\nonumber\\
& \quad + \sum_{n = 1}^{\infty}\text{Tr}\,\left(\varphi_n\left[|\mathcal{P}_A|^{2s} - \left(D_n^{2s} + \frac{C_s}{|x|^{2s}} - V\right)\right]\varphi_n\right)_{-}.\label{phlt.first.splitting}
\end{align}
We may rewrite the series appearing in \eqref{phlt.first.splitting} as
\begin{gather}
\sum_{n = 1}^{\infty}\text{Tr}\,\left(\varphi_n\left[|\mathcal{P}_A|^{2s} - \chi_n\left(D_n^{2s} + \frac{C_s}{|x|^{2s}} - V\right)\right]\varphi_n\right)_{-},\label{phlt.series}
\end{gather}
where $\chi_n$ is the indicator function of the support of $\varphi_n$. Furthermore, we notice that for $n \geq 1$, since $|x| \leq bl^{n + 2}/(l - 1)$ in the support of $\varphi_n$,
\begin{gather}
\chi_n D_n^{2s} = \frac{\pi^{2s}}{2^s b^{2s}(l^{2s} - 1)}\frac{\chi_n}{l^{2sn}} \leq \frac{\pi^{2s}l^{4s}}{2^s(l^{2s} - 1)(l - 1)^{2s}}\frac{\chi_n}{|x|^{2s}} \equiv (E_s - C_s)\frac{\chi_n}{|x|^{2s}},
\end{gather}
which allows us to bound the series as
\begin{align}
& \sum_{n = 1}^{\infty}\text{Tr}\,\left(\varphi_n\left[|\mathcal{P}_A|^{2s} - \chi_n\left(\frac{E_s}{|x|^{2s}} - V\right)\right]\varphi_n\right)_{-}\nonumber\\
 = \, & \sum_{n = 1}^{\infty}\text{Tr}\,\left(\varphi_n\left[|\mathcal{P}_A|^{2s} - \omega\left(\frac{E_s}{|x|^{2s}} - V\right)\right]\varphi_n\right)_{-},
\end{align}
with $\omega$ as the indicator function of the region $|x| \geq b$. Then, by Lemma \ref{lemma.trace.sum.products}, this is further bounded above by
\begin{align}
\text{Tr}\,\left[|\mathcal{P}_A|^{2s} - \omega\left(\frac{E_s}{|x|^{2s}} - V\right)\right]_{-} \leq \text{Tr}\,\left[|\mathcal{P}_A|^{2s} - \omega\left(\frac{E_s}{|x|^{2s}} - V\right)_{+}\right]_{-}.
\label{phlt.second.term}
\end{align}
At this point the BKS inequality, Equation \eqref{inequality.BKS}, will prove crucial, since it will allow us to bound the right side of \eqref{phlt.second.term} from above by something manageable without introducing any extranous terms into the trace in question,
\begin{align}\label{eq:LT-afterBKS}
\text{Tr}\,\left[|\mathcal{P}_A|^2 - \omega\left(\frac{E_s}{|x|^{2s}} - V\right)_{+}^{1/s}\right]_{-}^s \leq \, & \text{Tr}\,\left[p_A^2 - |B| - \omega\left(\frac{E_s}{|x|^{2s}} - V\right)_{+}^{1/s}\right]_{-}^s\nonumber\\
\leq & \, L_s\int_{\mathbb{R}^3}\left[|B| + \omega\left(\frac{E_s}{|x|^{2s}} - V\right)_{+}^{1/s}\right]^{s + 3/2}\,dx,
\end{align}
where in the last line we used the Lieb-Thirring inequality \eqref{equation.lieb.thirring} (with magnetic fields).

We now focus on the first term in \eqref{phlt.first.splitting}, namely
\begin{gather}
\text{Tr}\,\left(\varphi_0\left[|\mathcal{P}_A|^{2s} - \left(D_0^{2s} + \frac{C_s}{|x|^{2s}} - V\right)\right]\varphi_0\right)_{-} = \text{Tr}\,\left(\varphi_0\left[|\mathcal{P}_A|^{2s} - \frac{C_s}{|x|^{2s}} - \chi_0\left(D_0^{2s} - V\right)\right]\varphi_0\right)_{-}.
\label{phlt.trace.near.origin}
\end{gather}
By using Theorem \ref{theorem.quadratic.form} in \eqref{phlt.trace.near.origin} we obtain
\begin{align}
& \text{Tr}\,\left(\varphi_0\left[|\mathcal{P}_A|^{2s} - \frac{C_s}{|x|^{2s}} - \chi_0\left(D_0^{2s} - V\right)\right]\varphi_0\right)_{-}\nonumber\\
\leq & \, \lambda\,\text{Tr}\,\left(\varphi_0\left[|p|^{2s} - \frac{C_s}{|x|^{2s}} - \frac{\chi_0}{\lambda}\left(\mathcal{L}\int_{\mathbb{R}^3}|B|^{s + 3/2}\,dx + D_0^{2s} - V\right)\right]\varphi_0\right)_{-}\nonumber\\
\leq & \, \lambda\,\text{Tr}\,\left(|p|^{2s} - \frac{C_s}{|x|^{2s}} - \frac{\chi_0}{\lambda}\left(\mathcal{L}\int_{\mathbb{R}^3}|B|^{s + 3/2}\,dx + D_0^{2s} - V\right)\right)_{-}\nonumber\\
\leq & \, \lambda^{-3/(2s)}H_s\int_{\text{supp}\,\varphi_0}\left(\mathcal{L}\int_{\mathbb{R}^3}|B|^{s + 3/2}\,dx + D_0^{2s} - V\right)_{+}^{1 + 3/(2s)}\,dx,
\end{align}
where $H_s$ is the constant in the Hardy-Lieb-Thirring inequality for the operator $|p|^{2s}$ (see Equation \eqref{equation.HLT.inequality}). $\mathcal{L}$ depends on $\lambda$ and $s$ (see the statement of Theorem \ref{theorem.quadratic.form}). By assembling all the pieces together we then obtain,
\begin{align}
\text{Tr}\,\left(|\mathcal{P}_A|^{2s} - \frac{C_s}{|x|^{2s}} - V\right)_{-} \leq & \, L_s\int_{\mathbb{R}^3}\left[|B| + \omega\left(\frac{E_s}{|x|^{2s}} - V\right)_{+}^{1/s}\right]^{s + 3/2}\,dx\nonumber\\
& + \, \sigma^{-3/(2s)}H_s\int_{\text{supp}\,\varphi_0}\left(\mathcal{L}\int_{\mathbb{R}^3}|B|^{s + 3/2}\,dx + D_0^{2s} - V\right)_{+}^{1 + 3/(2s)}\,dx.
\label{phlt.inequality.pre.final}
\end{align}
By using now the bounds $(a + b)_+ \leq a_+ + b_+$ and $(a + b)^{x} \leq 2^{x - 1}(a^x + b^x)$ ($x \geq 1$; $a, b \geq 0$) we obtain that the first term in \eqref{phlt.inequality.pre.final} is bounded above by
\begin{align}
C_1\int_{\mathbb{R}^3}|B|^{s + 3/2}\,dx + C_2 b^{-2s} + C_3\int_{\mathbb{R}^3}V_-^{1 + 3/(2s)}\,dx,
\end{align}
for some constants $C_1, C_2$ and $C_3$ (which depend exclusively on $s$ and the base $l$). For now $b$ will be left arbitrary -- it will be optimized at the end. In complete analogy, we find that the second term in \eqref{phlt.inequality.pre.final} is bounded above by
\begin{gather}
C_4 b^3\left(\int_{\mathbb{R}^3}|B|^{s + 3/2}\,dx\right)^{1 + 3/(2s)} + C_5 b^{-2s} + C_6\int_{\mathbb{R}^3}V_-^{1 + 3/(2s)}\,dx,
\end{gather}
where again $C_4, C_5$ and $C_6$ are constants depending only on $s$ and $l$ (in addition to tunable parameters). After grouping terms together and optimizing the length scale $b$, yielding a number proportional to $\left(\int |B|^{s + 3/2}\right)^{-1/(2s)}$, we find the bound
\begin{gather}
C_7\int_{\mathbb{R}^3}|B|^{s + 3/2}\,dx + C_8\int_{\mathbb{R}^3}V_-^{1 + 3/(2s)}\,dx,
\end{gather}
as claimed.
\end{proof}
\begin{proof}[Proof of Theorem \ref{theorem.main.result}]
The case $1/2 \leq s <1$ follows by modifying the proof of Theorem \ref{theorem.second.main.result} in the following manner: on the left side of Equation \eqref{phlt.second.term} we apply Theorem \ref{theorem.pauli.lieb.thirring.power.2} directly, obtaining
\begin{align}
\text{\textnormal{Tr}}\left[|\mathcal{P}_A|^{2s} - \omega\left(\frac{E_s}{|x|^{2s}} - V\right)\right]_{-} \leq & \, F_1\int_{|x| > b}\left(\frac{E_s}{|x|^{2s}} - V\right)_+^{1 + 3/(2s)}\,dx\nonumber\\
& \, + F_2\left(\int |B|^2\,dx\right)^{3/4}\left(\int_{|x| > b}\left(\frac{E_s}{|x|^{2s}} - V\right)_+^4\,dx\right)^{1/4},
\label{inequality.PLT.power.2.first.used.main.result}
\end{align}
where $F_1$ and $F_2$ are constants. We furthermore bound the right side of \eqref{inequality.PLT.power.2.first.used.main.result} as we did with the first term in \eqref{phlt.inequality.pre.final}, obtaining
\begin{gather}
F_3b^{-2s} + F_4\int_{\mathbb{R}^3}V_{-}^{1 + 3/(2s)}\,dx + F_5\left(\int_{\mathbb{R}^3}|B|^2\,dx\right)^{3/4}b^{-2s + 3/4} + F_6\left(\int_{\mathbb{R}^3}|B|^2\,dx\right)^{3/4}\left(\int_{\mathbb{R}^3}V_{-}^4\,dx\right)^{1/4}.
\label{inequality.proof.phlt.power.2.estimate.outer.part}
\end{gather}
We then use Estimate \eqref{inequality.quadratic.forms.second}, which has a power of 2 for the magnetic field, in Equation \eqref{phlt.trace.near.origin}. We obtain in this case
\begin{align}
& \text{\textnormal{Tr}}\left(\varphi_0\left[|\mathcal{P}_A|^{2s} - \frac{C_s}{|x|^{2s}} - \chi_0\left(D_0^{2s} - V\right)\right]\varphi_0\right)_-\nonumber\\
\leq & \, \lambda\text{\textnormal{Tr}}\left(\varphi_0\left[\left(|p|^{2s} - \frac{C_s}{|x|^{2s}}\right) - \frac{\chi_0}{\lambda}\left(\mathcal{M}\left(\int_{\mathbb{R}^3}|B|^2\,dx\right)^{2s} + D_0^{2s} - V\right)\right]\varphi_0\right)_-\nonumber\\
\leq & \, \lambda\text{\textnormal{Tr}}\left[\left(|p|^{2s} - \frac{C_s}{|x|^{2s}}\right) - \frac{\chi_0}{\lambda}\left(\mathcal{M}\left(\int_{\mathbb{R}^3}|B|^2\,dx\right)^{2s} + D_0^{2s} - V\right)\right]_-\nonumber\\
\leq & \, \lambda^{-3/(2s)}H_s\int_{\text{supp}{\varphi_0}}\left[\mathcal{M}\left(\int_{\mathbb{R}^3}|B|^2\,dx\right)^{2s} + D_0^{2s} - V\right]_+^{1 + 3/(2s)}\,dx,
\end{align}
and, bounding this just as we did with the second term on the right side of inequality \eqref{phlt.inequality.pre.final}, we get
\begin{gather}
F_7b^3\left(\int_{\mathbb{R}^3}|B|^2\,dx\right)^{2s + 3} + F_8 b^{-2s} + F_9\int_{\mathbb{R}^3}V_-^{1 + 3/(2s)}\,dx.
\label{inequality.proof.phlt.power.2.estimate.inner.part}
\end{gather}
Upon optimizing the third term in \eqref{inequality.proof.phlt.power.2.estimate.outer.part} plus the first term in \eqref{inequality.proof.phlt.power.2.estimate.inner.part}, yielding a value for $b$ proportional to $\left(\int |B|^2\right)^{-1}$, we get finally
\begin{align}
\text{Tr}\,\left(|\sigma\cdot p_A|^{2s} - \frac{C_s}{|x|^{2s}} + V\right)_- \leq & \, F_{10}\int_{\mathbb{R}^3}V_-^{1 + 3/(2s)}\,dx + F_{11}\left(\int_{\mathbb{R}^3}|B|^2\,dx\right)^{2s}\nonumber\\
& \, + F_{12}\left(\int_{\mathbb{R}^3}|B|^2\,dx\right)^{3/4}\left(\int_{\mathbb{R}^3}V_-^4\,dx\right)^{1/4}.
\end{align}
as claimed in the statement of Theorem \ref{theorem.main.result} for $1/2 \leq s < 1$. In order to prove Theorem \ref{theorem.main.result} with the power $s = 1$, we use Estimate \eqref{inequality.quadratic.form.s.1} with $r = 2$, in lieu of Estimate \eqref{inequality.quadratic.forms.second}, and then use the Hardy-Lieb-Thirring inequality \eqref{equation.HLT.inequality} with magnetic fields. The proof is otherwise nearly identical to the one for $1/2 \leq s < 1$.
\end{proof}
\end{section}

\begin{appendix}
\begin{section}{Appendix}
\begin{subsection}{The Functions Appearing in the Theorems}
In this first part of the appendix we shall provide the functions that were not made explicit in Theorems~\ref{theorem.estimate.I} and \ref{theorem.estimate.II}. We refrained from writing them out explicitly at the outset, as their inclusion would make the statements of the theorems rather cumbersome and obscure the main content of theirs, which is that the form difference of certain operators can be estimated in a very precise and useful way.
\begin{subsubsection}{Functions in the First Estimate}
In what follows we will encounter a total of 7 variables, $s, u, r, \beta, \delta, \varepsilon, \gamma$, whose domain is $0 < s < 1, 0 \leq u \leq 1, r > 3/2, 3(1 - u) < 2r(1 - s), \beta > 0, 0 < \delta < 1, \varepsilon > 0, 0 < \gamma < 1$. In the first estimate there were two main functions, namely $\Theta$ and $\Omega$. The first one is defined as
\begin{gather}
\Theta(s, r) \equiv \inf_{\beta, \delta}\left[\frac{C_s}{s}\beta^s + I(s, 0, r, \delta)\beta^{s + 3/2r - 1} + J(s, 0, r, \delta)\beta^{s + 3/4r - 1}\right],
\end{gather}
where
\begin{align}
C_{s} \equiv & \, \left(\int_0^{\infty}\frac{\,da}{a^{1 - s}(1 + a)}\right)^{-1} = \frac{\sin(\pi s)}{\pi},\nonumber\\
I(s, u, r, \delta) \equiv & \, C_s D_u^{3/r}S^{3/r}2^{-1}\left[2r(1 - s) - 3(1 - u)\right]r^{-1}(1 - \delta)^{-3/4r},\nonumber\\
J(s, u, r, \delta) \equiv & \, C_s D_u^{3/2r}S^{3/2r}2^{-2}\omega(\delta, r)^{3/4r}\left[4r(1 - s) - 3(1 - u)\right]r^{-1},\nonumber\\
D_u \equiv & \, \max_{x \in [0, \infty)}\frac{x^{1 - u}}{x^2 + 1} = 2^{-1}(1 - u)^{(1 - u)/2}(1 + u)^{(1 + u)/2},
\end{align}
and $S$ and $\omega$ are as in the Pauli-Sobolev Inequality, Lemma \ref{lemma.pauli.sobolev}, whose definitions we repeat here, for completeness,
\begin{align}
S & \equiv \frac{1}{\sqrt{3}}\left(\frac{2}{\pi}\right)^{2/3},\\
\omega(\delta, r) & \equiv \frac{S^{4r/(2r - 3)}3^{3/(2r - 3)}(2r - 3)}{(1 - \varepsilon)(2r)^{2r/(2r - 3)}\varepsilon^{3/(2r - 3)}}.
\end{align}
In addition, $\Omega$ is explicitly given by
\begin{gather}
\Omega(s, u, r, \varepsilon) \equiv \inf_{\beta, \delta, \gamma}\left[\frac{2^{s/u - 1}\Upsilon^{s/u}(1 - \gamma)^{s/u}\varepsilon}{2^{s/u - 1}\left(1 + \gamma\right)^{s/u} + (1 - \gamma)^{s/u}} + \Lambda\right],
\end{gather}
which is defined through the following additional functions
\begin{align}
\Lambda(s, u, r, \beta, \delta, \varepsilon, \gamma) \equiv & \, \frac{C_s}{s}\beta^s + \frac{2rs - 3u}{2rs}\left(\frac{3u}{8rs\eta}\right)^{3u/2rs}I(s, u, r, \delta)^{2rs/(2rs - 3u)}\beta^{\left[-2rs(1 - s) + 3s(1 - u)\right]/(2rs - 3u)}\nonumber\\
& \, + \frac{4rs - 3u}{4rs}\left(\frac{3u}{2rs\eta}\right)^{3u/(4rs - 3u)}J(s, u, r, \delta)^{4rs/(4rs - 3u)}\beta^{\left[3s(1 - u) - 4rs(1 - s)\right]/(4rs - 3u)},
\end{align}
with
\begin{gather}
\eta(s, u, \varepsilon, \gamma) \equiv \frac{\varepsilon(1 - \gamma)^{s/u}}{2^{s/u - 1}(1 + \gamma)^{s/u} + (1 - \gamma)^{s/u}},
\end{gather}
and
\begin{align}
\Upsilon(s, r, \beta, \delta, \gamma) \equiv & \, \frac{C_s}{s}\beta^s + \frac{2r - 3}{2r}\left(\frac{3}{8r\gamma}\right)^{3/2r}I(s, s, r, \delta)^{2r/(2r - 3)}\beta^{-(1 - s)}\nonumber\\
& \, + \frac{4r - 3}{4r}\left(\frac{3}{2r\gamma}\right)^{3/(4r - 3)}J(s, s, r, \delta)^{4r/(4r - 3)}\beta^{-(1 - s)}.
\end{align}
\end{subsubsection}
\begin{subsubsection}{Functions in the Second Estimate}
In the functions in the second estimate a total of 5 variables appear, $s, u, r, \varepsilon, \delta$, with domain $0 < s < 1, 0 \leq u \leq 1, 3/2 < r < 3, 3(1 - u) < 2r(1 - s), \varepsilon > 0, 0 < \delta < 1$. The main functions are as follows:
\begin{align}
\mathcal{I}(s, r) & \equiv C_s(2r - 3)S^{2s(3 - r)/(2r - 3)}\left[r(1 - s)s\right]^{-1}2^{-(6s + 2r - 3)/(2r - 3)}N_r^{2rs/(2r - 3)},\\
\mathcal{J}(s, u, r, \varepsilon) & \equiv \inf_{\delta}\left\{\mathcal{A}_{s, u, r}\varepsilon^{-6u/(2r - 3)}\left[1 + 2^{s/u - 1}\delta^{s/u}(1 - \delta)^{-s/u}\right]^{6u/(2r - 3)}\right.\nonumber\\
& \qquad\qquad\left. + \mathcal{A}_{u, u, r}^{s/u} 2^{s/u - 1}\delta^{-6s/(2r - 3)}\varepsilon\left[(1 - \delta)^{s/u} + 2^{s/u - 1}\delta^{s/u}\right]^{-1}\right\},
\end{align}
where $N_r$ is the function appearing in the generalized Sobolev inequality $\|A\|_{3r/(3 - r)} \leq N_r\|\nabla\times A\|_r$ (see Theorem \ref{theorem.sobolev.inequality.rotors}). In $\mathcal{J}$ an additional function $\mathcal{A}$ appears; it is given explicitly by
\begin{align}
\mathcal{A}_{s, u, r} & \equiv C_s^{[2r - 3(1 - u)]/(2r - 3)}\left[2r(1 - s) - 3(1 - u)\right]^{-[2r - 3(1 - u)]/(2r - 3)}\left[2r - 3(1 - u)\right]^{-3u/(2r - 3)}\nonumber\\
& \quad\, \times s^{-[2r - 3(1 - u)]/(2r - 3)}S^{2s(3 - r)/(2r - 3)}D_u^{6s/(2r - 3)}N_r^{2rs/(2r - 3)}(2r - 3)\nonumber\\
& \quad\, \times\left[r(3 - r)u^2\right]^{3u/(2r - 3)}2^{(6u + 2r - 3)/(2r - 3)}.
\end{align}
\end{subsubsection}
\end{subsection}
\begin{subsection}{General Sobolev Inequality for Rotors}
In this subsection of the appendix we will provide the Sobolev inequality for rotors that was used in the proof of the second estimate in Section \ref{section.main.estimates}, Theorem \ref{theorem.estimate.II}.
\begin{theorem}[Sobolev Inequality for Rotors]
\label{theorem.sobolev.inequality.rotors}
For any $r > 3/2$ and $A \in C_0^{\infty}$ with the property that $\nabla\cdot A = 0$,
\begin{gather}
\|A\|_r \leq N_r\|\nabla\times A\|_{3r/(3 + r)},
\end{gather}
for some constant $N_r$.
\end{theorem}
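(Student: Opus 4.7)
The plan is to invert the curl-divergence system and then reduce the bound to the Hardy--Littlewood--Sobolev inequality. Write $B \equiv \nabla \times A$. The standard vector identity gives
\begin{gather*}
\nabla \times (\nabla \times A) = \nabla(\nabla\cdot A) - \Delta A = -\Delta A,
\end{gather*}
where the hypothesis $\nabla\cdot A = 0$ was used in the last equality. Hence $-\Delta A = \nabla \times B$, so that $A = (-\Delta)^{-1}(\nabla \times B)$. Using the three-dimensional Green function for $-\Delta$ and integrating by parts (legitimate since $A \in C_0^{\infty}$, so $B$ is also compactly supported) yields the explicit representation
\begin{gather*}
A(x) = \frac{1}{4\pi}\int_{\mathbb{R}^3} \frac{(x - y)}{|x - y|^3} \times B(y)\,dy,
\end{gather*}
so that the kernel has pointwise order $|x-y|^{-2}$.

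Next, I would estimate $A$ pointwise by the Riesz potential of order $1$,
\begin{gather*}
|A(x)| \leq \frac{1}{4\pi}\int_{\mathbb{R}^3} \frac{|B(y)|}{|x - y|^2}\,dy \equiv C\,(I_1|B|)(x),
\end{gather*}
and then apply the Hardy--Littlewood--Sobolev inequality in $\mathbb{R}^3$, which asserts that $\|I_1 f\|_r \leq C_{r}\|f\|_{p}$ whenever $1 < p < r < \infty$ and $1/p - 1/r = 1/3$. Solving $1/p = 1/r + 1/3$ gives $p = 3r/(3 + r)$. The restriction $p > 1$ is exactly $r > 3/2$, matching the hypothesis, and $r < \infty$ is already implicit. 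Combining the pointwise bound with HLS produces
\begin{gather*}
\|A\|_r \leq C\,\|I_1 |B|\|_r \leq C_r\,\|B\|_{3r/(3+r)},
\end{gather*}
which gives the stated inequality with $N_r$ equal to the HLS constant absorbed together with $1/(4\pi)$.

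There is no real obstacle here: the only subtlety is to justify the integration by parts and the pointwise representation for $A \in C_0^{\infty}$, but that is straightforward because $B$ then has compact support and $A$ decays at infinity as the convolution of $B$ with an integrable-at-infinity kernel times a derivative. The endpoint $r = 3/2$ (i.e.\ $p = 1$) is excluded in the standard HLS inequality, which accounts for the sharp condition $r > 3/2$ in the statement. If one wished to keep track of $N_r$ explicitly, one could use the sharp HLS constant of Lieb, but this is not needed for the application in Theorem~\ref{theorem.estimate.II}.
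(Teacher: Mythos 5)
Your proposal is correct and follows essentially the same route as the paper: both establish the Biot--Savart representation $A(x) = \frac{1}{4\pi}\int (x-y)|x-y|^{-3}\times B(y)\,dy$ (you by inverting $-\Delta A = \nabla\times B$, the paper by verifying the formula directly via the vector identity $\nabla\times(\nabla\times F) = \nabla(\nabla\cdot F) - \Delta F$), pass to the pointwise bound $|A(x)| \le \frac{1}{4\pi}\int |B(y)|\,|x-y|^{-2}\,dy$, and conclude with the Riesz-potential estimate. The only cosmetic difference is that you invoke the Hardy--Littlewood--Sobolev inequality whereas the paper cites weak Young's inequality with $|y|^{-2}\in L^{3/2}_w$; for this power kernel in dimension three these are the same statement.
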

\begin{proof}
Let $A$ be in $C_0^{\infty}$ with the property that $\nabla\cdot A = 0$. Let us first establish the formula
\begin{gather}
A(x) = \frac{1}{4\pi}\int_{\mathbb{R}^3}\frac{x - y}{|x - y|^3}\times \left(\nabla \times A(y)\right)\,dy.
\label{equation.definition.A}
\end{gather}
By means of the formula $\nabla\times(f a) = \nabla f \times a$ ($f$ a scalar function of $x$, $a$ a constant vector), we find that the integral on the right side of \eqref{equation.definition.A} is equal to
\begin{align}
-\nabla\times\int_{\mathbb{R}^3}\nabla\times A(y)|x - y|^{-1}\,dy = & -\nabla\times\int_{\mathbb{R}^3}\nabla\times A(x - y)|y|^{-1}\,dy\nonumber\\
= & -\nabla\times\left(\nabla\times\int_{\mathbb{R}^3}A(x - y)|y|^{-1}\,dy\right).
\end{align}
And then, since $\nabla\times\left(\nabla\times F\right) = \nabla\left(\nabla\cdot F\right) - \Delta F$,
\begin{align}
&\frac{1}{4\pi}\int_{\mathbb{R}^3}\frac{x - y}{|x - y|^3}\times \left(\nabla \times A(y)\right)\,dy\nonumber\\
= \, & \Delta\left((4\pi)^{-1}\int_{\mathbb{R}^3}A(x - y)|y|^{-1}\,dy\right) - \frac{1}{4\pi}\int_{\mathbb{R}^3}\nabla\left(\nabla\cdot A(x - y)\right)|y|^{-1}\,dy\nonumber\\
= \, & \Delta\left(\Delta^{-1}A\right)(x) = A(x),
\end{align}
which proves the equality. Now, note that we have the pointwise estimate
\begin{gather}
|A(x)| \leq \frac{1}{4\pi}\int_{\mathbb{R}^3}\frac{\left|\nabla\times A(y)\right|}{|x - y|^2}\,dy
\end{gather}
and that $|y|^{-2}$ is in $L^{3/2}_w$. We then have, by the weak Young's inequality, for any $1 < p < 3$ and $r > 3/2$ such that $1/p = 1/r + 1/3$,
\begin{gather}
\left\Vert A\right\Vert_r \leq (4\pi)^{-1}\left\Vert\nabla\times A(y) *|y|^{-2}\right\Vert_r \leq C_p\|\nabla\times A\|_p,
\end{gather}
which is a general Sobolev inequality for rotors.
\end{proof}
\end{subsection}
\end{section}
\end{appendix}

\end{document}